\documentclass{article}
\date{}
\usepackage{amsfonts}
\usepackage{amsmath}
\usepackage{graphicx, hyperref}

\usepackage[a4paper]{geometry}
\geometry{top=1.0in, bottom=1.0in, left=1.5in, right=1.0in}

\usepackage{graphicx}
\usepackage{amsmath}
\usepackage{amssymb}
\usepackage{amsfonts}
\usepackage{amsopn,ntheorem}
\usepackage{tikz}

\usetikzlibrary{fit}					% fitting shapes to coordinates
\usetikzlibrary{backgrounds}	
\newcommand{\tr}{\text{\rm{tr}}}

\newtheorem{theorem}{Theorem}%[section]
\newtheorem{lemma}{Lemma}

\newtheorem{definition}{Definition}

\newtheorem{remark}{Remark}
\newtheorem{problem}{Problem}

\newenvironment{proof}{{\noindent{\bf Proof:}}}{$\hfill\Box$}

%\DeclareMathAlphabet{\mathcalligra}{T1}{calligra}{m}{n}
%\DeclareMathAlphabet{\mathpzc}{OT1}{pzc}{m}{it}

\def\n{\nonumber\\}
%--------------------------------------
\def\be{\begin{equation}}
\def\ee{\end{equation}}
\def\bes{\begin{equation*}}
\def\ees{\end{equation*}}
%--------------------------------------\oplus\bullet\cdot
\def\ma{{\mathcal A}}
\def\mb{{\mathcal B}}

\def\mR{{\mathcal R}}
\def\mS{{\mathcal S}}

\def\mx{{\mathcal X}}
\def\my{{\mathcal Y}}

\def\bF{\mathbf{F}}
\def\bE{\mathbf{E}}
\def\bJ{\mathbf{J}}

\def\bQ{\mathbf{Q}}

\def\tv#1{\left\|#1\right\|_1}

%\t{i}
%\setlength{\topmargin}{0cm}
%\setlength{\textheight}{21cm}
%\setlength{\textwidth}{17cm}
%\setlength{\parskip}{0.0cm}
%\setlength{\parindent}{0.5cm}
%\setlength{\oddsidemargin}{-0.4cm}
%\setlength{\evensidemargin}{-0.1cm}

%\renewcommand{\labelitemi}{\tikz  \shade[ball color=red] (0,0) circle (.65ex);}

\newcommand{\ket}[1]{|#1\rangle}
\newcommand{\bra}[1]{\langle#1|}

\def\ConvHull{\text{\rm{ConvHull}}}

\newcommand{\Bob}{\text{Bob}}

\begin{document}
\title{On Dimension Bounds for Auxiliary Quantum Systems}

\author{Salman Beigi\\ {\it \small School of Mathematics,} {\it \small Institute for Research in Fundamental Sciences (IPM),} {\it \small Tehran, Iran}
\and Amin Gohari \\ {\it \small Department of Electrical Engineering,} {\it \small Sharif University of Technology,} {\it \small Tehran, Iran}\\
{\it \small School of Computer Science,} {\it \small Institute for Research in Fundamental Sciences (IPM),} {\it \small Tehran, Iran}}

\maketitle

\begin{abstract}
Expressions of several capacity regions in quantum information theory involve an optimization over auxiliary quantum registers. Evaluating such expressions requires bounds on the dimension of the Hilbert space of these auxiliary registers, for which no non-trivial technique is known; we lack  a quantum analog of the Carath\'{e}odory theorem. In this paper, we develop a new non-Carath\'{e}odory-type tool for evaluating expressions involving a single quantum auxiliary register and several classical random variables. As we show, such expressions appear in problems of entanglement-assisted Gray-Wyner and entanglement-assisted channel simulation, where the question of whether entanglement helps in these settings is related to that of evaluating expressions with a single quantum auxiliary register. To evaluate such expressions, we argue that developing a quantum analog of the Carath\'{e}odory theorem requires a better understanding of a notion which we call ``quantum conditioning." We then proceed by proving a few results about quantum conditioning, one of which is that quantum conditioning is strictly richer than the usual classical conditioning.
\end{abstract}

\section{Introduction}
One of the central goals of information theory is to find computable expressions for capacity regions of problems involving the transfer of information. An expression is computable if for every $\epsilon>0$, there is an algorithm that stops in finite time $T_{\epsilon}$ and outputs an approximation of the expression within $\epsilon$. All computable capacity regions that have been found so far turn out to be expressible in a so-called \emph{single-letter form} where a union is taken over a finite set of auxiliary random variables or auxiliary quantum registers.\footnote{Here given a (classical or quantum) density matrix $\rho_{\mathbf{A}_1\dots \mathbf A_k}$ by an auxiliary register we mean and extra subsystem $\mathbf E$ and an extension density matrix $\rho_{\mathbf{A}_1\dots \mathbf A_k \mathbf E}$ whose marginal on $\mathbf{A}_1\dots \mathbf A_k$ is the starting density matrix.}
 So for computability we need a restriction on the dimension or cardinality of these auxiliary registers, but no general tool is known for proving such bounds especially in the quantum case.

   The problem of bounding the dimension of auxiliary quantum registers has arisen in the literature. A single-letter (additive) formula for the entangling capacity of a bipartite unitary is provided in \cite{Bennett93}. This formula involves an optimization over unbounded auxiliary quantum registers, and the question of how large the auxiliary systems need to be in the optimal protocol is still open~\cite{Bennett93}. The problems of squashed entanglement measure \cite{Christandl} and quantum channel capacity assisted with symmetric side channels \cite{Smithetal} provide two other such examples.

To be more precise let us explain the problem of squashed entanglement in more details. Squashed entanglement of a bipartite state $\rho_{\mathbf{AB}}$ is defined by
$$E_{sq}(\rho_{\mathbf{AB}}) =\frac 12\inf_{\rho_{\mathbf{ABF}}}I(\mathbf{A};\mathbf{B}|\mathbf{F}).$$
where the infimum is taken over all extensions $\rho_{\mathbf{ABF}}$ of $\rho_{\mathbf{AB}}$.
To compute squashed entanglement through brute-force search, a dimension bound on auxiliary register $\mathbf{F}$. This is indeed the reason that even proving the faithfulness of squashed entanglement is a hard problem~\cite{BCY}.

Let us consider another example involving one auxiliary quantum register. Given two random variables $X$ and $Y$, consider the region formed by pairs $\big(H(X|\textbf{F}), H(Y|\textbf{F})\big)$ when we take the union over all auxiliary registers $\textbf{F}$.
When $\textbf{F}$ is a quantum register, evaluating this region by numerical brute-force simulation requires a bound on the dimension of $\textbf{F}$. Given such a bound we may compare this region with the classical case where $\textbf{F}$ is taken to be an auxiliary random variable. In particular it is interesting to verify whether all points of this region can be obtained from classical registers $\textbf{F}$ or not.

An arbitrary expression (even in the classical world) may not admit a dimension bound on its auxiliary variables. To prove such bounds, classical information theory provides us with a few tools, mainly the Carath\'{e}odory theorem, but also the perturbation method \cite[Appendix C]{AbbasYoungHan}, \cite{Gohari} and some manipulation techniques as in \cite{Nair}. Not much is known when these techniques fail.  The situation is more murky in the quantum world. It is fair to say that no non-trivial technique for bounding the dimension of auxiliary quantum registers is known, even for the above simple example.

\subsection{Shared entanglement in classical scenarios}
Our main motivation for studying the problem of dimension bounds is to understand the benefit of entanglement is classical communication settings. It is well-known that entanglement does not help in the problem of transmission of classical information over a point-to-point classical channel~\cite{E-assisted}, but is there an information theoretic classical setting in which shared entanglement helps? To avoid confusions let us explain what we mean by an information theoretic classical scenario. By a classical scenario we mean settings with classical inputs, outputs, and communication channels, implying that all variables are classical random variables except perhaps shared entanglement. Moreover, by an information theoretic setting we mean a framework involving average quantities over repeated trials permitting an \emph{asymptotically} vanishing error. From this perspective, for instance, the increase of the \emph{zero-error} capacity of classical channels by entanglement~\cite{Zero1, Zero2} does not fit into our framework. Similarly violation of Bell's inequalities~\cite{Bell} in the presence of entanglement is not an example of our scenarios. Here we explain two examples in which understanding the benefit of shared entanglement is of interest.

Our first example is the Gray-Wyner problem whose goal is to transmit multiple correlated sources to multiple distant
parties~\cite{GrayWyner}. Recently Winter (personal communication, 2012) has found the rate region of the
entanglement-assisted Gray-Wyner problem, yet we do not know whether entanglement helps in this scenario or not. The region found by Winter involves taking a union over several auxiliary quantum systems. If we assume that these auxiliary registers are all classical random variables, we obtain the classical rate region of the Gray-Wyner problem~\cite{GrayWyner}. In Section~\ref{Appendix:GW}, we introduce a related region with a \emph{single} auxiliary quantum system. We will see that if by replacing this single auxiliary quantum register by a classical one we obtain the same region, then entanglement does not help in the Gray-Wyner problem. Indeed if we could solve optimization problems over an auxiliary quantum register then we could compare the answer to the classical case and decide whether shared entanglement helps in the Gray-Wyner problem or not.

Our second example is the problem of simulating bipartite correlations via communication in the presence of shared entanglement. That is, how much communication is required to simulate a given bipartite correlation when the two parties are provided with shared entanglement. While the overall task of generating correlations is classical in nature, shared entanglement is known to help in a non-information theoretic setup; for the case of no communication, Bell's theorem states that there are bipartite correlations that can be generated in the presence of entanglement. But how about Bell's scenarios in an information theoretic setting in which we repeat the Bell experiment many times and allow for  asymptotically vanishing error? One of our main results in this paper is to show that entanglement can still help. By extending the result of \cite{Yassaee} (see Theorem~\ref{thm:q-region} below) we show that the amount of communication required to simulate a bipartite correlation in the presence of entanglement can be expressed as an optimization problem over a quantum auxiliary register. This characterization again fits into the framework of expressions for which we are interested in a dimension bound. Although we do not have a dimension bound here, by an ad hoc indirect approach we show that entanglement does help in Bell's scenarios even in an information theoretic setting (see Appendix~\ref{sec:CHSHExample}).

\subsection{A new tool}\label{sec:introtool}
The few known tools from classical information theory for bounding the dimension of auxiliary registers are not readily applicable to the quantum setting. The reason is that the main tool from the classical theory, namely the Carath\'{e}odory theorem, heavily relies on the fact that for any two random variables $X$ and $C$, the conditional entropy $H(X|C)$ can be written as the convex linear expression $\sum_{c}p(c)H(X|C=c)$; i.e., classical conditioning is a simple convexification. But re-expressing $H(X|\mathbf{F})$ as such a convex combination is invalid when we condition on a quantum register $\textbf{F}$. We believe that any proper dimension bound in the quantum case has to provide insights into ``quantum conditioning," i.e., conditioning on a quantum register. Thus, understanding quantum conditioning is central to any potential use of Carath\'{e}odory theorem in bounding dimensions. 

To understand quantum conditioning we develop a new tool to study optimization problems involving quantum registers. This is the first bounding tool in quantum information theory, and is based on a reformulation of the problem in terms of a max-min expression, followed by using a minimax theorem (see Section \ref{sec:DB}  or the proof of Theorem \ref{thm:QCMain1} for the use of the tool). This tool is inspired by a work on classical broadcast channels \cite[Sec. III.B]{GCA2012}.
Below we briefly explain some implications of this tool.\\

\noindent
\textbf{An optimization problem and the mutual information curve:}
Consider the following two problems:
\begin{problem}
Given an arbitrary distribution $q(x,y,z)$, consider the optimization
\begin{align*}
\sup_{C-X-YZ}I(C;Y)-I(C;Z),
\end{align*}
over all classical random variables $C$; thus we basically take the supremum over all channels $p(c|x)$. This expression shows up in several network information theoretic problems, especially those involving security (e.g., Wiretap channel problem \cite[Theorem 22.1]{AbbasYoungHan}). Here we ask a standard cardinality reduction question: can we find the smallest cardinality bound $d^*$ on the alphabet size of $C$ that universally works for all $q(x,y,z)$?
\end{problem}

\begin{problem}
Suppose we are given sets $\mathcal{X}$ and $\mathcal{C}$. Then given a channel $p(c|x)$, we can consider the function $p(x)\mapsto I(X;C)$, i.e., the curve (or surface) of mutual information versus the input distribution. We know that this curve is concave and its maximum is equal to the channel capacity of $p(c|x)$. Now let us consider all the mutual information curves produced by all channels $p(c|x)$ with output cardinality of $C$ bounded from above by $d$, i.e., with $|\mathcal C|\leq d$. We use $\Gamma(d)$ to denote this set of all $p(x)\mapsto I(X;C)$ curves for channels with output alphabet of size less than or equal to $d$. Let $\ConvHull(\Gamma(d))$ be the convex hull of $\Gamma(d)$, i.e., the set of all curves of the form $p(x)\mapsto \sum_{i}\omega_i I(X;C_i)$ for arbitrary non-negative weights $\omega_i$ that add up to one, and $p(c_i|x)$ with output cardinality bounded from above by $d$. Clearly $$\ConvHull(\Gamma(d))\subseteq \ConvHull(\Gamma(d+1)).$$ If $d=1$, the output of the channel $C$ has to be a constant random variable and $p(x)\mapsto I(X;C)=0$ is a trivial line (hyperplane). Thus $\Gamma(1)=\ConvHull(\Gamma(1))$ contains only a single curve. As we increase $d$, we allow for more complicated channels and as such the behavior of the curve $p(x)\mapsto I(X;C)$ can become more complicated. Let $d^*$ be the smallest integer such that $\ConvHull(\Gamma(d^*))= \ConvHull(\Gamma(d))$ for all $d\geq d^*$. If no such $d^*$ exists, we set it to be infinity.
\end{problem}

Our new tool implies that the answer for $d^*$ to the above two problems is the same (see Section~\ref{sec:DB}). Thus, we can compute $d^*$ for the second question from $d^*$ for the first question. By standard techniques \cite[Appendix C]{AbbasYoungHan}, the cardinality bound of $d^*\leq |\mathcal{X}|$ on the alphabet set of $C$ can be imposed in the first problem. Thus the same $d^*\leq |\mathcal{X}|$ works also for the second example.

Similarly, we can state these two problems in the quantum case:
\begin{problem}
Given an arbitrary $q(x,y,z)$ compute
\begin{align*}
\sup_{\textbf{F}-X-YZ}I(\textbf{F};Y)-I(\textbf{F};Z),
\end{align*}
over all quantum registers $\textbf{F}$, where $\textbf{F}-X-YZ$ represents the Markov chain condition $I(\bF; XZ| X)=0$. 
Can we bound the dimension of $\bF$ in this optimization problem, and if yes, can we find a minimum dimension bound $d^*$ on $\mathbf{F}$ that universally works for all $q(x,y,z)$?
\end{problem}
\begin{problem}
Suppose that we are given a fixed set $\mathcal{X}$. Given a classical-quantum (c-q) channel $X\mapsto \mathbf{F}$, we can consider the curve $p(x)\mapsto I(X;\mathbf{F})$, i.e., the curve of mutual information versus the input distribution. Again we know that this curve is concave and its maximum is equal to the channel capacity. Now let us consider all the mutual information curves produced by all c-q channels  $X\mapsto \mathbf{F}$ with output dimension of $\mathbf{F}$ bounded from above by $d$, and denote it by $\Gamma^q(d)$. We write the convex hull of $\Gamma^q(d)$ as $\ConvHull(\Gamma^q(d))$. Clearly we have 
$$\ConvHull(\Gamma^q(d))\subseteq \ConvHull(\Gamma^q(d+1)).$$
 Let $d^*$ be the smallest integer such that $\ConvHull(\Gamma^q(d))= \ConvHull(\Gamma^q(d^*))$ for all $d\geq d^*$. If no such $d^*$ exists, we set it to be infinity.
\end{problem}
Again the answers to these questions are the same (see Section~\ref{sec:DB}). Thus, to study Problem 3 we can look at Problem 4.

By the same tool we show in Theorem \ref{thm:QCMain1} that there exists a distribution $q(x,y,z)$ such that the supremum over auxiliary quantum registers $\textbf{F}$ in the above optimization problem yields a larger value than taking the maximum of the same expression over classical auxiliary random variables. In other words, there exists a distribution $q(x,y,z)$ such that
\begin{align}\label{eq:ineq-xyz}
\sup_{\textbf{F}-X-YZ \atop
\textbf{F} \text{ quantum}}I(\textbf{F};Y)-I(\textbf{F};Z) >  \max_{C-X-YZ \atop C \text{ classic}}I(C;Y)-I(C;Z).
\end{align}
To find such $q(x,y,z)$ we construct a channel $X\mapsto \textbf{F}$ whose $p(x)\mapsto I(X;\mathbf{F})$ curve does not belong to $\ConvHull(\Gamma(d^*))$, the convex hull of curves of the form $p(x)\mapsto I(X;C)$ for classical $p(c|x)$ channels. Interestingly, our example is based on Kochen-Specker sets, and the fact that shared entanglement increases the one-shot zero-error capacity of classical channels~\cite{Zero1}.

In the above example the size of $X$ is large (as large as the size of the smallest Kochen-Specker set).
We show that when $X$ is binary and the dimension of $\textbf{F}$ is two, then any mutual information curve $p(x)\mapsto I(X;\mathbf{F})$ is equal to the $p(x)\mapsto I(X;C)$ curve of a carefully constructed classical channel $p(c|x)$ (see Theorem \ref{thm:QCMain2} part 1). Here, we have a non-trivial mapping from an arbitrary register $\textbf{F}$ of size two, to a classical random variable $C$. We use this to prove (see Theorem \ref{thm:QCMain2} part 2) that for any channel $q(y,z|x)$ we have
\begin{align*}
\sup_{\textbf{F}-X-YZ\atop \dim \bF=2}I(\textbf{F};Y)-I(\textbf{F};Z) = \max_{C-X-YZ}I(C;Y)-I(C;Z).
\end{align*}

\vspace{.1in}
\noindent
\textbf{Quantum conditioning is richer than classical conditioning:}
Our example of a distribution $q(x,y,z)$ with \eqref{eq:ineq-xyz} indeed shows that quantum conditioning as defined above, is strictly richer than classical conditioning (Theorem \ref{thm:QCMain1} part (b)). More precisely there exist random variables $X_1, \dots, X_m$ and  quantum register $\mathbf{F}$ such that for every auxiliary random variable $C$ we have
\begin{align*}
\left(H(X_1| C), \dots, H(X_m| C)\right)
\neq \left(H(X_1| \mathbf{F}), \dots, H(X_m| \mathbf{F})\right).
\end{align*}
\subsection{Organization of the paper}

The reminder of this paper is organized as follows. In Section~\ref{sec:prelim} we set up our notation and remind some preliminaries. In Section \ref{sec:ChSim} we provide two examples, motivating looking at expressions with a single auxiliary quantum register. The reader may choose to skip this section and continue with Section~\ref{sec:QC} that contains some of the main results on dimension bounds. This section sets up a framework for discussing quantum conditioning (or ``quantum convexification"). It is followed by an example that quantum conditioning is strictly richer than classical conditioning. Section~\ref{sec:DB} discusses the potential use of our technique to bound the dimension of auxiliary quantum registers. Some technical details come in the Appendix.

%*******************************
\section{Notations}\label{sec:prelim}
Classical random variables are denoted by capital letters $A, B, X, Y$. The set of outcomes (alphabets) of $X$ is denoted by $\mathcal{X}$, and by size of $X$ we mean $|\mathcal{X}|$, size of the set $\mathcal{X}$. By $X^n=X_1\dots X_n$ we mean $n$ i.i.d.\ copies of $X$, and $X_{\ell:k}$ (for $k\geq \ell$) means $X_{\ell}X_{\ell+1} \dots X_k$.
Outcomes of $X^n$ are denoted by $x^n=x_1\dots x_n$, so the outcome of the $i$-th random variable in $X^n$ is $x_i\in \mathcal{X}$. The sequence $x^n=x_1\dots x_n$ happens with probability $p(x^n)=p(x_1)\cdots p(x_n)$ where $p(x)$ is the distribution of $X$.

To distinguish quantum registers from classical random variables we denote them by boldface letters $\bE, \bF$, and the dimension of the corresponding Hilbert space to $\bF$ is denoted by $\dim \bF$. Again $\bF^n = \bF_1\dots \bF_n$ denotes $n$ independent copies of $\bF$.

$H(\cdot)$ denotes the entropy function (either Shannon or von Neumann entropy), and $I(\cdot\, ; \cdot)$ is the mutual information
$$I(X; Y)= H(X) + H(Y)-H(XY).$$
Moreover $H(X|Y)= H(XY)- H(Y)$ is the conditional entropy.
For random variables $X, Y, Z$ by $X - Y - Z$ (the Markov chain condition) we mean that $I(X ; Z| Y) =0$, where $I(X; Z| Y)= H(X|Y) + H(Z|Y)- H(XZ|Y)$.
 We use the same notation if either of $X, Y$ or $Z$ is a quantum register.

 When $Y$ is classical $X - Y - Z$ equivalently means that $X$ can be generated out of $Y$ using a channel independent of $Z$. When $Y$ is quantum however, by applying a measurement on $Y$ to generate $X$ we destroy $Y$. So $X, Y$ do not simultaneously exist and in this case $I(X ; Z| Y)$ has no meaning. As a result we save the notation $X - Y - Z$ when all $X, Y, Z$ simultaneously exist and $I(X ; Z| Y)=0$. Lemma~\ref{lem:purification} in Appendix~\ref{app:useful-lem} gives an operational meaning to $X-\bF - Y$ when $\bF$ is a quantum register.

For either probability distributions or quantum states the norm-$1$ distance is denoted by $\|\cdot \|_1$.

Further notations are developed in Appendix~\ref{app:useful-lem}. These notations are not required to understand the body of the paper.
%**********************

\section{Entanglement-assisted rate regions}\label{sec:ChSim}
In this section we provide two examples to motivate the study of expressions with a single quantum auxiliary system, and several classical random variables. The reader may choose to skip this section and continue to Section \ref{sec:QC}.

\subsection{Entanglement-assisted Gray-Wyner problem}
\label{Appendix:GW}
The Gray-Wyner problem involves $m+1$ parties, Alice and $m$ Bobs who we call $\Bob_1, \dots, \Bob_m$.
In this problem Alice observes i.i.d.\ copies of $X_1, \dots,X_m$, and her goal is to send $X_i$ to $\Bob_i$. That is,
$\Bob_i$ wants to recover the i.i.d.\ copies of $X_i$ with probability of error converging to zero as the number of i.i.d.\ observations goes to infinity (see Figure~\ref{fig:gray-wyner}). To do so, Alice can send a public message at rate $R_0$ to all Bobs, and $m$ private messages at rates $R_1, R_2, \dots, R_m$ (at rate $R_i$ to $\Bob_i$).

The Gray-Wyner region is defined to be the set of \emph{achievable} rate vectors $(R_0, R_1, \dots, R_m)$, where by achievable we mean that by sending public and private information at rates $R_0$ and $R_1, \dots, R_m$ respectively, Bobs' demands can be fulfilled.
The Gray-Wyner region, $\mathcal{R}^c_{\text{GW}}$, is characterized by the set of all tuples
\begin{align}\label{eq:gw-tuple}
\big(I(X_1\dots X_m;C), H(X_1|C), \dots, H(X_m|C)\big),
\end{align}
where  $C$ is an arbitrary auxiliary random variable~\cite{GrayWyner}. That is, $(R_0, R_1, \dots, R_m)$ is achievable if and only if there exists an auxiliary random variable $C$ such that $R_0\geq I(X_1\dots X_m;C)$ and $R_i\geq H(X_i|C)$.

\begin{figure*}
\begin{center}
\includegraphics[width=4in]{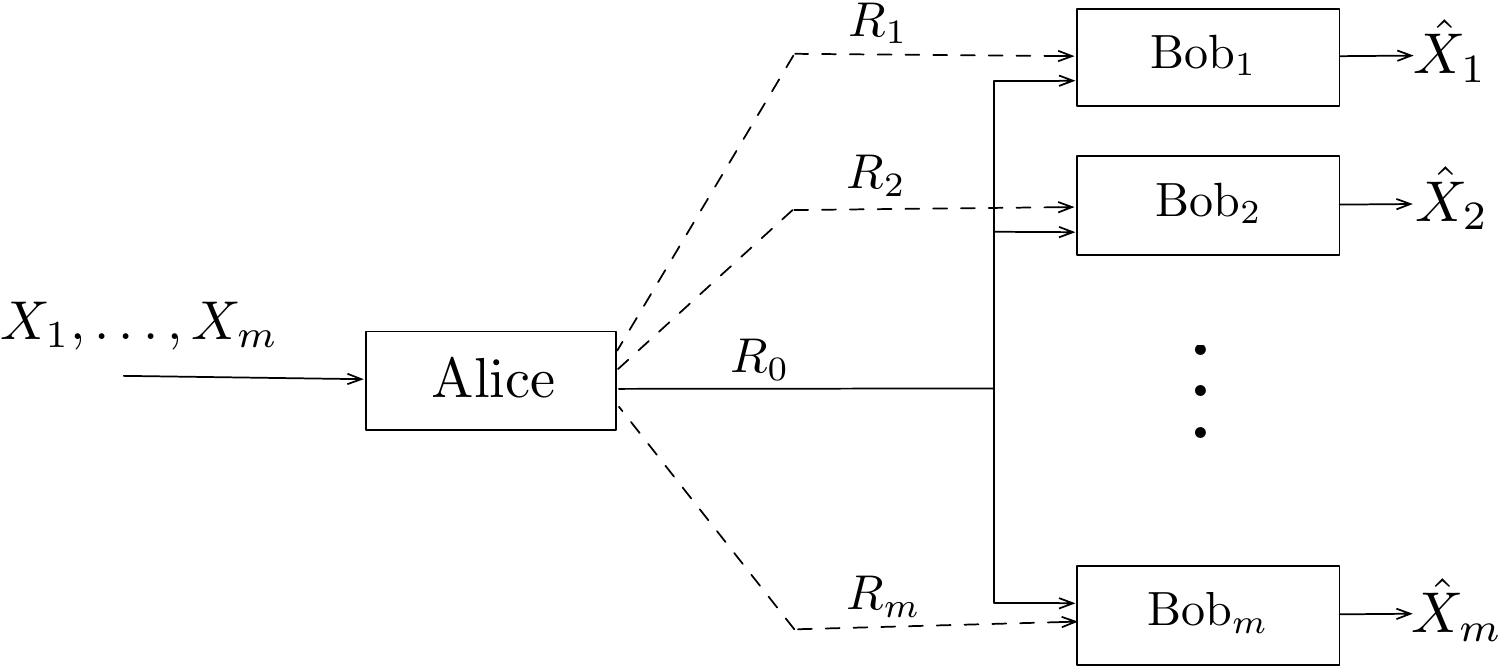}
\caption{The Gray-Wyner game consists of $m+1$ players, Alice and $m$ Bobs who are indexed by $i=1, \dots, m$. Alice receives the i.i.d.\ copies of
$ X_1, \dots, X_m,$ sends public information at rate $R_0$ to all Bobs and private information at rate $R_i$ to $\Bob_i$. The goal of $\Bob_i$ is to recover $X_i$.
}\label{fig:gray-wyner}
\end{center}
\end{figure*}

Recently Winter (personal communication, 2012) has shown that the rate region of the \emph{entanglement-assisted} Gray-Wyner problem, $\mathcal{R}^q_{\text{GW}}$, is characterized by all tuples
\begin{align}\label{eq:gw-tuple-r}
\big(I(X_1\dots X_m;\bF_1\dots \bF_m),  H(X_1|\bF_1), \dots, H(X_m|\bF_m)\big),
\end{align}
where $\bF_1,\dots , \bF_m$ are $m$ arbitrary auxiliary quantum registers.\footnote{Roughly speaking, the converse follows from similar steps as in the classical Gray-Wyner problem. The achievability follows from a remote state preparation protocol together with the quantum-classical Slepian-Wolf theorem~\cite{cqSW}. $\bF_i$ is $\Bob_i$'s part of the shared (multipartite) entangled state.} Here by entanglement-assisted we mean that Alice and Bobs can share any \emph{multipartite} entangled state as a resource.

Although we have a characterization of the entanglement-assisted Gray-Wyner rate region in a single letter form, we do not know whether this rate region is strictly larger that the classical rate region or not, i.e., whether entanglement helps in the Gray-Wyner problem or not. Indeed the inclusion $\mathcal R^c_{\text{GW}} \subseteq \mathcal R^q_{\text{GW}}$ is immediate, but we do not know whether it is strict or not. Our numerical simulations do not give any point in $\mathcal R^q_{\text{GW}}$ outside of $\mathcal R^c_{\text{GW}}$, but this does not give their equality. This is because our brute-force searches are in bounded dimensions and unlike the classical case,  we do not have a bound on the dimensions of $\bF_i$'s in~\eqref{eq:gw-tuple-r}.

Let us introduce a third region $\mathcal{R}^{q'}_{\text{GW}}$ characterized by tuples
\begin{align}\label{eq:gw-tuple}
\big(I(X_1\dots X_m;\bF), H(X_1|\bF), \dots, H(X_m|\bF)\big),
\end{align}
where $\bF$ is an arbitrary auxiliary quantum register. Note that $\mathcal{R}^{q'}_{\text{GW}}$ involves a single auxiliary quantum register.  Observe that $\mathcal{R}^c_{\text{GW}}\subseteq\mathcal{R}^{q}_{\text{GW}}\subseteq\mathcal{R}^{q'}_{\text{GW}}$ since by identify $\bF$ by $\bF_1\dots \bF_m$ we have $H(X_i|\bF_1)\geq H(X_i|\bF_1\dots \bF_m)$. Therefore, if we show that $\mathcal{R}^c_{\text{GW}} =\mathcal{R}^{q'}_{\text{GW}}$ then we conclude that entanglement does not help in the Gray-Wyner problem. Note that the problem of comparing $\mathcal{R}^c_{\text{GW}}$ and $\mathcal{R}^{q'}_{\text{GW}}$ is a special case of the following general problem: in an optimization problem involving one auxiliary quantum register and several classical random variables, do we get the same optimal value if we restrict the auxiliary register to be classical?

\subsection{Simulation of bipartite correlations}
Our second example (for motivating expressions with a single auxiliary quantum register) is the problem of simulating bipartite correlations with one-way classical communication.
In this problem Alice and Bob observe i.i.d.\ repetitions of two random variables $X$ and $Y$ respectively, and would like to generate i.i.d.\ repetitions of random variables $A$ and $B$ respectively. Here $X, Y, A, B$ are jointly distributed according to a given $p(a,b,x,y)=p(x, y)p(a, b\vert x, y)$.
For this simulation
there is a one-way communication link from Alice to Bob at rate $R$. The question is for which values of $R$ this simulation is feasible. In the following we describe this problem in details.\\

\noindent
\textbf{Classical case (with shared randomness):}
Assume that Alice and Bob observe i.i.d.\ copies of $X^n$ and $Y^n$ respectively jointly distributed according to $\prod_{i=1}^np(x_{i},y_{i})$. Here $p(x,y)$ is the joint distribution of $XY$. They also share common randomness $s$ at some arbitrary rate. An $(n, \epsilon,R)$ code consists of a randomized encoder $\tilde p(m|x^ns)$ and two randomized decoders $\tilde{p}(a^n|mx^ns)$, $\tilde{p}(b^n|my^ns)$ such that $\frac{1}{n}H(M)\le R$ and
\begin{align}\label{eq:e-c}
\tv{\tilde{p}(a^n, b^n, x^n, y^n) - \prod_{i=1}^n p(a_i,b_i,x_i,y_i)}\leq \epsilon,
\end{align}
where
\begin{align*}
\tilde p(a^n, b^n, x^n, y^n) =
 p(x^n, y^n)\sum_{s, m} p(s) \tilde p(m|x^n s) \tilde p(a^n | m x^n s) \tilde p(b^n|my^n s).
\end{align*}
A rate $R$ is said to be achievable if there exists a sequence of $(n, \epsilon_n, R)$ codes such that
\[
\lim_{n\rightarrow \infty}\epsilon_n =0.
\]
The set of achievable rates is denoted by $\mathcal{R}^c$.

Yassaee et al. \cite{Yassaee} solve a generalization of this problem. Their rate region (Theorem 1 of \cite{Yassaee}) reduces to the following region as a special case.

\begin{theorem}\label{thm:yassaee}
\begin{align}
\mathcal{R}^c =\bigg\{R: \ \exists p&(c| a, b, x, y):  \nonumber
\\&~R\geq I(X; C| Y),\n
&~C-X-Y,\n
&~A-CX-YB,\n
&~B-CY-XA, \n
&~|\mathcal C|\le|\mx||\my||\ma||\mb|+1\bigg\}. \label{eq:c-c}
\end{align}
Here $C$ is some auxiliary random variable with joint distribution $p(c, a, b, x, y) = p(a, b, x, y) p(c| a, b, x, y)$ satisfying the above constraints. In other words, $R$ is achievable if and only if $R\geq \min I(X; C| Y)$ where the minimum is taken over auxiliary random variables $C$ satisfying the conditions given in equation ~\eqref{eq:c-c}.
\end{theorem}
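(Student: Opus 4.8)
The plan is to obtain the region \eqref{eq:c-c} as a specialization of Theorem~1 of \cite{Yassaee}, which characterizes --- via the output-statistics-of-random-binning (OSRB) method --- the communication cost of strong coordination over a general network. Here the network degenerates to two terminals, Alice (observing $X$) and Bob (observing $Y$), a single rate-$R$ link from Alice to Bob, unlimited shared randomness, and desired outputs $A$ at Alice and $B$ at Bob; what remains is to check that under this degeneration the general auxiliary-variable structure of \cite{Yassaee} collapses to a single $C$, the three Markov constraints $C-X-Y$, $A-CX-YB$, $B-CY-XA$, and the single rate bound $R\ge I(X;C|Y)$, the cardinality bound coming afterwards from the support lemma. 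For the reader's benefit I would also spell out the two directions directly, as sketched below.

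\noindent\textbf{Achievability.} Fix conditional laws $p(c|x)$, $p(a|c,x)$, $p(b|c,y)$ with $\sum_c p(c|x)p(a|c,x)p(b|c,y)=p(a,b|x,y)$ --- equivalently, an auxiliary $C$ satisfying the three Markov conditions --- so that the joint $(X,Y,A,B)$ they induce is exactly the target. Applying OSRB to the i.i.d.\ ensemble $\prod_{i=1}^n p(x_i,y_i,c_i)$, assign to $C^n$ two independent uniform indices: $\omega\in[1:2^{nR_\omega}]$, to serve as shared randomness, and $M\in[1:2^{nR}]$, to serve as Alice's message. The OSRB lemmas give, on one hand, that $\omega$ is asymptotically uniform and independent of $X^n$ --- so it may legitimately play the role of shared randomness --- as long as $R_\omega<H(C|X)$, and on the other hand that a Slepian--Wolf decoder recovers $C^n$ from $(\omega,M,Y^n)$ with vanishing error as long as $R_\omega+R>H(C|Y)$. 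In the resulting protocol Alice (holding $X^n,\omega$) samples $C^n$, transmits its second bin index $M$, and emits $A^n\sim\prod_i p(a_i|c_i,x_i)$, while Bob (holding $Y^n,\omega,M$) decodes $C^n$ and emits $B^n\sim\prod_i p(b_i|c_i,y_i)$; the induced $(X^n,Y^n,A^n,B^n)$ is then $o(1)$-close to the target in $\|\cdot\|_1$. Eliminating $R_\omega$ between $R_\omega<H(C|X)$ and $R_\omega+R>H(C|Y)$ leaves $R>H(C|Y)-H(C|X)=I(X;C|Y)$, the last step using $H(C|XY)=H(C|X)$ ($C-X-Y$); a standard argument over $\delta\downarrow 0$ then puts every $R>\inf_C I(X;C|Y)$ into $\mathcal{R}^c$.

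\noindent\textbf{Converse.} Given an $(n,\epsilon,R)$ code with shared randomness $S$ (independent of the sources) and message $M$, $\tfrac1n H(M)\le R$, introduce $Q\sim\mathrm{Unif}[1:n]$, set $C_i:=(M,S,X^{i-1},Y_{i+1}^n)$, and let $C:=(Q,C_Q)$, $X:=X_Q$, $Y:=Y_Q$, $A:=A_Q$, $B:=B_Q$. Since the sources are i.i.d.\ and $M$ is a function of $(X^n,S)$ only, $H(X_i\mid X^{i-1},Y^n,S)=H(X_i\mid Y_i)$, and therefore
\begin{align*}
nR\;\ge\;H(M)\;\ge\;I(M;X^n\mid Y^n,S)\;=\;\sum_{i=1}^{n} I(X_i;M\mid X^{i-1},Y^n,S)\;\ge\;\sum_{i=1}^{n} I(X_i;C_i\mid Y_i),
\end{align*}
which after averaging over $Q$ (and using $I(X_Q;Q|Y_Q)=0$) gives $R\ge I(X;C|Y)$. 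The chain $C-X-Y$ holds exactly, because conditioned on $X_i$ the symbol $Y_i$ is independent of $(M,S,X^{i-1},Y_{i+1}^n)$: indeed $(X_i,Y_i)$ is independent of all the other source symbols and of $S$, and $M$ is computed from $(X^n,S)$ without any access to $Y^n$.

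\noindent\textbf{Main obstacle.} The delicate point is the converse treatment of the remaining two Markov chains $A-CX-YB$ and $B-CY-XA$. For a general code the decoders make $A_i$ a function of the whole block $X^n$ and $B_i$ a function of all of $Y^n$, so no ``local'' choice of $C_i$ makes these Markov conditions and the bound $\sum_i I(X_i;C_i|Y_i)\le nR$ hold simultaneously and exactly: enlarging $C_i$ to contain $X_{i+1}^n$ and $Y^{i-1}$ (which is what the two chains would require) destroys the bound on the mutual-information sum. The resolution is to use the relations that the code does satisfy, namely $A^n-(M,X^n,S)-(Y^n,B^n)$ and $B^n-(M,Y^n,S)-(X^n,A^n)$, together with the fact that the induced single-letter law $\tilde q(c,x,y,a,b)$ (averaged over $Q$) has $\tilde q(x,y)=p(x,y)$ exactly while $\tilde q(a,b,x,y)$ is within $\epsilon$ of $p(a,b,x,y)$; one shows that under $\tilde q$ the two Markov conditions hold up to a defect that vanishes as $\epsilon\to0$, and then imposes the cardinality bound $|\mathcal C|\le|\mx||\my||\ma||\mb|+1$ by the support lemma --- which, reweighting $p(c)$ while freezing the conditionals $p(x,y,a,b|c)$, preserves every Markov condition --- so as to compactify the feasible set. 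Along a subsequence $\epsilon_n\to0$ the laws $\tilde q^{(n)}$ then converge to a distribution of exactly the claimed form with $R\ge I(X;C|Y)$. Establishing that the Markov defects really do vanish --- which rests on a Csisz\'ar-sum-identity manipulation and continuity of entropy on finite alphabets --- is the technical heart, and in \cite{Yassaee} it is packaged inside the general converse.
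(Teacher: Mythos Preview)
The paper does not supply its own proof of this theorem; it simply records the statement and says ``Their rate region (Theorem~1 of \cite{Yassaee}) reduces to the following region as a special case.'' Your primary strategy --- specialize Theorem~1 of \cite{Yassaee} to the two-terminal, one-way-link, unlimited-common-randomness setting --- is therefore exactly what the paper does, and at that level your proposal is correct and matches the paper.

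Your direct achievability and converse sketches go beyond anything the paper itself contains. The achievability sketch via OSRB is in the right spirit. On the converse side you correctly identify the genuine difficulty: with $C_i=(M,S,X^{i-1},Y_{i+1}^n)$ the chains $A-CX-YB$ and $B-CY-XA$ do \emph{not} hold exactly for a general block code, and one must argue that they hold up to a defect vanishing with $\epsilon$, then pass to a limit after imposing the cardinality bound. You flag this honestly as the ``main obstacle'' and defer to \cite{Yassaee} for the details; that is consistent with the paper's own posture, but if you intend the direct argument to stand on its own you would need to actually carry out the Csisz\'ar-sum / continuity step rather than assert it.
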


We should point out here that the problem of simulating bipartite correlations with classical communication has been appeared in the literature~\cite{Brassardetal99, Pironio, TonerBacon, Degorreetal05, Brassardetal06, Shi, RegevToner, Roland09, Gisin, Degorreetal11}. For instance Toner and Bacon~\cite{TonerBacon} show that correlations obtained from projective measurements on a singlet can be simulated by one bit of (classical) communication. Regev and Toner~\cite{RegevToner} prove that a generalization of these correlations can be simulated by two bits of communication. Moreover, Pironio~\cite{Pironio} computes the \emph{average} amount of communication required to simulated CHSH-type correlations. Also, Roland and Szegedy~\cite{Roland09} find the \emph{asymptotic} rate of one-way communication needed to simulate CHSH-type correlations. We will comment more on these results later in Appendix~\ref{sec:CHSHExample}, but here we would like to emphasize that none of these papers considers the problem in an information theoretic setting. Namely, some of these papers consider the problem of simulation in a single shot, and not in a parallel repetition form, and those that study the asymptotic case of the problem do not allow for asymptotically vanishing probability of error. Given the fact that the information theoretic setting is a relaxation of the previous settings, Theorem~\ref{thm:yassaee} provides a \emph{lower bound} on all of the above bounds.\\

\noindent
\textbf{Quantum case (with shared entanglement):}
The setup here is similar to the classical case except that instead of shared randomness, Alice and Bob are provided with
shared entanglement with an arbitrary rate. To simulate the correlation, Alice applies a measurement chosen according to the observed $x^n$, on her part of the shared entanglement. The measurement outcome has two parts: the first part is taken as $a^n$ and the second part is taken as the message $m$ to be transmitted to Bob. Bob uses $m$ and his observation $y^n$ to choose a measurement to be applied on his quantum system. The outcome of this measurement is taken as $b^n$.

A rate $R$ is achievable if there exists a sequence of codes $(n, \epsilon_n,R)$ as above such that $\frac{1}{n}H(M)\leq R$ and the total variation distance between the induced distribution $\tilde{p}(a^n,b^n,x^n,y^n)$ by the code and the original distribution $\prod_{i=1}^n p(a_i,b_i,x_i,y_i)$ is at most $\epsilon_n$, i.e.,~\eqref{eq:e-c} holds. The set of achievable rates in this case is denoted by $\mathcal{R}^q$.

For $\epsilon\geq 0$ define
\begin{align}\label{eq:f-q}
&\mathcal{S}_{\epsilon} = \bigg\{R: \ \exists \bF \text{ satisfying the following conditions: }\nonumber\\&
~~~~ \qquad R\geq I(X; \bF| Y),\nonumber\\
& ~~~~\qquad\widetilde \rho_{AXY\bF} = \sum_{a, x, y} \tilde p(a, x, y) \ket{a, x, y}\bra{a, x, y}\otimes \rho_{a, x,y}^{\bF},\nonumber \\
& ~~~~\qquad\tilde{p}(a, b, x, y)= \tilde{p}(a, x, y)\tilde{p}(b| a, x, y),\nonumber \\
&~~~~\qquad\tv{\tilde{p}(a,b,x,y)-p(a,b,x,y)}\leq \epsilon,\n
&~~~~\qquad\textbf{F}-X-Y, ~A-\textbf{F}X-Y,\n
&~~~~\qquad\exists \Psi \text{ s.t. } \Psi(\textbf{F},Y)=(B,Y)\bigg\}.
\end{align}
By the second and third constraints we mean that there is a joint distribution $\tilde{p}(a, b, x, y)= \tilde{p}(a, x, y)\tilde{p}(b| a, x, y)$ on  $A, B, X, Y$, and that $A, X, Y, \bF$ simultaneously exist, so there is a corresponding c-q state. The last constraint means that there exists a measurement on $\bF$ chosen according to $Y$ which generates $B$. Unlike the classical case, we cannot expect this constraint to be in the Markov chain form of $B-\textbf{F}Y-XA$ paralleling the classical case, since $B$ and $\bF$ do not simultaneously exist.

The following theorem is analogous to Theorem~\ref{thm:yassaee} in the presence of entanglement.

\begin{theorem} \label{thm:q-region}
$\mS_0 \subseteq \mR^q \subseteq \bigcap_{\epsilon>0} \mS_{\epsilon}.$
\end{theorem}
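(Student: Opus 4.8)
The plan is to prove the two inclusions $\mS_0 \subseteq \mR^q$ and $\mR^q \subseteq \bigcap_{\epsilon>0}\mS_\epsilon$ separately, since each has a different flavor: the first is an achievability (coding) statement and the second is a converse.

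\textbf{Achievability, $\mS_0\subseteq\mR^q$.} Fix $R\in\mS_0$ with a witnessing register $\bF$, a c-q state $\widetilde\rho_{AXY\bF}$ with $\tilde p(a,x,y)=p(a,x,y)$ exactly (since $\epsilon=0$), the Markov conditions $\bF-X-Y$ and $A-\bF X-Y$, a decoder map $\Psi$ with $\Psi(\bF,Y)=(B,Y)$, and $R\ge I(X;\bF\mid Y)$. I would first invoke Lemma~\ref{lem:purification} to get an operational handle on the Markov chain $X-\bF-Y$: it should let us realize $\bF$ as (part of) a bipartite state shared by Alice and Bob together with local operations, which is exactly the resource structure of the quantum simulation problem. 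Then the coding scheme is essentially a ``reverse Shannon / state-splitting'' argument on the c-q channel $X\mapsto\rho^{\bF}_x$ (marginalizing out $A$), run in the $n$-i.i.d.\ setting with shared entanglement: Alice, holding $x^n$, wants Bob to end up holding a system in (a state close to) $\rho^{\bF^n}_{x^n}$ while Alice retains a purifying/side system that lets her produce $a^n$ consistently. Standard one-shot or asymptotic results (classical-quantum Slepian--Wolf / quantum state redistribution, as referenced via \cite{cqSW} for the Gray--Wyner discussion, or the techniques underlying \cite{Yassaee}) give that a message of rate $I(X;\bF\mid Y)$ suffices. After Bob reconstructs his share of $\bF^n$, he applies $\Psi$ (chosen according to $y^n$) to produce $b^n$; the constraints $A-\bF X-Y$ and the definition of $\Psi$ guarantee the joint statistics of $(a^n,b^n,x^n,y^n)$ converge in $\|\cdot\|_1$ to $\prod_i p(a_i,b_i,x_i,y_i)$. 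Hence $R\in\mR^q$.

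\textbf{Converse, $\mR^q\subseteq\bigcap_{\epsilon>0}\mS_\epsilon$.} Fix $R\in\mR^q$ and $\epsilon>0$; take $n$ large and an $(n,\epsilon_n,R)$ code with $\epsilon_n<\epsilon$. The single-letterization should follow the classical template in Theorem~\ref{thm:yassaee} with the auxiliary variable promoted to a quantum register. Concretely, I would let $\bF$ be the message $M$ together with the shared-entanglement system on Bob's side and a time-sharing variable $J$ uniform on $\{1,\dots,n\}$, so that $\bF$ with the index $J$ plays the role of the auxiliary register at a single coordinate. The rate bound comes from $nR\ge H(M)\ge I(X^n;M\mid Y^n)$, expanding by the chain rule and using that the sources are i.i.d.\ to pull out a single-letter $I(X;\bF\mid Y)$ (up to a vanishing correction), exactly as one derives the classical bound $R\ge I(X;C\mid Y)$. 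The Markov/structural constraints $\bF-X-Y$ (i.e.\ $I(\bF;XZ\mid X)$-type condition, here $\bF$ independent of $Y$ given $X$) and $A-\bF X-Y$ should fall out of the fact that, per coordinate, Alice's output $a_i$ is produced from $x^n$, $s$ (or the entanglement) and not from $y^n$, while Bob's system before decoding depends on $y^n$ only through the part that becomes his decoding input; one has to be careful that these hold only approximately and invoke continuity of entropy (Fannes-type) to absorb the $\epsilon_n$ error. The existence of $\Psi$ with $\Psi(\bF,Y)=(B,Y)$ is just the statement that Bob's decoder, viewed at coordinate $i$ and conditioned on $J=i$, is a measurement on his system controlled by $y_i$. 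Collecting these, the single-coordinate register $\bF$ (with time-sharing) certifies $R\in\mS_{\epsilon}$; intersecting over $\epsilon>0$ gives the claim.

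\textbf{Main obstacle.} I expect the delicate point to be the converse, specifically handling the gap between the exact constraints in the definition of $\mS_0$ and what a finite-blocklength code actually delivers: the induced $\tilde p(a^n,b^n,x^n,y^n)$ is only $\epsilon_n$-close to the target, the per-coordinate Markov conditions hold only approximately, and ``extracting'' a clean single-letter quantum register $\bF$ from the message plus an unbounded shared-entanglement system requires care about which subsystem to keep and showing the relevant conditional mutual informations single-letterize. This is exactly why the statement is $\mR^q\subseteq\bigcap_{\epsilon>0}\mS_\epsilon$ rather than an equality with $\mS_0$ — the approximation slack cannot obviously be removed, and the proof must be organized so that all errors are controlled by $\epsilon_n\to 0$ and pushed into the $\epsilon$ of $\mS_\epsilon$. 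On the achievability side, the analogous subtlety is making the entanglement-assisted state-splitting argument respect the side information $Y$ at Bob and the requirement that Alice's marginal output $A$ be produced consistently with the reconstructed $\bF$; this is where Lemma~\ref{lem:purification} and the operational reading of $X-\bF-Y$ do the real work.
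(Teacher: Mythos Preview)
Your high-level plan is right, but there are two concrete gaps relative to the paper's proof.

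\textbf{Converse.} You anticipate the Markov constraints $\bF-X-Y$ and $A-\bF X-Y$ holding only approximately and needing Fannes-type continuity. In the paper's construction this is not the case: the auxiliary is taken to be $\bF=(M,\bQ,X_{1:U-1},Y_{\neg U},U)$ with $\bQ$ Bob's post-measurement share of the entanglement, and then $I(\bF;Y\mid X)=0$ and $I(A;Y\mid X\bF)=0$ hold \emph{exactly}, because $Y^n-X^n-(A^n,M,\bQ)$ is an exact Markov chain (Alice's actions do not touch $Y^n$). The only approximate condition is $\|\tilde p(a,b,x,y)-p(a,b,x,y)\|_1\le\epsilon$, which is why the result lands in $\mS_\epsilon$ rather than $\mS_0$. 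Also, your rate bound $nR\ge I(X^n;M\mid Y^n)$ is not the right starting point: one needs $I(X^n;M\bQ\mid Y^n)$, obtained by first invoking no-signaling to get $I(X^n;\bQ\mid Y^n)=0$; and the past $X$'s and the other-coordinate $Y$'s must be included in $\bF$ for the single-letterization to go through.

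\textbf{Achievability.} You defer to ``standard one-shot or asymptotic results'' and to \cite{cqSW}, but the paper does not: the core of the achievability is a \emph{new} remote state preparation protocol \emph{with classical side information} (their Theorem~\ref{thm:preparation}), showing that rate $I(X;\bF\mid Y)$ suffices for Alice to prepare $\bF^n$ at Bob while retaining a purification. This is done via the column method with many copies of a typical-projected maximally entangled state, Alice measuring with an operator built from $\Pi_\delta^{x^n}\rho_{x^n}\Pi_\delta^{x^n}$, and Bob disambiguating within a group using $\Pi_\delta^{y^n}$; Lemmas~\ref{lemma:lemma1}, \ref{lemma:lemma2}, \ref{lemma:lemma5} do the error analysis. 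Lemma~\ref{lem:purification} is then applied to the Markov chain $A-\bF X-Y$ (not to $X-\bF-Y$ as you suggest): it says that whoever holds a purification of $\bF$ can generate $A$ independently of $Y$, which is exactly how Alice produces $a^n$ after the protocol. Your use of the lemma to ``realize $\bF$ as part of a bipartite state'' is not what it does and is not how it enters the argument.
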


\begin{remark} Note that the above theorem characterizes the rate region $\mathcal R^q$ in a single-letter form. The inclusion $\mathcal S_0\subseteq \mathcal R^q$ is the achievability part and $\mathcal R^q\subseteq \bigcap_{\epsilon>0} \mathcal S_\epsilon$ is the converse. Unlike the classical case, these two bounds do not match since we do not have a dimension bound on $\bF$; if similar to the last constraint in~\eqref{eq:c-c}, we had a bound on the dimension of $\bF$, then by a simple compactness argument we could have concluded that $\mS_0=\bigcap_{\epsilon>0}\mS_\epsilon = \mR^q$.
\end{remark}

The proof of the achievability part of this theorem is based on a remote state preparation protocol with \emph{side information}.
This generalization of remote state preparation protocols could be of independent interest. Roughly speaking given the Markov chain condition $\textbf{F}-X-Y$, Alice after receiving $x$ by sending classical bits at rate $I(X;\bF|Y)$ can prepare a copy of $\bF$ at Bob's side. Then by the last constraint in \eqref{eq:f-q}, Bob by applying a measurement on $\bF$ can generate $B$. Moreover, the remote state preparation protocol is in such a way that Alice herself, has a purification of $\bF$ in hand. Then using $A-\textbf{F}X-Y$ it is shown that she can output $A$ by measuring this purification.
The converse part of the theorem follows from similar steps as the converse part of Theorem~\ref{thm:yassaee}.
For a full proof of this theorem see Appendix~\ref{AppendixProofCorrelation}.

Unfortunately Theorem~\ref{thm:q-region} does not provide an algorithm to \emph{compute} the rate region $\mathcal R^q$ since
we have no bound on the dimension of the auxiliary register $\textbf{F}$. In Appendix \ref{sec:CHSHExample} we use  an ad hoc technique based on the principle of \emph{Information Causality}~\cite{IC} to derive bounds for a particular instance of the problem, namely the CHSH-type correlations.  This  illustrate the possibility of an entirely different approach for proving computable outer bounds, when proving dimension bounds on the size of auxiliary registers is difficult. By an entirely different approach we mean a different approach than the standard converses based on identification of auxiliaries using past and future of variables (where we start from the $n$-letter form of an expansion and use, for instance the chain rule or the Csiszar sum lemma, to derive single letter forms of the expression). Although it is not clear how to extend the result to non-CHSH-type correlations, we would like to highlight the possibility of getting around dimension bounds if one is only interested in computable outer bounds.

%******************************
\section{Quantum conditioning}
\label{sec:QC}\label{sec:new-tools}

In the previous section we see that rate regions involving an auxiliary quantum register naturally appear in quantum information theory especially in classical communication settings with shared entanglement. To compare such rate regions with their completely classical counterparts, we need to compare quantum auxiliary registers with classical auxiliary random variables. In particular, in the Gray-Wyner problem we see that if we could replace conditioning on a quantum register with a classical one, then entanglement does not help. So in this section we focus on understanding of quantum conditioning (entropic quantities conditioned on a quantum system). This notion appears, at least in the classical case, when we want to prove dimension bounds.

In the classical world, conditioning on a random variable has several meanings one of which is convexification. This interpretation is the crux of the Carath\'{e}odory theorem, the main classical tool for bounding cardinality of auxiliary random variables. To isolate this interpretation of conditioning we begin by some notations.
Fix finite sets $\mathcal{X}_1, \dots, \mathcal{X}_m$, and consider the map
$$p(x_1, x_2, \dots x_m)\mapsto \big(H(X_1), H(X_2), \dots, H(X_m)\big).$$
The domain of this map is the probability simplex on $\mathcal{X}_1\times\dots\times \mathcal{X}_m$. Let $\mathcal{G}$ be the graph of this map, i.e.,
\begin{align*} \mathcal{G}=\bigg\{ \big(p(x_1, \dots, x_m), &H(X_1), \dots, H(X_m)\big)  \text{for all }  p(x_1,\dots x_m)      \bigg\} . \end{align*}
Each point in $ \mathcal{G}$ corresponds to a joint distribution $p(x_1,\dots x_m)$. Then $\ConvHull(\mathcal{G})$, the convex envelope of $\mathcal{G}$ can be seen to be equal to the following:\footnote{
To compute the convex envelope of $\mathcal{G}$, one has to take a set of $k$ points and compute their weighted average (with non-negative weights adding up to one). Since each point corresponds to a joint distribution, we can think of $k$ distributions $p_i(x_1, \dots, x_m),~i=1,\dots, k$, and $k$ weights $\omega_1$, $\omega_2, \cdots, \omega_k$. Let $C$ be a random variable that takes the value $C=i$ with probability $\omega_i$ for $i=1,\dots, k$. Further define $p(x_1, \dots, x_m|C=i):=p_i(x_1, \dots, x_m)$. Then
\begin{align*}
p(x_1, \dots, x_m)&=\sum_{i}p(C=i)p(x_1, \dots, x_m|C=i)\\&=\sum_{i}\omega_ip_i(x_1, \dots, x_m),
\end{align*}
is the weighted average of $p_i(x_1, \dots, x_m)$. Further
\begin{align*}H(X_j|C)&=\sum_{i}p(C=i)H(X_j|C=i)\\&=\sum_{i}\omega_i H_{p_i(x_j)}(X_j),\end{align*}
is also a weighted average. Thus $\big(p(x_1, \dots, x_m), H(X_1\vert C), \dots, H(X_m\vert C)\big)$ is the weighted average of the set of points we started with.}
\begin{align*}
\ConvHull(\mathcal{G}) = \bigg\{ & \big(p(x_1, \dots, x_m), H(X_1\vert C), \dots, H(X_m\vert C)\big) \text{ for all } p(x_1, \dots, x_m, c)     \bigg\}.
\end{align*}
Thus conditioning on a (classical) random variable is equivalent to convexification.\footnote{Size of the alphabet of the auxiliary variable $C$ corresponds to the number of points we need to take to compute the convex hull of a set. Carath\'{e}odory's theorem is a theorem in convex geometry that bounds this number. According to this theorem, any point in the convex hull of a set in $\mathbb{R}^d$ can be expressed as the convex combination of at most $d+1$ points in the set.}

Now the question is what happens when we allow $C$ to be a quantum register. In other words what we can say about the following set
\begin{align*}
\text{QConvHull}&(\mathcal{G}):=\bigg\{  (p(x_1, \dots, x_m), H(X_1\vert \bF), \dots, H(X_m\vert \bF)) \text{ for all } p(x_1, \dots, x_m) \text{ and } \rho^{\bF}_{x_1\dots x_m}     \bigg\}.
\end{align*}
Observe that $\text{QConvHull}(\mathcal{G})$ is convex and contains $\ConvHull(\mathcal{G})$. The question is whether this containment is strict or equality holds. One difficulty of understanding $\text{QConvHull}(\mathcal{G})$ is that unlike the classical case, no bound on the dimension of $\bF$ is known. This means that we do not even know how to compute $\text{QConvHull}(\mathcal{G})$. \\

\noindent\textbf{Example.} Let $m=3$ and $X_3=(X_1, X_2)$. Then for any $p(x_1,x_2)$ the coordinates of the triple $\big(H(X_1|\bF),H(X_2|\bF),H(X_3|\bF)\big)$ satisfy $H(X_3|\bF)\leq H(X_1|\bF)+H(X_2|\bF)$. Suppose we are interested in the set of triples where equality $H(X_3|\bF)=H(X_1|\bF)+H(X_2|\bF)$ holds. In this case $I(X_1;X_2|\bF)=0$. Now using the structure of states that satisfy strong subadditivity
of quantum entropy with equality~\cite{Haydenetal04} we conclude that
there exists a classical random variable $C$ such that $I(X_1;X_2|C)=0$, $H(X_1|C)=H(X_1|\bF)$ and $H(X_2|C)=H(X_2|\bF)$. This means that the quantum and classical regions are the same under the constraint that the third coordinate of the triple is equal to the sum of the first two coordinates.\\

The first main result of this section is that quantum conditioning is \emph{strictly richer} than classical conditioning. Here we introduce new tools that could be useful in bounding the dimension of quantum registers as well.

\begin{theorem}\label{thm:QCMain1}

\begin{enumerate}
                 \item[\rm{(a)}] The following three statements are \emph{equivalent}:
                 \begin{enumerate}
                                        \item[\rm{1.}] $\text{\emph{QConvHull}}(\mathcal{G})=\text{\emph{ConvHull}}(\mathcal{G})$ for any finite sets $\mathcal{X}_1, \dots, \mathcal{X}_m$.

                                        \item[\rm{2.}] For a classical-quantum channel $X\rightarrow \bF$ determined by a collection of density marices $\rho^{\bF}_{x}$ for $x\in \mathcal{X}$, consider the function $p(x)\mapsto I(X;\bF)$ for distributions $p(x)$ on $\mathcal{X}$. Then for every $\epsilon>0$ there exists a classical channel $X\rightarrow C$ determined by $q(c|x)$ such that $\big|I(X;\bF)-I(X;C)\big|\leq \epsilon$ for all $p(x)$.

                                         \item[\rm{3.}] For any arbitrary $q(x,y,z)$ consider the optimization problem
\[\sup_{\textbf{F}-X-YZ}I(\textbf{F};Y)-I(\textbf{F};Z),
\]
over all quantum registers $\textbf{F}$ satisfying $\bF-X-YZ$. Then the supremum is a maximum and is attained at a classical $\bF$.
                                                                          \end{enumerate}
                 \item[\rm{(b)}] There is a counterexample for part {\rm{(1)}} implying that all of the above three statements are false.
               \end{enumerate}
\end{theorem}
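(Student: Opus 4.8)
The plan is to prove the three equivalences in part~(a) by translating everything into the language of \emph{mutual-information curves}, and then to build the counterexample of part~(b) from a Kochen--Specker set. For a classical--quantum channel $x\mapsto\rho_x$, $x\in\mathcal X$, write $f_{\bF}$ for the concave function $p(x)\mapsto I(X;\bF)$ on the simplex $\Delta(\mathcal X)$; it vanishes at the vertices and is dominated by the Shannon entropy $H(p)$. Classical channels $q(c|x)$ give curves $g_C$ with the same two properties, and the set $\{g_C\}$ is \emph{convex} (a combination $\sum_i\omega_i g_{C_i}$ is the curve of the channel outputting $(i,C_i)$ with $i$ drawn independently with probability $\omega_i$); let $\mathcal K=\overline{\{g_C\}}\subseteq C(\Delta(\mathcal X))$ be its closure, so statement~(2) is exactly ``every $f_{\bF}\in\mathcal K$''. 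The skeleton I would use is $(2)\Leftrightarrow(3)$ together with $(2)\Leftrightarrow(1)$.

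For $(2)\Leftrightarrow(3)$ the key point is the identity, valid whenever $\bF-X-YZ$,
\[ I(\bF;Y)-I(\bF;Z)=\sum_z q(z)\,f_{\bF}\big(q(\cdot|z)\big)-\sum_y q(y)\,f_{\bF}\big(q(\cdot|y)\big)=:L_q(f_{\bF}), \]
which follows from $I(\bF;Y)=I(\bF;X)-I(\bF;X|Y)$ and the fact that $I(\bF;X|Y{=}y)$ is the Holevo quantity of $\{\rho_x\}$ at input $q(\cdot|y)$. Hence $\sup_{\bF-X-YZ}[I(\bF;Y)-I(\bF;Z)]=\sup_{\bF}L_q(f_{\bF})$, with classical analogue $\sup_C L_q(g_C)$. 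The functionals $\{L_q\}$ are precisely the signed measures on $\Delta(\mathcal X)$ annihilating the affine functions (their positive and negative parts have equal mass and equal barycenter), and since every $g_C$ and every $f_{\bF}$ vanishes at the vertices, $f^*\notin\mathcal K$ is equivalent to the image of $f^*$ lying outside the closed convex set $\mathcal K/\mathrm{Aff}$ in $C(\Delta(\mathcal X))/\mathrm{Aff}$; a Hahn--Banach separation there produces an $L_q$ with $L_q(f^*)>\sup_C L_q(g_C)$. Because the classical supremum is attained (the cardinality bound of the classical Problem~1 plus compactness), this yields: (2) fails $\Rightarrow$ (3) fails; conversely (3) fails $\Rightarrow$ some $\bF$ beats every classical $C$ for some $L_q$ $\Rightarrow$ $f_{\bF}\notin\mathcal K$ $\Rightarrow$ (2) fails; and attainment at a classical $\bF$ under (2) is immediate once the quantum and classical optima coincide.

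For $(2)\Rightarrow(1)$: viewing a c-q state $\rho_{\bF X_1\cdots X_m}$ as a channel from $X=(X_1,\dots,X_m)$, each $I(X_j;\bF)=f_{\bF}(p_0)-\sum_{x_j}p_0(x_j)f_{\bF}\big(p_0(\cdot|X_j{=}x_j)\big)$ is a \emph{fixed} affine combination of values of the single curve $f_{\bF}$, so one classical $C$ approximating $f_{\bF}$ uniformly to within $\epsilon$ gives $|H(X_j|\bF)-H(X_j|C)|\le 2\epsilon$ for all $j$ at once; letting $\epsilon\to0$ and using compactness of $\ConvHull(\mathcal G)$ forces equality, i.e.\ (1). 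The reverse $(1)\Rightarrow(2)$ is the step I expect to be the main obstacle. The plan is: fix a c-q channel with curve $f^*$ and a fine net $p_1,\dots,p_k$ of $\Delta(\mathcal X)$; since valid curves (concave, pinched between $0$ and $H(p)$) are equicontinuous, it suffices to find one classical channel whose curve matches $f^*$ on the net. A product c-q state over independent ``probe'' copies $X^{(j)}\sim p_j$ (each carrying its own copy of $\bF$) gives a point of $\mathrm{QConvHull}(\mathcal G)$ whose coordinates encode the numbers $f^*(p_j)$, and applying (1) yields a classical $C$ with $I(X^{(j)};C)=f^*(p_j)$ for all $j$. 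The delicate point is to convert this $C$ --- which a priori induces a \emph{different} marginal channel $X^{(j)}\to C$ for each probe --- into a single channel $X\to\tilde C$ whose curve still matches $f^*$ on the net. This is exactly what the max--min tool of Section~\ref{sec:DB} is for: writing the Holevo quantity as $\min_\sigma\sum_x p(x)D(\rho_x\|\sigma)$ supplies the inner minimization, and a minimax theorem collapses the multi-channel expression to a single-channel one. I expect this collapse to be the technical heart of part~(a).

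For part~(b) I would instantiate the construction with a Kochen--Specker set: let $\mathcal X$ index the vectors $|v_x\rangle$ of a KS set in $\mathbb C^d$, take the c-q channel $x\mapsto|v_x\rangle\langle v_x|$, so $f^*(p)=H\big(\sum_x p(x)|v_x\rangle\langle v_x|\big)$. To show $f^*\notin\mathcal K$ I would use a separating functional $L_q$ whose positive part is supported on the uniform distributions over the orthonormal bases contained in the KS set: on each of these $f^*$ attains the maximal value $\log d$ (the mixture equals $\tfrac1d I$), whereas $g_C$ attains $\log d$ at such a distribution only if $C$ perfectly distinguishes the corresponding basis vectors, and Kochen--Specker uncolorability --- equivalently, the entanglement advantage for one-shot zero-error capacity of classical channels \cite{Zero1} --- forbids any classical channel of the relevant bounded cardinality from doing this for \emph{every} basis simultaneously; quantifying this slack gives $L_q(f^*)>\sup_C L_q(g_C)$, hence a failure of (2), and then of (3) and (1). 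Feeding this c-q channel through the probe construction of $(2)\Rightarrow(1)$ also exhibits explicit finite sets $\mathcal X_1,\dots,\mathcal X_m$ and a point of $\mathrm{QConvHull}(\mathcal G)$ outside $\ConvHull(\mathcal G)$, which is the entropy-vector statement claimed after the theorem. The main thing still to nail down here is the quantitative ``no small classical channel distinguishes all bases'' fact and its conversion into a margin in the separation.
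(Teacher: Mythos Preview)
Your $(2)\Leftrightarrow(3)$ via Hahn--Banach separation is correct and is essentially the dual of what the paper does. The paper proves $(3)\Rightarrow(2)$ by fixing $\bF$, assuming (3) for all $q$, discretising the classical channels into an $\epsilon$-net $\{C_j\}$, and then applying a concrete minimax lemma to swap $\max_{q(y,z|x)}$ with $\min_{\lambda}$; the resulting mixture $\sum_j\lambda_j g_{C_j}$ is the approximating classical curve. Your argument is the same duality read contrapositively: if $f_{\bF}\notin\mathcal K$, separate in $C(\Delta(\mathcal X))/\mathrm{Aff}$, and the resulting functional \emph{is} an $L_q$ witnessing the failure of (3). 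Both routes work; the paper's is more constructive, yours is cleaner to state. Your $(2)\Rightarrow(1)$ also matches the paper's $(a2)\Rightarrow(a1)$ essentially verbatim.

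The place you go wrong is in attempting $(1)\Rightarrow(2)$ directly via the probe construction. This is both unnecessary and, as sketched, has a real gap. It is unnecessary because the paper closes the cycle with the one-line implication $(1)\Rightarrow(3)$: the objective $I(\bF;Y)-I(\bF;Z)=H(Y)-H(Z)-H(Y|\bF)+H(Z|\bF)$ and the Markov constraint (rewritten as $H(YZX|\bF)-H(X|\bF)=H(YZ|X)$) are both expressible through finitely many conditional entropies $H(\cdot|\bF)$, so (1) applied with the $X_i$ taken among $X,Y,Z,XYZ$ immediately supplies a classical $C$ achieving the same value. Combined with your own $(3)\Rightarrow(2)$ this already gives $(1)\Rightarrow(2)$. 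Your probe route, by contrast, yields a classical $C$ jointly distributed with \emph{all} probes $X^{(1)},\dots,X^{(k)}$, and the step ``collapse to a single channel $X\to\tilde C$'' is not justified: the marginal channel $X^{(j)}\to C$ genuinely depends on $j$, and your appeal to the Holevo variational formula plus ``a minimax theorem'' does not do the collapsing you need. Drop this detour.

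For part (b) your direction is right but the paper's execution is more direct and avoids the quantification you flag as ``still to nail down''. Rather than separating $f^\ast$ from $\mathcal K$ (contradicting (2)), the paper contradicts (1) by exhibiting an explicit point of $\mathrm{QConvHull}(\mathcal G)$ outside $\mathrm{ConvHull}(\mathcal G)$: with $X$ indexing the $24$ vectors of a Kochen--Specker set in $\mathbb C^4$, $M$ the basis label, and $Y$ uniform over the three contexts containing $X$, the completeness relations give $H(X|\bF)=H(M|\bF)=\log 6$ and $H(Y|\bF)=\log 18$ directly. The core of the argument is then purely combinatorial: any classical $C$ matching these three entropies forces (via $H(X|CM)=0$, $H(M|CY)=0$, and $MC-X-Y$) an independent set of size $6$ in the orthogonality graph, contradicting its independence number $5$. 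This bypasses the need to choose a negative part for $L_q$ or to quantify any margin.
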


 Part (a2) of the theorem introduces the problem of uniformly approximating the mutual information curve (or surface) $p(x)\mapsto I(X;\bF)$ with classical ones. The mutual information $I(X;\bF)$ is concave in $p(x)$. Therefore the problem is that of approximating a concave curve (or surface) with another one. Statement of this theorem says that this is not possible for some $\{\rho^{\bF}_{x}\}$.

Consider the optimization problem introduced in part (a3) of the theorem. Because of the Markov chain condition, the auxiliary system $\bF$ is determined by the collection of states $\{ \sigma_x:\; x\in \mathcal{X}\}$. Note that the supremum is not known to be computable because no bound on the dimension of $\bF$ is known. The classical form of the expression is $\sup_{p(c|x)}I(C;Y)-I(C;Z)$, where we are taking the supremum over all classical channels $p(c|x)$. In the classical case we know that the supremum is indeed a maximum, and further the cardinality of $\mathcal{C}$ can be bounded from above by $|\mathcal{X}|$ using the Convex Cover Method (see \cite[Appendix C]{AbbasYoungHan}) where $g_j$, $j-1,2,\cdots, |\mathcal{X}|$ in the statement of the lemma are defined as follows:
$$g_j(\pi)=\left\{
  \begin{array}{ll}
    \pi(j), & j=1,2,\cdots, |\mathcal{X}|-1; \\
    H(Y)-H(Z), & j=|\mathcal{X}|.
  \end{array}
\right.
$$
Observe that it is because of the Markov chain $C-X-YZ$ that the cardinality bound would not depend on $|\mathcal{Y}|$ and $|\mathcal{Z}|$. The above theorem shows that there exists a distribution $q(x,y, z)$ such that
\begin{align}\label{eq:st2}
\sup_{\textbf{F}-X-YZ}&I(\textbf{F};Y)-I(\textbf{F};Z) > \max_{p(c|x)}I(C;Y)-I(C;Z).
\end{align}

In the second main result in this section, we consider the above statements in the special case of $|\mathcal{X}|=\dim\bF=2$.
\begin{theorem}\label{thm:QCMain2} The following two statement hold:

\begin{enumerate}
\item Let $|\mathcal{X}|=\dim \bF=2$ and consider a channel $X\rightarrow \bF$ determined by  $\rho^{\bF}_{0}, \rho^{\bF}_{1}$. Then one can find a classical channel $q(c|x)$ such that $I(X;\bF)=I(X;C)$ for all $p(x)$.
                  \item For an arbitrary $q(x,y,z)$ where $X$ is binary, consider the supremum
\begin{align*}
\sup_{\textbf{F}-X-YZ}I(\textbf{F};Y)-I(\textbf{F};Z),
\end{align*}
over all quantum registers $\textbf{F}$ \emph{of dimension two}. Then the supremum is a maximum and one can always find a maximizer $\textbf{F}$ that is classical.
\end{enumerate}
\end{theorem}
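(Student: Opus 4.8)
The plan is to obtain part 2 from part 1 by a short computation, so the real content is part 1. Throughout set $s=p(0)$, let $\rho_0,\rho_1$ be the states on $\bF$, and put $\sigma(t)=t\rho_0+(1-t)\rho_1$. For a classical channel $q(c|x)$ with output distributions $P_x=q(\cdot|x)$, put $\bar P(t)=tP_0+(1-t)P_1$. One checks that $I(X;C)=I(X;\bF)$ for every $p(x)$ is equivalent to: $t\mapsto H(\bar P(t))-H(\sigma(t))$ is affine on $[0,1]$ (its endpoint values are then forced to equal $H(P_x)-H(\rho_x)$, which is no extra constraint). So part 1 reduces to: given $\rho_0,\rho_1$, realize the concave curve $t\mapsto H(\sigma(t))$, up to an affine summand, as the Shannon entropy of an affine path in a probability simplex.

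Here I would use that $\bF$ is a qubit. Writing $h(t)=-t\log t-(1-t)\log(1-t)$ and letting $\vec r_0,\vec r_1$ be the Bloch vectors of $\rho_0,\rho_1$, we get $H(\sigma(t))=h(\mu(t))$ with $\mu(t)=\tfrac12(1-|\vec r(t)|)$ and $\vec r(t)=t\vec r_0+(1-t)\vec r_1$, so $\mu(t)(1-\mu(t))=\tfrac14(1-|\vec r(t)|^2)$ is a concave quadratic in $t$. If $\rho_0,\rho_1$ commute, measuring $\bF$ in the common eigenbasis already yields a two-outcome $C$ with $I(X;C)=I(X;\bF)$ exactly, so assume they do not. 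Then extend the segment $[\rho_0,\rho_1]$ to the line it spans; this line meets the Bloch sphere in two pure states $|\pi_+\rangle,|\pi_-\rangle$, and since Bloch vectors are affine coordinates, $\sigma(t)=w(t)|\pi_+\rangle\langle\pi_+|+(1-w(t))|\pi_-\rangle\langle\pi_-|$ with $w$ affine in $t$. A direct $2\times2$ determinant computation gives $H(\sigma(t))=G(w(t))$ where $G(p)=h\!\big(\tfrac12(1-\sqrt{1-4\delta p(1-p)})\big)$ and $\delta=1-|\langle\pi_+|\pi_-\rangle|^2$ is fixed. Because $w$ is affine, the problem collapses to realizing the single one-parameter curve $G$, up to an affine summand, by a simplex-entropy along an affine path.

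This last step is the crux and where I expect the difficulty to lie. A parameter count shows three outcomes are the fewest that could possibly work, and one would first try to produce an explicit three-outcome channel, using that $\tfrac14(1-|\vec r(t)|^2)=\kappa\,\ell_-(t)\ell_+(t)$ factors into nonnegative affine functions on $[0,1]$ (the two roots are the parameters at which $\sigma$ becomes pure, and $[0,1]$ lies between them), and fixing the free parameters by matching the Taylor expansion of $G$ at $p=\tfrac12$. If a finite channel turns out not to suffice, the fallback is a continuum of outputs built from the Green's-function representation $H(\sigma(t))=(\text{affine})+\int_0^1(\min(t,r)-tr)\,dm(r)$ for a nonnegative measure $m$, after smoothing each ``tent'' $\min(t,r)-tr$ into a genuine mutual-information curve; the subtle point is that finite-alphabet classical curves are real-analytic on $(0,1)$, so tents are only limits of them, and one must verify that the particular algebraic curves $G$ coming from qubits lie in the classical family exactly rather than merely in its closure. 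This is precisely the non-Carath\'eodory content of the theorem.

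Finally, granting part 1, part 2 is short. Since $X$ is binary and $\bF-X-YZ$, the $H(\bF)$ terms cancel and $I(\bF;Y)-I(\bF;Z)=H(\bF|Z)-H(\bF|Y)=\sum_z q(z)H(\sigma(s_z))-\sum_y q(y)H(\sigma(s_y))$, with $s_y=q(0|y)$, $s_z=q(0|z)$ determined by $q(x,y,z)$. By part 1 there is a classical channel $q(c|x)$ with $H(\sigma(t))=H(\bar P(t))+\alpha t+\beta$; substituting and using $\sum_z q(z)s_z=\sum_y q(y)s_y=q(0)$ together with $\sum_z q(z)=\sum_y q(y)=1$, the $\alpha$ and $\beta$ terms cancel, leaving $I(\bF;Y)-I(\bF;Z)=\sum_z q(z)H(\bar P(s_z))-\sum_y q(y)H(\bar P(s_y))=I(C;Y)-I(C;Z)$ for the classical $C$ obtained by passing $X$ through $q(c|x)$ (which satisfies $C-X-YZ$ automatically). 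Hence every dimension-two $\bF$ is matched by a classical auxiliary, so the supremum is at most $\max_{p(c|x)}I(C;Y)-I(C;Z)$; the reverse inequality is trivial, and this classical maximum is attained by the usual cardinality bound together with compactness, so the supremum in part 2 is a maximum attained at a classical $\bF$.
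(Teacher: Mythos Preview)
Your reduction of part~1 to matching second derivatives (equivalently, matching $t\mapsto H(\sigma(t))$ up to an affine term) is exactly the paper's starting point, and your Bloch-sphere setup is equivalent to theirs. The genuine gap is at what you correctly flag as the crux: you outline two approaches but execute neither. Your first suggestion, a three-outcome channel, will not work in general. In the paper's parametrization, the second derivative of $H(\bF)$ has power-series coefficients in $Z$ behaving like $\frac{1}{2k+3}$ (plus a constant), whereas any finite mixture of binary channels $D_{a,b}$ produces coefficients of the form $\sum_j\lambda_j\theta_j^{k+1}$, a finite sum of geometrics; such a sum cannot reproduce the $1/(2k+3)$ polynomial decay for all $k$ when $\Delta>0$ (the non-commuting case). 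So a finite-alphabet $C$ is ruled out, and your fallback to a continuum is forced.

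The paper completes the argument with an explicit continuous mixture: it identifies, for each $\theta\in[0,1]$, a binary channel $D_{a_\theta,b_\theta}$ whose second derivative of output entropy equals $-\|\vec s-\vec r\|^2\sum_k\theta^{k+1}Z^k$, and then takes $C$ to be $D_{a_1,b_1}$ with probability $1-\Delta/\|\vec s-\vec r\|^2$ and $D_{a_{\theta^2},b_{\theta^2}}$ with $\theta$ uniform on $[0,1]$ otherwise. The identity $\int_0^1\theta^{2k+2}\,d\theta=\frac{1}{2k+3}$ then matches the coefficients exactly. This is closer in spirit to your Green's-function fallback than to a tent decomposition, and it sidesteps your worry about closure by landing on the qubit curve exactly rather than approximately. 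Your direct derivation of part~2 from part~1 is correct and cleaner than the paper's, which instead invokes the machinery of Theorem~\ref{thm:QCMain1}(a); your computation that the affine error terms cancel via $\sum_y q(y)s_y=\sum_z q(z)s_z=q(0)$ is the right way to see it.
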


We leave a prove of Theorem~\ref{thm:QCMain2} for Appendix~\ref{app:carve-d2}. Theorem~\ref{thm:QCMain1} is proved in the following two subsections.

\subsection{Proof of part (a) of Theorem~\ref{thm:QCMain1}}

We show that (a1) implies (a3), (a2) implies (a1), and (a3) implies (a2). The fact that (a1) implies (a3) is immediate noting that
\begin{align*}
I(\textbf{F};Y)-I(\textbf{F};Z)&=H(Y)-H(Z) -H(Y|\textbf{F})+H(Z|\textbf{F}),
\end{align*}
can be expressed in terms of conditional entropies given a quantum register. The Markov chain constraint $\textbf{F}-X-YZ$ can also be written as $H(YZ|X)=H(YZ|X,\textbf{F})$ or alternatively as $H(YZ|X)=H(YZX|\textbf{F})-H(X|\textbf{F})$ in terms of conditional entropies given the quantum register.

To show that (a2) implies (a1), take some arbitrary finite sets $\mathcal{X}_1, \dots, \mathcal{X}_m$, distribution $q(x_1,\cdots,x_m)$ and states $\rho^{\bF}_{x_1,\cdots,x_m}$. Let $X=(X_1,\cdots, X_m)$. Then by (a2) for any $\epsilon>0$, one can find a classical channel $q(c|x)$ such that $\big|I(X;\bF)-I(X;C)\big|\leq \epsilon$, \emph{for all} $p(x)$. We show that this implies that $\big|I(X_i;\bF)-I(X_i;C)\big|\leq \epsilon$ for $1\leq i\leq m$. Observe that
\begin{align*}
I(X_i;\bF)&=I(X_1X_2\cdots X_m;\bF)-I(X_1X_2\cdots X_m;\bF|X_i)\\&=I(X;\bF)-I(X;\bF|X_i)\\&=I(X;\bF)-\sum_{x_i}p(x_i)I(X;\bF|X_i=x_i).
\end{align*}
But $I(X;\bF|X_i=x_i)$ is nothing but the mutual information between $X$ and $\bF$ at the conditional distribution $p_{X|X_i}(x|x_i)$. Thus we have $\big|I(X;\bF)-I(X;C)\big|\leq \epsilon$, and $\big|I(X;\bF|X_i=x_i)-I(X;C|X_i=x_i)\big|\leq \epsilon$ for any $x_i$. This implies
$$\big|I(X_i;\bF)-I(X_i;C)\big|\leq \epsilon+\sum_{x_i}p(x_i)\epsilon=2\epsilon,$$
and
$$\big|H(X_i|\bF)-H(X_i|C)\big|\leq 2\epsilon.$$
Therefore we have approximated $H(X_i|\bF)$ with a classical $H(X_i|C)$ for all $i$ within $2\epsilon$. This implies that $$\text{QConvHull}(\mathcal{G})\subset\overline{\ConvHull(\mathcal{G})},$$
where $\overline{\ConvHull(\mathcal{G})}$ is the closure of $\ConvHull(\mathcal{G})$. But $\ConvHull(\mathcal{G})$ is a closed set since in the classical case we can bound the cardinality of the auxiliary $C$ with $|\mathcal{X}_1|\times|\mathcal{X}_2|\times|\mathcal{X}_3|\cdots |\mathcal{X}_m|+m$ using Carath\'{e}odory theorem.

Showing that (a3) implies (a2) is challenging and contains the main ideas of this section. Assume that for all distributions $q(x,y, z)$ we have
\begin{align}\label{eq:st2}
\sup_{\textbf{F}-X-YZ}&I(\textbf{F};Y)-I(\textbf{F};Z)=  \max_{p(c|x)}I(C;Y)-I(C;Z).
\end{align}
Note that the right hand side is computable since we can impose the restriction $|\mathcal{C}|\leq |\mathcal{X}|$. Let $\mathcal{P}$ denote the class of all classical channels $p(c|x)$ with $|\mathcal{C}|\leq |\mathcal{X}|$.\\

\noindent
\textbf{Rexpressing in terms of a max-min equation:}
Fix a distribution $q(x)$ on $\mathcal{X}$ and an arbitrary c-q channel $X\rightarrow \textbf{F}$ with states $\sigma_x,~x\in \mathcal{X}$. Without loss of generality we assume $q(x)>0$ for all $x\in \mathcal{X}$.
By our assumption~\eqref{eq:st2} we have
\begin{align*}
I(\textbf{F};Y)&-I(\textbf{F};Z)\leq \max_{p(c|x)\in \mathcal{P}}I(C;Y)-I(C;Z),~~~\forall q(y,z|x).
\end{align*}
Equivalently,
\begin{align*}
\max_{q(y,z|x)}\bigg[&I(\textbf{F};Y)-I(\textbf{F};Z)-\max_{p(c|x)\in \mathcal{P}}I(C;Y)-I(C;Z)\bigg]\leq 0.
\end{align*}
Take an arbitrary $\epsilon>0$. Lemma \ref{lemmaCard1} of Appendix~\ref{app:details} shows that one can find a finite set of channels $p(c_j|x), (j=1,2,\cdots,M_{\epsilon})$ to uniformly approximate the continuous function $I(C;Y)-I(C;Z)$ over the compact set $\mathcal{P}$ within $\epsilon$ (an $\epsilon$-net); i.e., for every $p(c|x)\in \mathcal{P}$ there exists $j$ such that
\begin{align*}
\bigg|\big(&I(C;Y)-I(C;Z)\big)-\big(I(C_j;Y)-I(C_j;Z)\big)\bigg|\leq \epsilon,~~\forall q(y,z|x).
\end{align*}
Thus
\begin{align*}\max_{q(y,z|x)}\bigg[I(&\textbf{F};Y)-I(\textbf{F};Z)-\max_{1\leq j\leq M_\epsilon}I(C_j;Y)-I(C_j;Z)\bigg]\leq \epsilon.\end{align*}
This means that
\begin{align*}&\max_{q(y,z|x)}\bigg[I(\textbf{F};Y)-I(\textbf{F};Z)-\max_{\lambda_j\geq 0: \, \sum_{j}\lambda_j=1}\sum_{j}\lambda_j\big(I(C_j;Y)-I(C_j;Z)\big)\bigg]\leq \epsilon.
\end{align*}
Alternatively,
\begin{align}\label{eq:maxmineq}
\max_{q(y,z|x)}\,\, \min_{\lambda_j\geq 0:\, \sum_{j}\lambda_j=1}\bigg[I(\textbf{F};Y)-I(\textbf{F};Z) -\sum_{j}\lambda_j\big(I(C_j;Y)-I(C_j;Z)\big)\bigg]\leq \epsilon.
\end{align}\\

\noindent
\textbf{Exchanging the order of min and max:}
The main step in the proof is to use Lemma \ref{lemmaCard2} of Appendix~\ref{app:details} to exchange the order of maximum and minimum to get
\begin{align}\label{eq:minmaxeq}
\min_{\lambda_j\geq 0:\sum_{j}\lambda_j=1}\,\, \max_{q(y,z|x)}\bigg[I(\textbf{F};Y)-I(\textbf{F};Z)-\sum_{j}\lambda_j\big(I(C_j;Y)-I(C_j;Z)\big)\bigg]\leq \epsilon.
\end{align}
Thus there exists a choice of $\lambda_j$, \emph{not depending on $q(y,z|x)$}, such that
\begin{align*}
&\max_{q(y,z|x)}\bigg[I(\textbf{F};Y)-I(\textbf{F};Z) -\sum_{j}\lambda_j\big(I(C_j;Y)-I(C_j;Z)\big)\bigg]\leq \epsilon,
\end{align*}
which is equivalent to
\begin{align*} 
\max_{q(y,z|x)}&\left[H(\textbf{F}|Z)-H(\textbf{F}|Y) -\sum_{j}\lambda_j\big(H(C_j|Z)-H(C_j|Y)\big)\right]\leq \epsilon.
\end{align*}

\noindent
\textbf{Implication of the max-min exchange:}
Assuming $q(y, z|x) = q(y|x)q(z|x)$ we obtain
\begin{align*}
\max_{q(z|x)}\left[H(\textbf{F}|Z)-\sum_{j}\lambda_j H(C_j|Z)\right] +\max_{q(y|x)}\left[-H(\textbf{F}|Y)+\sum_{j}\lambda_j H(C_j|Y)\right]\leq \epsilon.
\end{align*}
We can express the two maximums in terms of the same channel as follows
\begin{align}\label{eq:min-max-ineq}
\max_{q(z|x)} \left[H(\textbf{F}|Z)-\sum_{j}\lambda_j H(C_j|Z)\right]\leq \epsilon+\min_{q(z|x)}\left[H(\textbf{F}|Z)-\sum_{j}\lambda_j H(C_j|Z)\right].
\end{align}

Let us define
\begin{align*}W(p(x))&=H(\textbf{F})-\sum_{j}\lambda_j H(C_j)\\&
=H\left(\sum_x p(x)\sigma_x\right)-\sum_{j}\lambda_j H\left(\sum_x p(x)p(c_j|x)\right).
\end{align*}
Then the left hand side of \eqref{eq:min-max-ineq}
is the upper concave envelope\footnote{The upper concave envelope of a function $f(t)$ is the smallest \emph{concave} function $g(t)$ such that $f(t)\leq g(t)$ for all $t$. Lower convex envelope is defined similarly.} of the graph of $W(p(x))$ whereas the right hand side
is the lower convex envelope of $W(p(x))$. We know that the difference between the two is at most $\epsilon$. Were these two are exactly equal, the function $W(p(x))$ must have been linear in $p(x)$ for \emph{all} $p(x)$ (and not just the $q(x)$ we started with). Therefore the function $W(p(x))$ is almost linear.

The function
\begin{align*} V(p(x))&=I(\textbf{F};X)-\sum_{j}\lambda_jI(C_j;X)\\&=I(\textbf{F};X)-I(D;X)\end{align*}
is equal to $W(p(x))$ plus a linear term in $p(x)$, where here $D$ is defined as $D=(U,C_U)$ where $U$ is a random variable, independent of $X$ taking value $j$ with probability $\lambda_j$. As a result, the upper concave envelope and lower convex envelope of $V(p(x))$ at $q(x)$ are also $\epsilon$-close to each other.

The function $V(p(x))$ is zero when $p(x)$ assigns probability one to a single symbol (i.e. on the vertices of the probability simplex). Thus its lower convex envelope is less than or equal to the zero function, whereas its upper concave envelope is greater than or equal to zero. Since the gap between the two is at most $\epsilon$ at the given $q(x)$ and $q(x)>0$ for all $x$,  $|V(p(x))|$ should be close to zero for every $p(x)$. Thus $$\big|I(\textbf{F};X)-I(D;X)\big|\leq O(\epsilon), \qquad \forall p(x).$$ This completes the proof.

\subsection{Proof of part (b) of Theorem~\ref{thm:QCMain1}}
\begin{figure*}
\begin{align*}
\begin{matrix}
           &\vline & 1& 2 & 3 & 4\\
             \hline
\alpha & \vline &(1, 0,0,0) & (0, 1,0,0) & (0, 0,1,0) & (0, 0,0,1)\\
\beta & \vline &(0, 1,1,0) & (1, 0,0,-1) & (1, 0,0,1) & (0, 1,-1,0)\\
\gamma & \vline &(1, 1,1,1) & (1, -1,1,-1) & (1, -1,-1,1) & (1, 1,-1,-1)\\
\delta & \vline &(1, -1,0,0) & (1, 1,0,0) & (0, 0,1,1) & (0, 0,1,-1)\\
\epsilon & \vline &(-1, 1,1,1) & (1, 1,1,-1) & (1, -1,1,1) & (1, 1,-1,1)\\
\zeta & \vline &(1, 0,1,0) & (0, 1,0,1) & (1, 0,-1,0) & (0, 1,0,-1)\\
\end{matrix}
\end{align*}
\caption{Definition of vectors needed for the proof of part (b) of Theorem~\ref{thm:QCMain1}.}
\label{fig:nadd}
\end{figure*}
The counterexample is inspired by the examples of (classical) channels for which the one-shot entanglement-assisted zero-error capacity is greater than the zero-error capacity~\cite{Zero1, Zero2}. Here we explain the details based on the Kochen-Specker type channel of \cite{Zero1}. For more information on Kochen-Specker sets and their importance see~\cite[Chapter 7]{Peres}. See also~\cite{MSS} for a generalization of Kochen-Specker sets. 

Let $\mathcal{M}=\{\alpha, \beta, \gamma, \delta, \epsilon, \zeta\}$ and $\mathcal{X}=\{\theta_i:\,\, \theta\in \mathcal{M}, 1\leq i\leq 4\}$. Moreover, let $\mathcal{Y} = \{ S_1, S_2,\dots, S_{18}  \}$ where $S_i$'s are certain four-elements subsets of $\mathcal{X}$:
\begin{align*}
S_1&=\{\alpha_1, \alpha_4, \beta_1, \beta_4\}, \quad S_2=\{\gamma_1, \gamma_4, \delta_1, \delta_4\}, \quad S_3=\{\epsilon_1, \epsilon_4, \zeta_1, \zeta_4  \},\\
 S_4&=\{\alpha_2, \alpha_3, \beta_2, \beta_3  \}, \quad S_5=\{ \gamma_2, \gamma_3, \delta_2, \delta_3  \}, \quad
S_6 =\{\epsilon_2, \epsilon_3, \zeta_2, \zeta_3   \},\\
S_7&=\{\alpha_1, \alpha_3, \zeta_2, \zeta_4  \}, \quad S_8=\{ \beta_2, \beta_4, \gamma_1, \gamma_3  \}, \quad
S_9 =\{\delta_2, \delta_4, \epsilon_1, \epsilon_3  \},\\
 S_{10}&=\{\alpha_2, \alpha_4, \zeta_1, \zeta_3  \}, \quad S_{11}=\{ \beta_1, \beta_3, \gamma_2, \gamma_4  \}, \quad
S_{12} =\{\delta_1, \delta_3, \epsilon_2, \epsilon_4  \},\\
S_{13}&=\{\alpha_1, \alpha_2, \delta_3, \delta_4  \}, \quad S_{14}=\{ \beta_1, \beta_2, \epsilon_3, \epsilon_4  \}, \quad
S_{15} =\{\gamma_1, \gamma_2, \zeta_3, \zeta_4  \},\\
S_{16}&=\{\alpha_3, \alpha_4, \delta_1, \delta_2  \}, \quad S_{17}=\{ \beta_3, \beta_4, \epsilon_1, \epsilon_2  \}, \quad
S_{18} =\{\gamma_3, \gamma_4, \zeta_1, \zeta_2  \}.
\end{align*}
Finally let $\bF$ be a $4$-level quantum system and for $\theta_i\in \mathcal{X}$ define $\rho_{\theta_i}^{\bF} = |\psi_{\theta_i}\rangle\langle \psi_{\theta_i}|$ where $|\psi_{\theta_i}\rangle$'s are proportional to the vectors given in Fig. \ref{fig:nadd}

Note that $\vert \psi_{\theta_i}\rangle$ and $\vert \psi_{\theta'_j}\rangle$ are orthogonal if and only if $\theta=\theta'$ or there exists $k$, $1\leq k\leq 18$, such that $\theta_i, \theta'_j\in S_k$. In fact each of the $18$ subsets
$\{  \vert \psi_{\theta_i}\rangle : \,\,\theta_i\in S_k \}$ for all $k$, as well as the $6$ subsets $\{\vert \psi_{\theta_i}\rangle: \,\, i=1,\dots, 4  \}$ for all $\theta\in \mathcal{M}$, consist an orthonormal basis for the Hilbert space of $\bF$. To represent these orthogonality relations form a graph on the vertex set $\mathcal{X}$ and connect two vertices $\theta_i$ and $\theta'_{j}$ if $\langle \psi_{\theta_i}\vert \psi_{\theta'_j}\rangle =0$. This orthogonality graph contains $18+6$ cliques corresponding to the above orthonormal bases, and the edge set of the graph is the union of these cliques. The independence number of this graph, namely the maximum number of vertices no two of which are connected, is $5$.

Now consider the following distribution on $XMY$. Let $p(\theta_i)=\frac{1}{24}$ be the uniform distribution on $X$. The distribution on $M$ is $p(\theta'| X=\theta_i) =1$ iff $\theta'=\theta$. To define the distribution on $Y$ note that for each $\theta_i\in \mathcal{X}$, there are exactly three indexes $k$ such that $\theta_i\in S_k$. Let $p(S_k| X=\theta_i)=1/3$ iff $\theta_i\in S_k$. Observe that the one-shot zero-error capacity of the channel $X\rightarrow Y$ (determined by $p(S_k| X=\theta_i)$) is $\log 5$ because the independence number of the orthogonality graph is $5$ (see \cite{Zero1} for more details).
We finally define the state of $\bF$ to be $\rho_{\theta_i}$ when $X=\theta_i$.

Now it is easy to verify that
\[ H(X\vert \bF)= \log 6, \quad H(M\vert \bF) = H(M) = \log 6, \quad H(Y\vert \bF) = H(Y) = \log 18.  \]
These equations are all based on the fact that the average of states $\rho_{\theta_i}$ when $\theta_i$ ranges over a clique of the orthogonality graph, is equal to the maximally mixed state.
So by the above notation
$$(p(x, \theta, y), \log 6, \log 6, \log 18) \in \text{QConvHull}(\mathcal{G}),$$
where here $n=3$ and $X_1=X$, $X_2=M$ and $X_3=Y$. To proof $\text{QConvHull}(\mathcal{G}) \neq \ConvHull(\mathcal{G})$ we show that this point does not belong to $\ConvHull(\mathcal{G})$. Suppose there exists a classical random variable $C$ such that
$$ H(X\vert C)= \log 6, \quad H(M\vert C) = H(M) = \log 6,\quad H(Y\vert C) = H(Y) = \log 18. $$
The above three equations imply that (the proof comes later)
\[   I(C; M) =0, \quad  H(X\vert CM)=0 ,\quad MC- X- Y, \quad H(M\vert CY)=0.\]
Pick a $c$ such that $p(C=c)\neq 0$ and consider the distribution $p(x, \theta , y| C=c)$. By the first equation $p(\theta|C=c) = p(\theta) = 1/6$, and by the second equation $X$ is deterministically computed from $M$ (and $C=c$). Using the structure of the distribution $p(x, \theta)$ we find that for every $\theta\in \{\alpha, \beta , \gamma, \delta, \epsilon, \zeta  \}$ there exists $i$, $1\leq i\leq 4$,  such that $p(x, \theta | C=c) = 1$ iff $x=\theta_i$. We denote the set of these six $\theta_i$ by $T$. Thus $|T|= 6$ and for every $\theta\in \mathcal{M}$ there exists $i$ such that $\theta_i\in T$. $MC-X-Y$ implies that $p(y| C=c, M=\theta, X=\theta_i) = p(y| X=\theta_i).$ So $y$ is uniformly distributed among the three subsets $S_k$ that contain $\theta_i$. Finally the last equation says that $y$ (and $C=c$) uniquely determines $\theta$. This means that, there is no $S_k$ that contains more that two elements of  $T$. As a result, $T$ is an independent set of the orthogonality graph of size $6$. This is a contradiction since the independence number of this graph is $5$.

A more intuitive argument is based on a zero-error communication protocol over the channel $X\rightarrow Y$ using $C$ as \emph{shared randomness}. We take $M$ as the message to be transmitted from the sender to the receiver. Note that by the first equation $M$ is independent of $C$ (the shared randomness), so this analogy makes sense.
$H(X\vert CM)=0$ implies that $X$ is a function of $CM$. So the sender computes $X$ from $M$ and $C$, and sends it over the channel. By $MC-X-Y$ given the input, the output of the channels is independent of $M$ and $C$. Finally the last equation means that the receiver can decode $M$ from $Y$ and $C$, and this can be done with \emph{no error}. As a result the one-shot zero-error capacity of $X\rightarrow Y$ is at least $H(M) = \log 6$ which is a contradiction.

 We finish this section by proving that the three equations
$$ H(X\vert C)= \log 6, \quad H(M\vert C) = H(M) = \log 6,\quad H(Y\vert C) = H(Y) = \log 18, $$
imply
\[    I(C; M) =0, \quad  H(X\vert CM)=0, \quad MC- X- Y, \quad H(M\vert CY)=0.\]
The first equation directly follows from $H(M\vert C) = H(M) = \log 6$. To show the third and last equations we write
\begin{align*}  H(X\vert CY)&=H(XY\vert C)-H(Y\vert C)\\&= H(X\vert C)+H(Y\vert CX)-H(Y)\\&= H(X\vert C)+H(Y\vert MCX)-H(Y)\\&\leq H(X\vert C)+H(Y\vert X)-H(Y)\\&=
\log 6+\log 3-\log 18=0,
\end{align*}
where in the third line we have used the fact that $M$ is uniquely determined in terms of $X$. The above equation implies that $H(M\vert CY)=0$. Furthermore, the inequality should hold with equality. This gives us the constraint $MC- X- Y$.
To show the second identity we write
\begin{align*}H(X\vert CM)&=H(XM\vert C)-H(M\vert C)
\\&=H(X\vert C)-H(M\vert C)
\\&=\log 6-\log 6=0,\end{align*}
where we have again used the fact that $M$ is a function of $X$.

%***********************************

\section{Dimension bounds and the mutual information curve}\label{sec:DB}
Let us consider the optimization problem
$$\sup_{\textbf{F}-X-YZ}I(\textbf{F};Y)-I(\textbf{F};Z),$$
and assume that we can restrict to quantum registers with dimension bound $d^*$ that universally works for all $p(x,y,z)$. That is, assume that there is a bound $d^*$ that depends only on $|\mathcal{X}|,|\mathcal{Y}|,|\mathcal{Y}|$ such that in the above optimization we may restrict the dimension of $\bF$ to be $d^*$.
This is the problem considered in Section \ref{sec:introtool}.

Fix a distribution $q(x)$ on $\mathcal{X}$ and an arbitrary c-q channel $X\rightarrow \textbf{F}$ which maps $x\in \mathcal X$ to $\sigma_x$. Without loss of generality we assume $q(x)>0$ for all $x\in \mathcal{X}$.
By our assumption we have
\begin{align*}&I(\textbf{F};Y)-I(\textbf{F};Z)\leq \max_{\textbf{E}: \dim(\textbf{E})\leq d^*}I(\textbf{E};Y)-I(\textbf{E};Z),~~~~\forall q(y,z|x),\end{align*}
where the maximum is taken over all $\textbf{E}$ with the given dimension bound that satisfy $\textbf{E}-X-YZ$ (indeed over all $\rho^{\bE}_{x}, x\in \mathcal{X}$ where $\rho^{\bE}_{x}$ is of dimension $d^*$ for all $x$). Equivalently,
\begin{align*}
&\max_{q(y,z|x)}\bigg[I(\textbf{F};Y)-I(\textbf{F};Z) -\max_{\textbf{E}: \dim(\textbf{E})\leq d^*}I(\textbf{E};Y)-I(\textbf{E};Z)\bigg]\leq 0.
\end{align*}
Take an arbitrary $\epsilon>0$. Using a similar argument as in Lemma \ref{lemmaCard1} of Appendix~\ref{app:details}, we can discretize the set of all $\textbf{E}$ with the given dimension bound that satisfy $\textbf{E}-X-YZ$, i.e. to find a finite set of $\mathbf{E}_j$ with $\rho^{\bE_j}_{xj},~x\in \mathcal{X}, (j=1,2,\cdots,M_{\epsilon})$ to uniformly approximate the continuous function $I(\textbf{E};Y)-I(\textbf{E};Z)$ over the compact set of all such auxiliary $\textbf{E}$ with the given dimension bound within $\epsilon$ (an $\epsilon$-net).
Thus
\begin{align*}
&\max_{q(y,z|x)}\bigg[I(\textbf{F};Y)-I(\textbf{F};Z)-\max_{1\leq j\leq M_\epsilon}I(\mathbf{E}_j;Y)-I(\mathbf{E}_j;Z)\bigg]\leq \epsilon.
\end{align*}
This means that
\begin{align*}
&\max_{q(y,z|x)}\bigg[I(\textbf{F};Y)-I(\textbf{F};Z)-\max_{\lambda_j\geq 0: \, \sum_{j}\lambda_j=1}\sum_{j}\lambda_j\big(I(\mathbf{E}_j;Y)-I(\mathbf{E}_j;Z)\big)\bigg]\leq \epsilon.
\end{align*}
Alternatively,
\begin{align}\label{eq:maxmineq}
&\max_{q(y,z|x)}\,\, \min_{\lambda_j\geq 0:\, \sum_{j}\lambda_j=1}\bigg[I(\textbf{F};Y)-I(\textbf{F};Z)-\sum_{j}\lambda_j\big(I(\mathbf{E}_j;Y)-I(\mathbf{E}_j;Z)\big)\bigg]\leq \epsilon.
\end{align}
Following very similar arguments as before and exchanging the order of max and min, we conclude that
$$\big|I(\textbf{F};X)-\sum_{j}\lambda_jI(\textbf{E}_j;X)\big|\leq O(\epsilon), \qquad \forall p(x),$$
for a choice of $\lambda_j$'s independent of $q(y, z|x)$.
This equation suggests that to study dimension bounds, it would help to understand the behavior of the concave function $p(x)\mapsto I(\textbf{F};X)$ in terms of the dimension of $\bF$. Indeed the question of finding dimension bounds is reduced to the question of whether or not these functions become more and more complicated (in structure) as we increase the dimension, so that they cannot be written in terms of those functions with smaller dimensions.

%%%%%%%%%%%%%%%%

\section{Conclusion}
In this paper we presented a tool for proving dimension bounds on auxiliary quantum systems. Motivated by the use of Carath\'{e}odory theorem in classical information theory, we formalized the problem of quantum conditioning. We showed that quantum conditioning coincides with (classical) conditioning in a very special case but in general goes beyond it. We also studied the role of entanglement in \emph{classical} communication problems from an \emph{information theoretic} point of view. We observed that unlike the problem of point-to-point channel capacity, entanglement does help in the problem of simulation of correlations by studying CHSH-type correlations. This may not seem surprising given that shared \emph{randomness} also helps  in the problem of simulation of correlations. However, the situation is different in the Gray-Wyner problem. Given that shared randomness does not increase the capacity in the classical Gray-Wyner problem (which is a communication problem), should entanglement turns out to be helpful in this problem, it would serve as an interesting example. To pursue this further, one needs to develop new tools for bounding the dimension of auxiliary quantum systems.

In classical information theory, all of the known tools for bounding the cardinality are affirmative, meaning that they either can be used to prove a cardinality bound, or do not yield any result at all. We know of no expression where a cardinality bound is rigorously proven not to exists.
Thus an interesting problem would be to study the plausibility of an explicit example where a single auxiliary quantum register shows up, and there is no dimension bound. A good candidate could be studying the dimension of $\bF$ in equation~\eqref{eq:f-q} since here a dimension bound may not exist in general. Based on evidences from numerical simulations in~\cite{PalVertesi} it is \emph{conjectured} that the maximum violation of a particular Bell inequality (called I3322 inequality) with shared entanglement does not happen in finite dimensions. Although this conjecture is about simulating bipartite correlations in a single shot and does not directly apply into our framework in Theorem~\ref{thm:q-region}, it suggests that we may not have a dimension bound on $\bF$ in equation~\eqref{eq:f-q}.

\vspace{.2in}

\section*{Acknowledgments}
We are thankful to Andreas Winter for letting us know about his result on entanglement-assisted Gray-Wyner problem and fruitful discussions. We are also grateful to unknown referees whose comments significantly improved the presentation of the paper. SB was in part supported by National Elites Foundation and by a grant from IPM (No.\ 91810409).
AG was in part supported by a grant from IPM (No.\ CS1390-3-01).

%*****************************************

\appendix

\vspace{.4in}
\begin{center}
\textbf{\Large Appendix}
\end{center}

\section{Some notations and useful lemmas}\label{app:useful-lem}
We frequently use the gentle measurement lemma in this paper:
\begin{lemma}\label{lem:GML} \emph{(Gentle measurement lemma \cite{Gentle})}
Let $\rho$ be a quantum state and $\{M_0, M_1\}$ be a binary measurement such that $\tr(M_0^{\dagger}M_0\rho)\geq 1-\epsilon$. If we measure $\rho$ with $\{M_0, M_1\}$ and obtain $0$ as the outcome, then the post measurement state $\rho'$ proportional to $M_0\rho M_0^{\dagger}$ satisfies
$$\|\rho - \rho'\|_{1} \leq 2\sqrt{\epsilon}.$$
\end{lemma}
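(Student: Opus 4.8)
The plan is to reduce the statement to a fidelity estimate between two pure states. First I would replace $\rho$ by a purification: let $|\psi\rangle$ be a unit vector on $\bF\otimes R$ (with $R$ an auxiliary reference) satisfying $\tr_R|\psi\rangle\langle\psi|=\rho$, apply $M_0\otimes I_R$ to obtain the subnormalized vector $(M_0\otimes I_R)|\psi\rangle$, and let $|\phi\rangle$ be its normalization. Then $\tr_R|\phi\rangle\langle\phi|$ is exactly the post-measurement state $\rho'$, and since the trace norm is non-increasing under partial trace, $\|\rho-\rho'\|_1\le\||\psi\rangle\langle\psi|-|\phi\rangle\langle\phi|\|_1$. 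Now I would invoke the exact identity $\||\psi\rangle\langle\psi|-|\phi\rangle\langle\phi|\|_1=2\sqrt{1-|\langle\psi|\phi\rangle|^2}$, valid for any two unit vectors, so the whole lemma reduces to proving $|\langle\psi|\phi\rangle|^2\ge 1-\epsilon$.

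For the overlap estimate, set $\Lambda=M_0^{\dagger}M_0$, so that $0\le\Lambda\le I$ by the measurement condition, and let $p=\tr(\Lambda\rho)=\langle\psi|(\Lambda\otimes I_R)|\psi\rangle\ge 1-\epsilon$ be the probability of outcome $0$. I would assume without loss of generality that $M_0\ge 0$, hence $M_0=\sqrt{\Lambda}$; the isometric part in the polar decomposition of $M_0$ is irrelevant for the conclusion, since taking it nontrivial would already break the claim even for $\epsilon=0$. Then $\langle\psi|\phi\rangle=p^{-1/2}\langle\psi|(\sqrt{\Lambda}\otimes I_R)|\psi\rangle=p^{-1/2}\tr(\sqrt{\Lambda}\,\rho)$. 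The one genuinely spectral ingredient is the operator inequality $\sqrt{\Lambda}\ge\Lambda$, which holds because $\sqrt{t}\ge t$ on $[0,1]$; applied inside the trace this gives $\tr(\sqrt{\Lambda}\,\rho)\ge\tr(\Lambda\rho)=p$, so $\langle\psi|\phi\rangle\ge p^{-1/2}\,p=\sqrt{p}\ge\sqrt{1-\epsilon}$.

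Combining the two parts, $|\langle\psi|\phi\rangle|^2\ge 1-\epsilon$ forces $\||\psi\rangle\langle\psi|-|\phi\rangle\langle\phi|\|_1\le 2\sqrt{\epsilon}$, and monotonicity under $\tr_R$ yields $\|\rho-\rho'\|_1\le 2\sqrt{\epsilon}$, as claimed. I do not expect a serious obstacle here: the result is standard folklore, and the only points requiring care are the mixed-to-pure reduction (handled by purification plus monotonicity of the trace norm) and the bookkeeping of the polar part of $M_0$. An alternative, purification-free route would diagonalize $\rho=\sum_i p_i|e_i\rangle\langle e_i|$ and combine concavity of $t\mapsto\sqrt{t}$ with Cauchy--Schwarz, but the purification argument above is the cleanest.
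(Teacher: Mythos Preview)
The paper does not prove this lemma; it is stated as a known result with a citation to Winter's original paper \cite{Gentle}, so there is no in-paper argument to compare against. Your proof is correct and is essentially the standard one: purify, use monotonicity of the trace norm under partial trace to reduce to pure states, apply the exact formula $\||\psi\rangle\langle\psi|-|\phi\rangle\langle\phi|\|_1=2\sqrt{1-|\langle\psi|\phi\rangle|^2}$, and bound the overlap via $\sqrt{\Lambda}\ge\Lambda$ for $0\le\Lambda\le I$.

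Your remark about the polar part of $M_0$ is well taken and worth making explicit: as literally stated (post-measurement state $M_0\rho M_0^{\dagger}/p$ for an arbitrary Kraus operator $M_0$), the lemma is false, since $M_0=U$ unitary gives $\epsilon=0$ but $\rho'=U\rho U^{\dagger}$. The intended reading---and the only way the paper ever invokes the lemma, namely with typical projectors $\Pi_\delta^{x^n}$, $\Pi_\delta^{y^n}$, $\Pi_\delta^n$---is that $M_0\ge 0$ (equivalently, the L\"uders rule $\rho'=\sqrt{\Lambda}\rho\sqrt{\Lambda}/p$). Under that convention your argument goes through verbatim.
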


We fix an orthonormal basis $\{\vert v_1\rangle, \dots, \vert v_{\dim \bF} \rangle\}$ for the Hilbert space of $\bF$ and write all the transposes ($T$) with respect to this basis. Moreover, we set
$$\vert \Phi\rangle_{\bE\bF} = \frac{1}{\sqrt{\dim \bF}}  \sum_{i=1}^{\dim \bF}  \vert v_i\rangle_{\bE}\vert v_i\rangle_{\bF},$$
to be the maximally entangled state over $\bE\bF$ where $\bE$ is a copy of $\bF$.

We assume the reader is familiar with the notions of typicality and conditional typicality (see for example~\cite[Chapter 12]{NielsenChuang} and~\cite[Chapter 14]{Wilde}). Here we only fix some notations. Throughout this paper by typicality we mean strong typicality.
The state of a c-q system $X\bF$ has the form
$$\rho_{X\bF} = \sum_x p(x)|x\rangle \langle x| \otimes \rho_x$$
and subsystem $\bF$ has the average state $\rho=\sum_x p(x)\rho_x$. The $\delta$-typical subspace of $\rho$ is determined by a projection $\Pi_{\rho, \delta}^n$ acting on the Hilbert space of $\bF^n$. We may drop the index $\rho$ in $\Pi_{\rho, \delta}^n$ when there is no confusion.

For every $\delta, \epsilon>0$ and sufficiently large $n$ we have
$$\tr \left(   \Pi_{\delta}^n \rho^{\otimes n}  \right) \geq 1- \epsilon,$$
$$ (1-\epsilon)2^{n(H(\bF)-c\delta)}\leq \tr~\Pi_{\delta}^n \leq 2^{n(H(\bF)+c\delta)},$$
and
$$2^{-n(H(\bF)+c\delta)} \Pi_{\delta}^n \leq \Pi_{\delta}^n\rho^{\otimes n}\Pi_{\delta}^n \leq 2^{-n(H(\bF)-c\delta)} \Pi_{\delta}^n,$$
where $c$ is some constant.

For a given $x^n$ we define $\rho_{x^n} = \rho_{x_1}\otimes \cdots \otimes \rho_{x_n}$. Moreover, by $\Pi_{\rho, \delta}^{x^n}$ we mean the conditional $\delta$-typical projection. Again we may denote $\Pi_{\rho, \delta}^{x^n}$ by $\Pi_{\delta}^{x^n}$ when there is no confusion. For every $\delta$-typical $x^n$ we have
$$\tr\left(\Pi_{\delta}^{x^n}\rho_{x^n}\right)\geq 1-\epsilon,$$
$$(1-\epsilon)2^{n(H(\bF|X)-\delta'')}\leq \tr~\Pi_{\delta}^{x^n} \leq 2^{n(H(\bF|X)+\delta'')},$$
and
\begin{align*}2^{-n(H(\bF|X)+\delta'')} \Pi_{\delta}^{x^n}& \leq \Pi_{\delta}^{x^n}\rho_{x^n}\Pi_{\delta}^{x^n} \leq 2^{-n(H(\bF|X)-\delta'')} \Pi_{\delta}^{x^n},\end{align*}
where $\delta''=\delta'|\mathcal{X}|\log (\dim \bF)+c\delta+|\mathcal{X}|c\delta\delta'$.  We further have
$$\tr(\Pi^n_{\delta}\rho_{x^n})\geq 1-\epsilon.$$

The following important lemmas, which might be of independent interest, will be used in the proof of Theorem~\ref{thm:q-region}.

\begin{lemma}\label{lem:purification}
Suppose $X-\bF - Y$ where $X, Y$ are classical random variables and $\bF$ is a quantum register which for every $y\in \mathcal Y$ is purified by $\bE$. Then having access to $\bE$ one can generate $X$ independent of $Y$.
\end{lemma}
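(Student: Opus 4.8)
The plan is to give an operational protocol that, given access to $\bE$ and the classical value of $Y$, produces a random variable with the correct joint distribution with $Y$, namely distributed as $X$ and independent of $Y$ given $Y$ (equivalently, generating $X$ with $p(x)$ regardless of the observed $y$).

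First I would unpack the hypothesis. Since $X - \bF - Y$ holds with $X,Y$ classical and $\bF$ quantum, the relevant joint object is a c-q state of the form $\rho_{XY\bF}=\sum_{x,y}p(x,y)\,|x\rangle\langle x|\otimes|y\rangle\langle y|\otimes \rho^{\bF}_{x,y}$, and the Markov condition $I(X;Y|\bF)=0$ forces, for the state to make sense with all three subsystems present, that conditioned on $\bF$ the variables $X$ and $Y$ decouple. Concretely I would argue that there is a decomposition of the Hilbert space of $\bF$ and of the state consistent with $X-\bF-Y$ in which $\rho^{\bF}_{x,y}$ factors appropriately; the cleanest route is to use the fact that $X-\bF-Y$ means $\rho_{XY\bF}$ can be prepared by first generating $\bF$, then generating $X$ from $\bF$ by one channel and $Y$ from $\bF$ by an independent channel — this is exactly the operational meaning of a classical–quantum Markov chain when we are allowed to consume $\bF$ twice in idealized form, which is what the purification $\bE$ will let us simulate. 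The key point is that for \emph{each fixed} $y$, the conditional state $\rho^{\bF}_{y}=\sum_x p(x|y)\rho^{\bF}_{x,y}$ is purified by $\bE$, and the hypothesis says this purification can be taken uniformly in $y$.

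The main step: fix any $y_0$. The state $\rho^{\bF}_{y_0}$ on $\bF$ carries the measurement/channel that extracts $X$ — since $X-\bF-Y$, there is a measurement (POVM) $\{\Lambda_x\}$ on $\bF$ such that measuring $\rho^{\bF}_{y}$ with $\{\Lambda_x\}$ yields outcome $x$ with probability $p(x|y)$; but because $X-\bF-Y$ forces the extraction of $X$ to not depend on $y$ once we have $\bF$, the \emph{same} POVM applied to $\rho^{\bF}_y$ must actually yield $p(x)$ independent of $y$ — otherwise $I(X;Y|\bF)>0$. Now the purification: $|\phi_y\rangle_{\bE\bF}$ purifies $\rho^{\bF}_y$ for each $y$. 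Any two purifications of the \emph{same} state are related by a unitary on the purifying system; here the states $\rho^{\bF}_y$ differ with $y$, so I would instead use the standard transpose/Uhlmann trick — given $\bE$, the party holding $\bE$ can, for the fixed reference $y_0$, apply the measurement that is the transpose of $\{\Lambda_x\}$ (with respect to a Schmidt basis of $|\phi_{y_0}\rangle$), and by the mirror-lemma (the same device used in Lemma~\ref{lem:GML}-type arguments and in remote state preparation) obtain outcome $x$ with probability $p(x)$, with the residual state on $\bF$ being $\rho^{\bF}_{x,y_0}$ up to normalization. The crucial observation is that this procedure touches only $\bE$ and uses only the fixed data of the reference purification, so it is manifestly independent of the actual $y$; combined with the fact that applying $\{\Lambda_x\}$ to $\bF$ itself would have produced $p(x)$ independent of $y$, the transposed measurement on $\bE$ produces $X$ with distribution $p(x)$ and independent of $Y$. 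I would then check that the output is genuinely $X$-distributed and not merely something with the right marginal, by verifying the joint statistics against $\rho_{XY\bF}$.

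The hard part will be making rigorous the claim that the measurement extracting $X$ from $\bF$ can be chosen \emph{independent of} $y$ — i.e., translating $I(X;Y|\bF)=0$ into the existence of a single POVM that works simultaneously for all conditional states $\rho^{\bF}_y$ — and then correctly deploying the Uhlmann/transpose argument when the purified states vary with $y$. The natural way to handle this is to invoke the structure theorem for (c-q) states saturating strong subadditivity, which is already cited in the excerpt (Haydenetal04), to get a block decomposition $\mathcal{H}_{\bF}=\bigoplus_k \mathcal{H}_{\bF^L_k}\otimes\mathcal{H}_{\bF^R_k}$ in which $X$ lives (with $Y$) on the left factors and $Y$ lives (with $X$) on the right factors with $X\perp Y$ inside each block; in each block the $X$-extraction is a projective measurement onto the block followed by a fixed measurement on the left factor, visibly $y$-independent, and the purification by $\bE$ then lets the $\bE$-holder simulate this by the transposed measurement. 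So the proof reduces to: (i) apply the SSA-equality structure to $\rho_{XY\bF}$; (ii) read off the $y$-independent $X$-extraction channel on $\bF$; (iii) transport it to $\bE$ via the purification using the transpose trick; (iv) verify the output distribution. I expect step (iii), the bookkeeping of which basis the transpose is taken in and why the residual/normalization works out uniformly in $y$, to be the most delicate, but it is entirely standard remote-state-preparation technology.
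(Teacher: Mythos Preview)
You correctly identify the Hayden--Jozsa--Petz--Winter structure theorem as the key tool, and the block decomposition you describe matches the paper's. But two things go wrong.

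First, a conceptual slip: you write that the generated $X$ should be distributed as $p(x)$ ``regardless of the observed $y$,'' and that the POVM on $\bF$ ``must actually yield $p(x)$ independent of $y$ --- otherwise $I(X;Y|\bF)>0$.'' This misreads the lemma. The Markov chain $X-\bF-Y$ does not make $X$ independent of $Y$; they remain correlated through $p(x,y)$. The phrase ``independent of $Y$'' means the \emph{procedure} on $\bE$ does not use $Y$; the output should have conditional law $p(x|y)$ and, crucially, should leave $\bF$ in the correct post-measurement state $\rho_{x,y}^{\bF}$ (so that $B$, which is later extracted from $\bF$, comes out right).

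Second, and more seriously, your plan to ``find a POVM $\{\Lambda_x\}$ on $\bF$ extracting $X$, then transport it to $\bE$ via the transpose trick'' does not work. In the block decomposition the conditional state on $\bF_L^j$ given $X=x$ is $\rho_x^j$, and these need not be distinguishable --- so there is in general \emph{no} POVM on $\bF$ that recovers $X$ with the correct joint law with the residual $\bF$. The whole point of passing to the purifying system $\bE$ is that it contains strictly more information than $\bF$. The paper exploits this directly: it writes down an explicit purification of $\rho_{\bF|Y=y}$ of the form
\[
\sum_{j,x}\sqrt{\frac{q(j)p_j(x)p_j(y)}{p(y)}}\,|j\rangle_{\bJ}\,|x\rangle_{\mathbf X'}\,|\psi_x^j\rangle_{\bE_L^j\bF_L^j}\,|\phi_y^j\rangle_{\bE_R^j\bF_R^j},
\]
in which $X$ appears \emph{literally as a classical register} $\mathbf X'$, and then invokes uniqueness of purifications (up to an isometry on the purifying side) to conclude that $\bE$ already contains a copy of $X$ with law $p(x|y)$. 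No transpose bookkeeping is needed, and your step (iii) --- which you rightly flag as delicate --- is replaced by this one-line observation. I recommend abandoning the ``POVM on $\bF$ plus transpose'' route in favor of the explicit-purification argument.
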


\begin{proof}
Let $\tau_{X\bF Y}$ be the joint state of $X\bF Y$.
Due to the structure of tripartite states with $X-\bF-Y$, i.e., satisfy strong subadditivity with equality~\cite{Haydenetal04},
there exists a decomposition of the Hilbert space of $\bF$ of the form $\bigoplus_j \bF^j_{L}\otimes \bF^j_R$ such that
\begin{align*}\tau_{X\bF Y} = \bigoplus_j  q(j) &\left(  \sum_x  p_j(x) |x\rangle\langle x| \otimes \rho^j_x    \right)\otimes \left(   \sum_y  p_j(y) \sigma^j_y \otimes |y\rangle\langle y|    \right),\end{align*}
where $\rho^j_x$ and  $\sigma^j_y$ are states of registers $\bF^j_L$ and $\bF^j_R$ respectively. Then the marginal distribution of $XY$ is $p(x,y)=\sum_j q(j)p_j(x) p_j(y)$, and we have
\begin{align*}
\tau_{\bF Y} & = \bigoplus_j  \sum_x q(j) \left(  \sum_x  p_j(x)  \rho^j_x    \right)\otimes \left(   \sum_y  p_j(y)  \sigma^j_y \otimes |y\rangle\langle y|    \right)\\
& = \sum_y p(y) \bigg(  \bigoplus_j  \frac{1}{p(y)} \sum_x q(j)p_j(x) p_j(y)   \rho_x^j\otimes \sigma_y^j \bigg)\otimes \ket{y}\bra{y},
\end{align*}
where $p(y) =  \sum_{j, x} q(j)p_j(x)p_j(y)$ is the marginal distribution of $Y$. Then conditioned on $Y=y$ the state of $\bF$ is
\begin{align}\label{eq:rho-y}
\bigoplus_j  \frac{1}{p(y)}  \sum_x    q(j) p_j(x) p_j(y) \rho_x^j\otimes \sigma_y^j.
\end{align}
To find a purification of this state let $|\psi^j_x\rangle_{\bE^j_L\bF^j_L}$ and $|\phi^j_y\rangle_{\bE^j_R\bF^j_R}$ be purifications of $\rho^j_x$ and $\sigma_y^j$ respectively. Then\small
$$ \sum_{j, x}     \sqrt{ \frac{q(j)p_j(x) p_j(y) }{p(y)}} \, | j \rangle_{\bJ} | x\rangle_{\mathbf X'} |\psi^j_x\rangle_{\bE^j_L\bF^j_L}|\phi^j_y\rangle_{\bE^j_R\bF^j_R} $$\normalsize
is a purification of~\eqref{eq:rho-y} where the register which purifies $\bF$ is $\bE=\bJ\mathbf X'\left(\bigoplus_j \bE^j_L \bE^j_R \right)$. Note that all purifications of~\eqref{eq:rho-y} are equivalent to the above purification up to a unitary, and $\bE$ contains a copy of $X$ as a subsystem with distribution $\sum_j q(j)p_j(x)p_j(y)/p(y) =p(x|y)$. We are done.

\end{proof}

\begin{lemma}\label{lemma:lemma1}
Suppose that $\textbf{F}-X-Y$. This means that the joint state of $\bF XY$ has the form
$$\rho_{\bF XY} = \sum_{x,y} p(x,y) \rho_x \otimes \ket x\bra x\otimes \ket y\bra y.$$
Let $x^ny^n$ be jointly typical. Then for sufficiently large $n$ we have
$$\tr\left(\Pi_{\delta}^{y^n}\rho_{x^n}\right)\geq 1-\epsilon.$$
\end{lemma}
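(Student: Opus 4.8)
The plan is to recognize Lemma~\ref{lemma:lemma1} as a ``doubly conditional'' counterpart of the fact $\tr(\Pi^n_\delta\rho_{x^n})\geq 1-\epsilon$ for typical $x^n$ recorded in Appendix~\ref{app:useful-lem}, and to reduce it to that fact by decomposing everything into blocks according to the value of $y^n$. The Markov condition $\bF-X-Y$ enters in two places: it guarantees the stated form $\rho_{\bF XY}=\sum_{x,y}p(x,y)\rho_x\otimes|x\rangle\langle x|\otimes|y\rangle\langle y|$, so that conditioning on $X^n=x^n$ gives precisely $\rho_{x^n}=\rho_{x_1}\otimes\cdots\otimes\rho_{x_n}$; and it makes the state of $\bF$ conditioned on $Y=y$ equal to $\rho_y=\sum_x p(x|y)\rho_x$, which is what the projector $\Pi^{y^n}_\delta$ is built from.

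First I would unpack $\Pi^{y^n}_\delta$. Writing $\rho_y=\sum_z p(z|y)|\psi^y_z\rangle\langle\psi^y_z|$, the conditional typical projector is, by its standard construction, the projector onto the span of the vectors $\bigotimes_i|\psi^{y_i}_{z_i}\rangle$ over $z^n$ conditionally typical given $y^n$. Grouping the coordinates of $y^n$ by value, i.e.\ setting $\mathcal I_y=\{i:\,y_i=y\}$ and $n_y=|\mathcal I_y|$, this factorizes as $\Pi^{y^n}_\delta=\bigotimes_{y\in\mathcal Y}\Pi^{n_y}_{\rho_y,\delta}$, where $\Pi^{n_y}_{\rho_y,\delta}$ acts on $\bigotimes_{i\in\mathcal I_y}\bF_i$ and is exactly the unconditional typical projector of $\rho_y^{\otimes n_y}$. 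Under the same grouping $\rho_{x^n}=\bigotimes_i\rho_{x_i}$ becomes $\bigotimes_{y}\rho_{x^{\mathcal I_y}}$, so $\tr(\Pi^{y^n}_\delta\rho_{x^n})=\prod_{y\in\mathcal Y}\tr\big(\Pi^{n_y}_{\rho_y,\delta}\,\rho_{x^{\mathcal I_y}}\big)$.

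Next I would control each factor. Since $(x^n,y^n)$ is jointly $\delta$-typical, every $y$ with $p(y)>0$ satisfies $n_y\geq n(p(y)-\delta)\to\infty$, and the empirical conditional type of $x$ given $y$ is $\delta$-close to $p(x|y)$; equivalently, the subsequence $x^{\mathcal I_y}$ is $\delta'$-typical with respect to the distribution $p(x|y)$ on $\mathcal X$. Applying the cited property ``$\tr(\Pi^n_\delta\rho_{x^n})\geq 1-\epsilon$ for typical $x^n$'' to the classical–quantum channel $x\mapsto\rho_x$ with prior $p(x|y)$, whose average state is $\rho_y$, and to the length-$n_y$ typical sequence $x^{\mathcal I_y}$, we obtain $\tr\big(\Pi^{n_y}_{\rho_y,\delta}\,\rho_{x^{\mathcal I_y}}\big)\geq 1-\epsilon_y$ for all sufficiently large $n$, with $\epsilon_y$ as small as we like. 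Multiplying over the finitely many $y\in\mathcal Y$ then gives $\tr(\Pi^{y^n}_\delta\rho_{x^n})\geq\prod_y(1-\epsilon_y)\geq 1-\sum_y\epsilon_y\geq 1-\epsilon$, as claimed.

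The main obstacle is bookkeeping rather than anything conceptual: one must check that the single tolerance $\delta$ in the statement is compatible with the decomposition (the conditional typical projector at tolerance $\delta$ corresponds to sub-block typical projectors at a possibly different tolerance $\delta'$, and one should make sure the stated $\delta$ is reachable), and that ``jointly $\delta$-typical $(x^n,y^n)$'' quantitatively implies ``$x^{\mathcal I_y}$ is $\delta'$-typical for $p(x|y)$ with $n_y\to\infty$'' — both standard facts about strong typicality but worth spelling out. A direct alternative would mimic the proof of $\tr(\Pi^n_\delta\rho_{x^n})\geq 1-\epsilon$ while carrying the conditioning on $y^n$ through the eigenbasis of $\rho_{y^n}$, but the block reduction is cleaner since it black-boxes the underlying concentration estimate.
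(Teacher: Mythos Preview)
Your proposal is correct and follows essentially the same route as the paper: group coordinates by the value of $y_i$, factor both $\Pi_\delta^{y^n}$ and $\rho_{x^n}$ into $|\mathcal Y|$ blocks, observe that joint typicality of $(x^n,y^n)$ makes each sub-sequence $x^{\mathcal I_y}$ typical for $p(x|y)$, apply the known single-block bound $\tr(\Pi^n_\delta\rho_{x^n})\geq 1-\epsilon$ to each block, and multiply. You are in fact a bit more careful than the paper about the $\delta$ versus $\delta'$ bookkeeping, which the paper simply glosses over.
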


\begin{proof}
Let $\mathcal{Y}=\{1, 2, \dots, k\}$. Without loss of generality we may assume that $y^n$ has the form
$$y^n=\underbrace{1\dots 1}_{\ell_1}\underbrace{2\dots 2}_{\ell_2}\underbrace{3\dots 3}_{\ell_3}\cdots \underbrace{k\dots k}_{\ell_k},$$
where $\sum_{y=1}^k \ell_y=n$. Since $y^n$ is typical for all $y\in \{1, \dots, k\}$ we have
$$\left|\frac{\ell_y}{n}-p(y)\right|< \delta,$$
which implies that $\ell_y$ is sufficiently large when $n$ is sufficiently large (and $p(y)>0$).

By the definition of the conditional typical projection we have
$$\Pi_{\delta}^{y^n}=\Pi_{\sigma_1,\delta}^{\ell_1}\otimes\cdots\otimes\Pi_{\sigma_k,\delta}^{\ell_k},$$
where $\Pi_{\sigma_y,\delta}^{\ell_y}$ is the typical projection with respect to $\sigma_y=\sum_{x}p(x|y)\rho_{x}$.

With abuse of notation we may write
$$\rho_{x^n}=\rho_{x_1}\otimes\rho_{x_2}\otimes\cdots \otimes\rho_{x_n}=\rho_{x^{\ell_1}}\otimes\rho_{x^{\ell_2}}\otimes\cdots \otimes\rho_{x^{\ell_k}}.$$
$x^{\ell_y}$ is $\delta$-typical with respect to $p(x|y)$ since $x^ny^n$ is jointly typical. Thus for sufficiently large $\ell_y$, we have
$$\tr\left(\Pi_{\sigma_y,\delta}^{\ell_y}\rho_{x^{\ell_y}}\right)\geq 1-\frac{\epsilon}{k}.$$
As a result,
\begin{align*}\tr\left(\Pi_{\delta}^{y^n}\rho_{x^{n}}\right)&=\prod_{y=1}^k \tr\left(\Pi_{\sigma_y,\delta}^{\ell_y}\rho_{x^{\ell_y}}\right) \geq \left(1-\frac{\epsilon}{k}\right)^k \geq 1-\epsilon.\end{align*}
\end{proof}

%********************************

\section{Proof of Theorem \ref{thm:q-region}}\label{AppendixProofCorrelation}

\subsection{Proof of $\mathcal{R}^q\subseteq \bigcap_{\epsilon>0}\mS_\epsilon$}
The proof follows from similar steps as in the classical case.
For every $\epsilon>0$ we show that $\mR^q\subseteq \mS_{\epsilon}$. Let $R\in \mR^q$. Then by definition there exists an $(n, \epsilon, R)$ code for a sufficiently large $n$ such that $H(M) = nR$. Let $\bQ$ denote Bob's part of the shared entangled state \emph{after Alice's measurement}. We have
\begin{align}
nR&= H(M)\\&\geq I(X^n;M|Y^n\textbf{Q})\n &=I(X^n;M\textbf{Q}|Y^n) - I(X^n: \textbf{Q} |Y^n)\n
& = I(X^n;M\textbf{Q}|Y^n)\label{eqn:P1}\\
&=\sum_{i=1}^nI(X_{i};M\textbf{Q}|X_{1:i-1}Y_{i}Y_{1:i-1}Y_{i+1:n})\n&= \sum_{i=1}^nI(X_{i};M\textbf{Q}X_{1:i-1}Y_{1:i-1}Y_{i+1:n}|Y_{i}),\label{eqn:P2}
\end{align}
where \eqref{eqn:P1} follows from the no-signaling principle and \eqref{eqn:P2} follows from the fact that $X^n, Y^n$ are drawn independently. Let $U$ be a random variable uniformly distributed over $\{1,2,\cdots,n\}$ and independent of all previously defined registers. Let $X=X_{U}$, $Y=Y_{U}$, $A=A_{U}$, $B=B_{U}$, and
$$\textbf{F}=(M,\textbf{Q},X_{1:U-1},Y_{1:U-1}Y_{U+1:n},U).$$
Thus using equation \eqref{eqn:P2} we can write
\begin{align}R&\geq \frac{1}{n}\sum_{i=1}^nI(X_{i};M\textbf{Q}X_{1:i-1}Y_{1:i-1}Y_{i+1:n}|Y_{i})\nonumber
\\&=I(X;\textbf{F}|YU)\nonumber
\\&=I(X;\textbf{F}U|Y)-I(X;U|Y)\label{eqn:rvadded1}
\\&=I(X;\textbf{F}|Y),\nonumber
\end{align}
where in \eqref{eqn:rvadded1}, $I(X;U|Y)=0$ holds because $U$ is independent of $(X,Y)$. This is because $X^n$ and $Y^n$ are i.i.d. and hence the joint distribution of $(X=X_{U}$, $Y=Y_{U})$ conditioned on $U=u$ is the same as that without conditioning on $U=u$. Therefore if we show that $A, B, X, Y, \bF$ satisfy the conditions given by~\eqref{eq:f-q}, we are done.

By the definition of $X=X_{U}$, $Y=Y_{U}$, $A=A_{U}$, $B=B_{U}$, the probability distribution over $A, B, X, Y$ is $\tilde{p}(a,b,x,y)=\frac{1}{n}\sum_{i=1}^n\tilde{p}(a_i,b_i,x_i,y_i)$. Therefore,
\begin{align*}
\tv{\tilde{p}(a,b,x,y)-p(a,b,x,y)} &  =\tv{\frac{1}{n}\sum_{i=1}^n\tilde{p}(a_i,b_i,x_i,y_i)-p(a,b,x,y)} \\
& \leq \frac{1}{n}\sum_{i=1}^n\tv{\tilde{p}(a_i,b_i,x_i,y_i)-p(a,b,x,y)} \\
& \leq \tv{\tilde{p}(a^n,b^n,x^n,y^n)-\prod_{i=1}^n p(a_i,b_i,x_i,y_i)}\\
& \leq \epsilon.
\end{align*}

Next to show that $\textbf{F}-X-Y$, observe that
\begin{align*}
I(\textbf{F};Y|X) &=I(M\textbf{Q}X_{1:U-1}Y_{1:U-1}Y_{U+1:n}U;Y_{U}|X_{U})\\
&=\frac{1}{n}\sum_{i=1}^nI(M\textbf{Q}X_{1:i-1}Y_{\neg i};Y_{i}|X_{i})\\
&\leq\frac{1}{n}\sum_{i=1}^nI(M\textbf{Q}X_{\neg i}Y_{\neg i};Y_{i}|X_{i})\\
&=\frac{1}{n}\sum_{i=1}^nI(M\textbf{Q};Y_{i}|X_{i}X_{\neg i}Y_{\neg i})\\
&\leq\frac{1}{n}\sum_{i=1}^nI(M\textbf{Q};Y^n|X^n)\\
&=I(M\textbf{Q};Y^n|X^n)\\
&=0,
\end{align*}
where $X_{\neg i}$ and $Y_{\neg i}$ denotes $X_{1:i-1}X_{i+1:n}$ and $Y_{1:i-1}Y_{i+1:n}$ respectively; the last step follows from the Markov chain condition $Y^n-X^n-M\textbf{Q}$, i.e., $M$ and $\textbf{Q}$ are generated from $X^n$ independent of $Y^n$.

Next to show that $A-\textbf{F}X-Y$, note that
\begin{align*}
I(A;Y|X\textbf{F})&=I(A\textbf{F};Y|X)
\\&=I(A_{U}M\textbf{Q},X_{1:U-1}Y_{1:U-1}Y_{U+1:n}U;Y_{U}|X_{U})
\\&=\frac{1}{n}\sum_{i=1}^nI(A_{i}M\textbf{Q}X_{1:i-1}Y_{\neg i};Y_{i}|X_{i})
\\&\leq\frac{1}{n}\sum_{i=1}^nI(A^nM\textbf{Q}X_{\neg i}Y_{\neg i};Y_{i}|X_{i})
\\&=\frac{1}{n}\sum_{i=1}^nI(A^nM\textbf{Q};Y_{i}|X_{i}Y_{\neg i}X_{\neg i})
\\&\leq\frac{1}{n}\sum_{i=1}^nI(A^nM\textbf{Q};Y^n|X^n)
\\&=I(A^nM\textbf{Q};Y^n|X^n)
\\&=0,
\end{align*}
where again the last step follows from the Markov chain condition $Y^n-X^n-A^nM\textbf{Q}$, i.e., $A^nM\textbf{Q}$ are generated from $X^n$ independent of $Y^n$.

Lastly to show that $\Psi(\textbf{F},Y)=(B,Y)$ for some measurement $\Psi$ on $\textbf{F},Y$, note that $\textbf{F},Y$ includes $M, \textbf{Q},Y_{1:U-1}Y_{U+1:n}, U, Y_{U}$ meaning that it contains $M, \textbf{Q},Y^n,U$. Thus, we can use the measurement on $M, \bQ, Y^n$ in the code which gives $B^n$, and depending on the value of $U$ construct $B=B_U$ out of $\bF , Y$.\\

%********************************

\subsection{Proof of $\mathcal{S}_0\subseteq\mathcal{R}^q$}

To prove the achievability it would be helpful to start with a simpler problem, namely \emph{remote state preparation with classical side information}. Remote state preparation has appeared in the literature~\cite{RSP-L, RSP-D, RSP-B, RSP-S, RSP-H}, but here we need a generalization of such protocols that include (classical) side information.

The setup of this problem is as follows. Let $X, Y$ be two random variables with joint distribution $p(x, y)$, and let $\bF$ be a quantum register such that
$$\bF- X-Y.$$
This means that the joint state of  $XY\bF$ is of the form
$$\rho_{XY\bF} = \sum_{x, y} p(x, y)\ket x\bra x\otimes \ket y\bra y\otimes \rho^{\bF}_x.$$
Alice and Bob receive i.i.d.\ repetitions of $X,Y$, and their goal is to prepare i.i.d. repetitions of $\bF$ at Bob's side. The question is how much classical communication from Alice to Bob is required if they are provided with arbitrary amount of {shared entanglement}. Formally speaking, we can define this problem as follows:

\begin{definition}
Alice and Bob receive $n$  i.i.d.\ repetitions of $X,Y$, i.e., $X^n$  and $Y^n$ respectively. They are also provided with an entangled state on registers $\mathbf{A}, \mathbf{B}$ (that is independent of $X^nY^n$). An $(n, \epsilon, R)$ remote state preparation code consists of 
\begin{itemize}
\item \emph{An Encoder:} A quantum channel $\mathcal{E}^{\mathbf{A}X^n\rightarrow C}$ where $C$ is a classical random variable taking values in $\{1,2,\cdots, 2^{n(R+\epsilon)}\}$. $C$ is the classical message sent from Alice to Bob;
\item  \emph{A Decoder:} A quantum channel $\mathcal{D}^{\mathbf{B}Y^nC\rightarrow Y^n\tilde{\bF}^n}$.
\end{itemize}
The total variation distance of an $(n, \epsilon, R)$ code is defined as follows. Let $\sigma^{X^nY^n \tilde{\bF}^n}$ be the induced state by the code, i.e., the result of applying $\mathcal{E}$ on $(\mathbf{A}, X^n)$ and then applying $\mathcal{D}$ on $(\mathbf{B}, Y^n, C)$ to produce $(Y^n,\tilde{\bF}^n)$. Then the  total variation distance of the code is
\be
\label{eq:t}
\tv{\sigma^{X^n Y^n \tilde{\bF}^n}-(\rho_{XY\bF})^{\otimes n}}.
\ee
\end{definition}

Given $\rho_{XY\bF}$, a rate $R$ is said to be achievable if there exists a sequence of $(n, R, \epsilon_n)$ remote state preparation codes for $n\in\mathbb{N}$ such that 
$$\lim_{n\rightarrow\infty}\epsilon_n=0,$$
and further the  total variation distance of the code converges to zero as $n$ converges to infinity.

\begin{theorem}\label{thm:preparation}
\emph{(Remote state preparation with classical side information)} The minimum achievable rate of one-way (classical) communication for remote state preparation with classical side information and arbitrary amount of shared entanglement is $I(X; \bF| Y)$.
\end{theorem}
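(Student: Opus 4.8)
The plan is to prove the two matching halves of this coding theorem: \emph{achievability}, that $I(X;\bF|Y)+\epsilon$ is an achievable rate for every $\epsilon>0$, and the \emph{converse}, that every achievable rate satisfies $R\geq I(X;\bF|Y)$.

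I would dispatch the converse first, since it is short and purely information-theoretic. Fix an $(n,\epsilon_n,R)$ code, write $\mathbf{B}$ for Bob's half of the pre-shared entangled state (which is independent of $(X^n,Y^n)$) and $C\in\{1,\dots,2^{n(R+\epsilon_n)}\}$ for the transmitted message. Since $(Y^n,\tilde{\bF}^n)$ is produced from $(\mathbf{B},Y^n,C)$ by Bob's decoder while $X^n$ is left untouched, the data-processing inequality, the chain rule, and $I(X^n;\mathbf{B}\,|\,Y^n)=0$ give
\begin{align*}
n(R+\epsilon_n)\ &\geq\ H(C)\ \geq\ H(C\,|\,\mathbf{B}Y^n)\ \geq\ I(X^n;C\,|\,\mathbf{B}Y^n)\\
&=\ I(X^n;\mathbf{B}C\,|\,Y^n)\ \geq\ I(X^n;\tilde{\bF}^n\,|\,Y^n).
\end{align*}
As the output state on $X^nY^n\tilde{\bF}^n$ is within $\epsilon_n$ in trace distance of $(\rho_{XY\bF})^{\otimes n}$, continuity of the von Neumann entropy (Fannes--Audenaert) and the i.i.d.\ structure give $I(X^n;\tilde{\bF}^n\,|\,Y^n)\geq nI(X;\bF|Y)-n\,g(\epsilon_n)$ with $g(\epsilon_n)\to 0$; dividing by $n$ and letting $n\to\infty$ yields $R\geq I(X;\bF|Y)$.

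The substantive direction is achievability. Rewriting $I(X;\bF|Y)=H(\bF|Y)-H(\bF|X)$ via the Markov condition, the guiding picture is that, for jointly typical $(x^n,y^n)$, the target $\rho_{x^n}$ is essentially supported on the conditionally typical projector $\Pi_\delta^{y^n}$, whose rank is $\approx 2^{nH(\bF|Y)}$ --- this is exactly Lemma~\ref{lemma:lemma1} --- while the ``intrinsic mixedness'' $H(\bF|X)$ of $\rho_{x^n}$, being known to Alice, can be paid for by free entanglement rather than by communication. I would therefore build the protocol in two layers. First, ignoring the side information, use the standard entanglement-assisted visible remote-state-preparation protocol (fixing purifications $|\psi_x\rangle_{\bF\bE}$ of $\rho_x$, a random codebook adapted to the ensemble, the soft-covering / operator-Chernoff lemma, and Schumacher compression onto the typical subspace): when entanglement is free this steers Bob's register to within $\epsilon$ of $\rho_{x^n}$ at classical rate $I(X;\bF)$, while leaving Alice with a purification $\bE^n$, and the gentle-measurement lemma (Lemma~\ref{lem:GML}) controls the disturbance at each measurement. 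Second, incorporate $y^n$ so that Bob works inside $\Pi_\delta^{y^n}$ rather than the full $\approx 2^{nH(\bF)}$-dimensional typical subspace; this is a binning / Slepian--Wolf layer, and the attainable saving is precisely $I(X;\bF)-I(X;\bF|Y)=I(\bF;Y)$ by the Markov chain, so the communication drops from $I(X;\bF)$ to $I(X;\bF|Y)$, with the correct instruction recovered from the announced bin using $y^n$ and the conditional-typicality estimates of Lemma~\ref{lemma:lemma1}, a union bound over the exponentially few wrong instructions, and a final gentle-measurement argument.

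The step I expect to be the main obstacle is interleaving this side-information layer with the remote-state-preparation measurement so that the rate saving is \emph{actually realized} without destroying the prepared state: unlike classical Slepian--Wolf, Bob cannot first project onto $\Pi_\delta^{y^n}$ (that outcome has exponentially small probability against a full maximally entangled state) and cannot decode the preparation message from a register that is still nearly maximally mixed, so one must restructure the RSP protocol itself --- e.g.\ by conditioning the shared entangled resource and Alice's measurement on a $y^n$-agnostic but ``$\Pi_\delta^{y^n}$-compatible'' decomposition of $\bF^n$'s typical subspace --- and then verify, through the operator inequalities for (conditionally) typical projectors, the gentle-measurement lemma, and the covering bound, that the reconstructed state is disturbed by only $o(1)$ in trace distance. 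A secondary but necessary chore is the simultaneous bookkeeping of the parameters (the typicality cutoff $\delta$, the covering error, the binning error, and the entanglement rate) so that all errors vanish while the classical rate tends to $I(X;\bF|Y)$.
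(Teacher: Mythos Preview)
Your converse matches the paper's exactly: data processing through Bob's decoder, the chain rule together with $I(X^n;\mathbf{B}\,|\,Y^n)=0$ (no-signalling), and a Fannes continuity correction.

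Your achievability framework is also the paper's --- the column-method RSP with $\sim 2^{nI(X;\bF)}$ shared entangled pairs projected onto the typical subspace, Alice's POVM $\{Q_\delta^{x^n},\sqrt{I-(Q_\delta^{x^n})^2}\}$ on each pair, and binning of the successful index into groups of size $\sim 2^{nI(\bF;Y)}$ --- and you correctly point to Lemma~\ref{lemma:lemma1} and the gentle-measurement lemma as the controlling tools. However, the ``main obstacle'' you flag dissolves, and the $y^n$-aware restructuring of the shared resource you suggest is unnecessary. The paper's protocol is simply ordered as follows: (i) Alice measures \emph{all} copies and with high probability at least one, say the $i$-th, succeeds, collapsing Bob's $i$-th register to a state $\rho''_{x^n}$ with $\|\rho_{x^n}-\rho''_{x^n}\|_1$ small; (ii) Alice announces only the \emph{group} containing $i$, costing $n(I(X;\bF)-I(\bF;Y))+o(n)=nI(X;\bF|Y)+o(n)$ bits; (iii) Bob applies $\{\Pi_\delta^{y^n},\,I-\Pi_\delta^{y^n}\}$ to \emph{every} copy in that group. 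For the correct copy, Lemma~\ref{lemma:lemma1} gives $\tr(\Pi_\delta^{y^n}\rho''_{x^n})\geq 1-o(1)$; for each wrong copy, whose reduced state is still essentially $\tau_\delta^n=\Pi_\delta^n/\tr\Pi_\delta^n$, one has $\tr(\Pi_\delta^{y^n}\tau_\delta^n)\lesssim 2^{-nI(\bF;Y)}$, so a union bound over the $\sim 2^{nI(\bF;Y)-n\alpha}$ copies shows that with high probability exactly one of Bob's measurements succeeds, namely the $i$-th. A final gentle-measurement application keeps the post-measurement state close to $\rho_{x^n}$.

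So the feared chicken-and-egg problem does not arise: Bob \emph{does} project onto $\Pi_\delta^{y^n}$, but only on the copies inside the announced bin and only \emph{after} Alice has steered one of them; the exponentially small success probability on the untouched copies is precisely what makes Bob's identification work, not a bug to be engineered around.
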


$\mS_0\subseteq \mR^q$ is a simple consequence of this theorem. Let $\bF$ be a quantum register satisfying conditions~\eqref{eq:f-q} for $\epsilon =0$. By the above theorem Alice can prepare an approximation of $\bF^n$ at Bob's side with almost $nI(X; \bF| Y)$ bits of one-way communication. In the remote state preparation protocol Alice has an approximate purification of $\bF^n$ in hand (see the details of the proof below). Thus using $A-\bF X-Y$ and based on Lemma~\ref{lem:purification} she can generate an approximation of $A^n$. On the other hand,
since by~\eqref{eq:f-q} there is a measurement on $(\bF, Y)$ which gives $B$, Bob can generate an approximation of $B^n$ after receiving Alice's message. The details are straightforward, so we only need to prove Theorem~\ref{thm:preparation}.\\

\begin{proof}
We start by showing that at least $I(X; \bF| Y)$ bits of communication per copy is required. Suppose that for every $\epsilon>0$ and sufficiently large $n$ there is a protocol with $nR$ bits of communication in which Bob can prepare $\widetilde{\bF}^n$ such that the trace distance between the state of $(X^n, Y^n, \widetilde{\bF}^n)$ and $(X^n, Y^n, \bF^n)$ is at most $\epsilon$. Then by Fannes inequality we have
\begin{align}\label{eq:com1}
I(X^n ; \bF^n| Y^n)  -n \epsilon\log d  - \eta(\epsilon)/\ln 2 \leq I(X^n; \widetilde{\bF}^n| Y^n), 
\end{align}
where $d=(\dim \bF) |\mx|\cdot |\my|$ and $\eta(\epsilon) = -\epsilon \ln \epsilon$. Let $M$ be the message from Alice to Bob and $\bQ$ be Bob's part of the shared entangled state after receiving $M$. Then by the date processing inequality we have
\begin{align*}
I(X^n; \widetilde{\bF}^n| Y^n) & \leq I(X^n ; M\bQ | Y^n) \\
& = I(X^n ; \bQ | Y^n ) + I(X^n ; M| Y^n \bQ) \\
& = I(X^n ; M | Y^n\bQ) \\
& \leq H(M) \\
& \leq nR,
\end{align*}
where in the third line we use the no-signaling principle. Combining the above inequality with~\eqref{eq:com1} gives the desired result.

We now discus the achievability protocol. Our protocol is based on the \emph{column method} for remote
state preparation~\cite{RSP-D, RSP-B}.
For a sufficiently large $n$ Alice and Bob share $2^{n(I(\textbf{F};X)+\delta''+c\delta+\alpha)}$ copies of $|\psi\rangle_{\bE'\bF'}$, where $\ket {\psi}_{\bE'\bF'}$ is proportional to
$$\ket{\psi}_{\bE'\bF'}\sim I\otimes{\Pi_{\delta}^n}|\Phi\rangle_{\bE'\bF'}.$$
Here $\ket{\Phi}_{\bE'\bF'}$ is the maximally entangled state,
and $\Pi_{\delta}^n= \Pi_{\rho, \delta}^n$ is the typical projection of
$\rho=\sum_x p(x)\rho_x$. We let
$$\tau^n_{\delta} = \tr_{\bE'} (\ket{\psi}\bra{\psi}_{\bE'\bF'})= \frac{1}{\tr \Pi_{\delta}^{n}}\Pi_{\delta}^{n}.$$
Thus Alice holds copies of $\bE'$ and Bob holds copies of $\bF'$. They put these copies in groups of size $2^{n(I(\textbf{F};Y)-\delta''-c\delta-\alpha)}$. Thus the number of groups is equal to
$$\frac{2^{n(I(\textbf{F};X)+\delta''+c\delta+\alpha)}}{2^{n(I(\textbf{F};Y)-\delta''-c\delta-\alpha)}}=2^{n(I(\textbf{F};X)-I(\textbf{F};Y)+2\delta''+2c\delta+2\alpha)}.$$

Alice and Bob respectively receive $x^n$ and $y^n$. For sufficiently large $n$, with probability at least $1-\epsilon$, $x^ny^n$ is jointly typical. Alice measures her side of $|\psi\rangle_{\bE'\bF'}$ for all copies using the measurement $\{Q_{\delta}^{x^n}, \sqrt{I-(Q_{\delta}^{x^n})^2}\}$ where
$$Q_{\delta}^{x^n}=\sqrt{2^{n(H(\textbf{F}|X)-\delta'')}}\left(\sqrt{\Pi_{\delta}^{x^n}\rho_{x^n}\Pi_{\delta}^{x^n}}\right)^T.$$
Note that by the properties of typical projections mentioned in Appendix~\ref{app:useful-lem}, $\{Q_{\delta}^{x^n}, \sqrt{I-(Q_{\delta}^{x^n})^2}\}$ is indeed a valid measurement.

The following lemma is proved in Appendix~\ref{app:lemmas}.

\begin{lemma}\label{lemma:lemma2}
If we measure $|\psi\rangle_{\bE'\bF'}$ by the measurement $\{Q_{\delta}^{x^n}, \sqrt{I-(Q_{\delta}^{x^n})^2}\}$ acting on subsystem $\bE'$, then the outcome would be $Q_{\delta}^{x^n}$ with probability at least
$$2^{-n(I(X;\textbf{F})+\delta''+c\delta)} (1-4\sqrt{\epsilon}),$$
and in this case $\bF'$ collapses to some $\rho_{x^n}''$ where
$$\|\rho_{x^n}-\rho_{x^n}''\|_1\leq 6\sqrt[4]{\epsilon}.$$
As a result, if this measurement is applied on
$2^{n(I(\textbf{F};X)+\delta''+c\delta+\alpha)}$ copies of $|\psi\rangle_{\bE'\bF'}$, with probability at least $1-e^{-(1-4\sqrt{\epsilon})2^{\alpha n}}$ one of the outcomes is $Q_{\delta}^{x^n}$.
\end{lemma}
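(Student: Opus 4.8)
The plan is to compute the outcome of the measurement exactly by means of the transpose trick for the maximally entangled state, and then to control the two resulting quantities — the probability of obtaining outcome $Q_\delta^{x^n}$ and the trace distance of the collapsed state of $\bF'$ to $\rho_{x^n}$ — by the typical-projection inequalities of Appendix~\ref{app:useful-lem} together with two applications of the gentle measurement lemma (Lemma~\ref{lem:GML}). Throughout, $x^n$ is assumed $\delta$-typical, and one first notes that $\{Q_\delta^{x^n},\sqrt{I-(Q_\delta^{x^n})^2}\}$ is a legitimate measurement (as already remarked in the text): from $\Pi_\delta^{x^n}\rho_{x^n}\Pi_\delta^{x^n}\le 2^{-n(H(\bF|X)-\delta'')}\Pi_\delta^{x^n}$ one gets $0\le(Q_\delta^{x^n})^2\le\Pi_\delta^{x^n}\le I$.

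First I would bring the outcome into closed form. Write $|\psi\rangle_{\bE'\bF'}=\frac{1}{c_0}(I\otimes\Pi_\delta^n)|\Phi\rangle_{\bE'\bF'}$, where $\Pi_\delta^n$ acts on $\bF'$ and $c_0^2=\langle\Phi|(I\otimes\Pi_\delta^n)|\Phi\rangle=\tr(\Pi_\delta^n)/(\dim\bF)^n$. Put $R_{x^n}=\sqrt{\Pi_\delta^{x^n}\rho_{x^n}\Pi_\delta^{x^n}}$ and $g=2^{n(H(\bF|X)-\delta'')}$, so that $Q_\delta^{x^n}=\sqrt g\,(R_{x^n})^{T}$, hence $(Q_\delta^{x^n})^{T}=\sqrt g\,R_{x^n}$. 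Using $(M\otimes I)|\Phi\rangle=(I\otimes M^{T})|\Phi\rangle$ one obtains
\[
(Q_\delta^{x^n}\otimes I)|\psi\rangle=\frac{\sqrt g}{c_0}\,(I\otimes\Pi_\delta^n R_{x^n})|\Phi\rangle ,
\]
and taking the squared norm and the partial trace over $\bE'$ (using $\tr_{\bE'}|\Phi\rangle\langle\Phi|=I/(\dim\bF)^n$, that $R_{x^n}$ is Hermitian, and $(\Pi_\delta^n)^2=\Pi_\delta^n$) yields
\[
p_{x^n}=\frac{g}{\tr\Pi_\delta^n}\,\tr\!\big(\Pi_\delta^n\Pi_\delta^{x^n}\rho_{x^n}\Pi_\delta^{x^n}\big),\qquad
\rho_{x^n}''=\frac{\Pi_\delta^n\Pi_\delta^{x^n}\rho_{x^n}\Pi_\delta^{x^n}\Pi_\delta^n}{\tr(\Pi_\delta^n\Pi_\delta^{x^n}\rho_{x^n}\Pi_\delta^{x^n})} .
\]

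Next I would estimate both quantities. For the probability, $\tr\Pi_\delta^n\le 2^{n(H(\bF)+c\delta)}$ gives $g/\tr\Pi_\delta^n\ge 2^{-n(I(X;\bF)+\delta''+c\delta)}$, so it remains to show $\tr(\Pi_\delta^n\Pi_\delta^{x^n}\rho_{x^n}\Pi_\delta^{x^n})\ge 1-4\sqrt\epsilon$: since $\tr(\Pi_\delta^{x^n}\rho_{x^n})\ge1-\epsilon$, the gentle measurement lemma puts $\Pi_\delta^{x^n}\rho_{x^n}\Pi_\delta^{x^n}$ within $O(\sqrt\epsilon)$ of $\rho_{x^n}$ in trace norm, and combining this with $\tr(\Pi_\delta^n\rho_{x^n})\ge1-\epsilon$ gives the bound, hence the stated $p_{x^n}$. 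For the collapsed state I would apply the gentle measurement lemma twice: first $\rho_1:=\Pi_\delta^{x^n}\rho_{x^n}\Pi_\delta^{x^n}/\tr(\,\cdot\,)$ is within $2\sqrt\epsilon$ of $\rho_{x^n}$; then $\tr(\Pi_\delta^n\rho_1)\ge 1-4\sqrt\epsilon$, so $\rho_{x^n}''=\Pi_\delta^n\rho_1\Pi_\delta^n/\tr(\,\cdot\,)$ is within $2\sqrt{4\sqrt\epsilon}=4\sqrt[4]\epsilon$ of $\rho_1$, and the triangle inequality yields $\|\rho_{x^n}-\rho_{x^n}''\|_1\le 2\sqrt\epsilon+4\sqrt[4]\epsilon\le 6\sqrt[4]\epsilon$.

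Finally, for the statement about $N=2^{n(I(\bF;X)+\delta''+c\delta+\alpha)}$ copies: these copies are in a product state and are measured separately, so the outcomes are independent and each equals $Q_\delta^{x^n}$ with probability at least $q:=2^{-n(I(X;\bF)+\delta''+c\delta)}(1-4\sqrt\epsilon)$; hence the probability that none of them does is at most $(1-q)^N\le e^{-qN}=e^{-(1-4\sqrt\epsilon)2^{\alpha n}}$. The only genuinely delicate point is the computation in the second paragraph — applying the transpose trick on the $n$-fold register in the presence of the typical projections, and keeping the normalizations $c_0$, $\tr\Pi_\delta^n$, $g$ straight; once $p_{x^n}$ and $\rho_{x^n}''$ are in the closed forms above, the rest is a routine chaining of Lemma~\ref{lem:GML} with an elementary product bound.
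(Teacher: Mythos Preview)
Your proposal is correct and follows essentially the same route as the paper: compute the post-measurement (unnormalized) state on $\bF'$ via the transpose trick as $\frac{g}{\tr\Pi_\delta^n}\,\Pi_\delta^n\Pi_\delta^{x^n}\rho_{x^n}\Pi_\delta^{x^n}\Pi_\delta^n$, then bound the probability and the trace distance by two applications of the gentle measurement lemma together with the typicality inequalities. The only cosmetic differences are that the paper writes the computation in the form $\sqrt{\tau_\delta^n}\big((Q_\delta^{x^n})^T\big)^2\sqrt{\tau_\delta^n}$ rather than via the normalization constant $c_0$, and uses the slightly sharper intermediate bound $\tr(\Pi_\delta^n\rho'_{x^n})\ge 1-\epsilon-2\sqrt\epsilon$ where you round to $1-4\sqrt\epsilon$; both lead to the same final constants.
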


This lemma states that with probability at least $1-e^{-(1-4\sqrt{\epsilon})2^{\alpha n}}$ there exists an index  $i$ (if there are more than one pick one randomly) such that the outcome of the $i$-th measurement is $Q_{\delta}^{x^n}$. Then Bob's side of the $i$-th copy of shared entangled states collapses to $\rho_{x^n}''$. Alice sends Bob the index of the group to which $i$ belongs. She needs $n(I(\textbf{F};X)-I(\textbf{F};Y)+2\delta''+2c\delta+2\alpha)$ bits of communication to send this index.

Now Bob applies the measurement $\{\Pi_{\delta}^{y^n}, I-\Pi_{\delta}^{y^n}\}$ on all subsystems in the group to which $i$ belongs. These measurements are indeed measurements on $\tau_{\delta}^n=\tr_{\bE'}(|\psi\rangle\langle \psi |_{\bE'\bF'})$.

\begin{lemma}\label{lemma:lemma5} If we apply the measurement $\{ \Pi_{\delta}^{y^n}, I-\Pi_{\delta}^{y^n} \}$ on $2^{n(I(\textbf{F};Y)-\delta''-c\delta-\alpha)}$ copies of $\tau_{\delta}^n$, the probability of obtaining more than one
$\Pi^{y^n}_\delta$ is at most $\frac{2^{-\alpha n+1}}{(1-\epsilon)}$.
\end{lemma}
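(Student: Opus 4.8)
The plan is to reduce the claim to a single-copy estimate and then a union bound over the $N:=2^{n(I(\bF;Y)-\delta''-c\delta-\alpha)}$ copies of $\tau_{\delta}^n$. First I would bound the probability $p_1:=\tr\!\big(\Pi_{\delta}^{y^n}\tau_{\delta}^n\big)$ that a single copy of $\tau_{\delta}^n$, measured with $\{\Pi_{\delta}^{y^n},I-\Pi_{\delta}^{y^n}\}$, returns the outcome $\Pi_{\delta}^{y^n}$. Since $\tau_{\delta}^n=\Pi_{\delta}^n/\tr\Pi_{\delta}^n$, cyclicity of the trace together with $\big(\Pi_{\delta}^{y^n}\big)^2=\Pi_{\delta}^{y^n}$ and $\Pi_{\delta}^n\le I$ give
\[
p_1=\frac{\tr\!\big(\Pi_{\delta}^{y^n}\Pi_{\delta}^n\Pi_{\delta}^{y^n}\big)}{\tr\Pi_{\delta}^n}\ \le\ \frac{\tr\Pi_{\delta}^{y^n}}{\tr\Pi_{\delta}^n}.
\]
We may assume $y^n$ is typical (the achievability argument conditions on $x^ny^n$ being jointly typical), so the typical-subspace estimates recalled in Appendix~\ref{app:useful-lem} apply: $\tr\Pi_{\delta}^{y^n}\le 2^{n(H(\bF|Y)+\delta'')}$, where $H(\bF|Y)=\sum_y p(y)H(\sigma_y)$ with $\sigma_y=\sum_x p(x|y)\rho_x$ the state of $\bF$ given $Y=y$, while $\tr\Pi_{\delta}^n\ge(1-\epsilon)2^{n(H(\bF)-c\delta)}$ for $\rho=\sum_x p(x)\rho_x$. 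Hence
\[
p_1\ \le\ \frac{2^{n(H(\bF|Y)+\delta'')}}{(1-\epsilon)\,2^{n(H(\bF)-c\delta)}}\ =\ \frac{1}{1-\epsilon}\,2^{-n(I(\bF;Y)-\delta''-c\delta)} .
\]

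Next, the $N$ copies sit in the product state $(\tau_{\delta}^n)^{\otimes N}$, so the outcomes of the $N$ measurements are independent; let $K$ be the number of copies returning $\Pi_{\delta}^{y^n}$, so that $\e[K]=Np_1$. By the choice of the group size $N$,
\[
Np_1\ \le\ 2^{n(I(\bF;Y)-\delta''-c\delta-\alpha)}\cdot\frac{1}{1-\epsilon}\,2^{-n(I(\bF;Y)-\delta''-c\delta)}\ =\ \frac{2^{-\alpha n}}{1-\epsilon}.
\]
Markov's inequality then gives $\Pr[K\ge 2]\le \e[K]/2=Np_1/2\le \frac{2^{-\alpha n-1}}{1-\epsilon}\le \frac{2^{-\alpha n+1}}{1-\epsilon}$, as claimed. (A union bound over the $\binom{N}{2}$ pairs of copies works equally well: $\Pr[K\ge 2]\le\binom{N}{2}p_1^2\le\tfrac12(Np_1)^2\le \frac{2^{-2\alpha n}}{2(1-\epsilon)^2}$, which is again $\le\frac{2^{-\alpha n+1}}{1-\epsilon}$ for large $n$.)

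There is no deep obstacle here; the only point requiring care is the bookkeeping of exponents. One must recognize that $\Pi_{\delta}^{y^n}$ is the \emph{conditional} typical projection, whose rank scales like $2^{nH(\bF|Y)}$ --- this uses the c-q Markov structure $\bF-X-Y$, which makes the state of $\bF$ given $Y=y$ equal to $\sigma_y=\sum_x p(x|y)\rho_x$ --- while the normalization $1/\tr\Pi_{\delta}^n$ of the maximally-mixed-on-typical state $\tau_{\delta}^n$ contributes the $-H(\bF)$ term; together these produce exactly the exponent $I(\bF;Y)$ appearing in the group size $N$, so the residual factor is the clean $2^{-\alpha n}$. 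It is also worth stating explicitly that in the protocol the relevant group contains one already-collapsed copy (the $i$-th one from Alice's measurement, in state $\rho_{x^n}''$ rather than $\tau_{\delta}^n$); treating all $N$ copies as $\tau_{\delta}^n$ can only over-count clicks, so the bound on the probability that two or more of the \emph{uncollapsed} copies click remains valid.
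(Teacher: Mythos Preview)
Your proof is correct. The single-copy estimate
\[
p_1=\tr\!\big(\Pi_{\delta}^{y^n}\tau_{\delta}^n\big)\le \frac{\tr\Pi_{\delta}^{y^n}}{\tr\Pi_{\delta}^n}\le \frac{1}{1-\epsilon}\,2^{-n(I(\bF;Y)-\delta''-c\delta)}
\]
is exactly what the paper establishes first. Where you diverge is in the second step: you apply Markov's inequality to the click-count $K$ to get $\Pr[K\ge 2]\le Np_1/2$, whereas the paper instead \emph{lower bounds} $\Pr[K=0]\ge(1-p_1)^N$ via $\ln(1-t)\ge -t/(1-t)$ and $e^{-x}\ge 1-x$, then uses the trivial inclusion $\{K=0\}\subseteq\{K\le 1\}$ to conclude $\Pr[K>1]\le 1-(1-p_1)^N\le \frac{2^{-\alpha n+1}}{1-\epsilon}$. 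Your route is shorter and in fact lands on the sharper constant $2^{-\alpha n-1}/(1-\epsilon)$; the paper's route explains where the factor $2^{-\alpha n+1}$ in the statement comes from (it is the product $2mt$ appearing in the exponential bound). Your closing remark about the one already-collapsed copy is extraneous to the lemma as stated, which concerns $N$ fresh copies of $\tau_{\delta}^n$, but it is a fair observation about how the lemma is invoked in the protocol.
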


The above lemma is proved in Appendix~\ref{app:lemmas}. Thus among Bob's measurements with high probability there is at most one outcome $\Pi^{y^n}_\delta$. On the other hand for the $i$-th subsystem we have
$$\tr\left(\Pi_{\delta}^{y^n}\rho_{x^n}''\right)\geq \tr\left(\Pi_{\delta}^{y^n}\rho_{x^n}\right)-6\epsilon^{\frac{1}{4}}\geq 1-\epsilon-6\epsilon^{\frac{1}{4}},$$
where here we use Lemma~\ref{lemma:lemma1}. Therefore, this measurement helps Bob to distinguish the index $i$.
In fact by the gentle measurement lemma with high probability the measurement on the $i$-th subsystem results in a state $\rho_{x^n}'''$ such that $$\|\rho_{x^n}-\rho_{x^n}'''\|_1\leq
\|\rho_{x^n}-\rho_{x^n}''\|_1 + \|\rho_{x^n}''-\rho_{x^n}'''\|_1$$$$~~\leq 6\epsilon^{\frac{1}{4}}+2\sqrt{\epsilon+6\epsilon^{\frac{1}{4}}}.$$

The probability of error of the protocol can be analyzed as follows. In the first part of the protocol where Alice measures, with probability at most $e^{-(1-4\sqrt \epsilon)2^{\alpha n}}$ Alice obtains no $Q_{\delta}^{x^n}$ as the outcome which results in error. Otherwise there exists $i$ such that Bob's part of the $i$-th system is within $6\epsilon^{\frac 1 4}$ of a desired state ($\rho_{x^n}$). Then in the second step where Bob measures the probability of detect some $j\neq i$ is at most $\frac{2^{-\alpha n+1}}{(1-\epsilon)}$, and the probability of correctly detecting is at least $1-\epsilon -6\epsilon^{\frac 1 4}$. Adding all these error terms, the probability of error of the whole protocol is at most
$$e^{-2^{n\alpha}(1-4\sqrt{\epsilon})}+\frac{2^{-\alpha n+1}}{1-\epsilon}+\epsilon+6\epsilon^{\frac{1}{4}},$$
which tends to zero as $\epsilon\rightarrow 0$ and $n\rightarrow \infty$. Moreover, the number of communicated bits is equal to
$$n(I(\textbf{F};X)-I(\textbf{F};Y)+2\delta''+2c\delta+2\alpha) = n(I(X; \bF |Y)  +2\delta''+2c\delta+2\alpha ).$$

\end{proof}

%%%%%%%%%%%%
\subsection{Proof of Lemmas \ref{lemma:lemma2} and \ref{lemma:lemma5}} \label{app:lemmas}

\noindent\textbf{Proof of Lemma~\ref{lemma:lemma2}:} Observe that
\begin{align*}\left(Q_{\delta}^{x^n}\right)\left(Q_{\delta}^{x^n}\right)^\dag&=\left(Q_{\delta}^{x^n}\right)^2\\&=2^{n(H(\textbf{F}|X)-\delta'')}\left(\Pi_{\delta}^{x^n}\rho_{x^n}\Pi_{\delta}^{x^n}\right)^T\\&\leq \left(\Pi_{\delta}^{x^n}\right)^T\\&\leq I.\end{align*}
Thus $\{Q_{\delta}^{x^n}, \sqrt{I-(Q_{\delta}^{x^n})^2}\}$ defines a measurement. Moreover,
\begin{align}
\tr_{\bE'} \left[\left(Q_{\delta}^{x^n}\otimes I_{\bF'}\right)|\psi\rangle\langle\psi|_{\bE'\bF'}\left(Q_{\delta}^{x^n}\otimes I_{\bF'}\right)\right]\nonumber &=\sqrt{\tau_{\delta}^n}\left(\left(Q_{\delta}^{x^n}\right)^T\right)^2\sqrt{\tau_{\delta}^n}\nonumber\\
&=\frac{2^{n(H(\textbf{F}|X)-\delta'')}}{\tr \Pi_{\delta}^{n}}\Pi_{\delta}^{n}\Pi_{\delta}^{x^n}\rho_{x^n}\Pi_{\delta}^{x^n}\Pi_{\delta}^{n}.
\label{eqn:S1}
\end{align}
Now the results follows from two applications of the gentle measurement lemma (Lemma~\ref{lem:GML}). Define
\[\rho'_{x^n}=\frac{\Pi_{\delta}^{x^n}\rho_{x^n}\Pi_{\delta}^{x^n}}{\tr\left(\Pi_{\delta}^{x^n}\rho_{x^n}\Pi_{\delta}^{x^n}\right)}, \quad
\rho''_{x^n}=\frac{\Pi_{\delta}^{n}\rho'_{x^n}\Pi_{\delta}^{n}}{\tr\left(\Pi_{\delta}^{n}\rho'_{x^n}\Pi_{\delta}^{n}\right)}.
\]
We have $\tr\left(\Pi_{\delta}^{x^n}\rho_{x^n}\right)\geq 1-\epsilon$. Thus $\|\rho_{x^n}-\rho'_{x^n}\|_1\leq 2\sqrt{\epsilon}$, and
$$\tr\left(\rho'_{x^n}\Pi_{\delta}^{n}\right)\geq \tr\left(\rho_{x^n}\Pi_{\delta}^{n}\right)-2\sqrt{\epsilon}\geq 1-\epsilon-2\sqrt{\epsilon}.$$
Therefore,  $\|\rho_{x^n}'-\rho_{x^n}''\|_1\leq 2\sqrt{\epsilon+2\sqrt{\epsilon}}$, and by the triangle inequality
$$\|\rho_{x^n}-\rho_{x^n}''\|_1\leq 2\left(\sqrt{\epsilon}+\sqrt{\epsilon+2\sqrt{\epsilon}}\right)<6\sqrt[4]{\epsilon}.$$

We can work out equation \eqref{eqn:S1} as follows:
\begin{align*}
&\frac{2^{n(H(\textbf{F}|X)-\delta'')}}{\tr \Pi_{\delta}^{n}}\tr\left(\Pi_{\delta}^{x^n}\rho_{x^n}\right)\Pi_{\delta}^{n}\rho'_{x^n}\Pi_{\delta}^{n} =\frac{2^{n(H(\textbf{F}|X)-\delta'')}}{\tr \Pi_{\delta}^{n}}\tr\left(\Pi_{\delta}^{x^n}\rho_{x^n}\right)\tr\left(\Pi_{\delta}^{n}\rho'_{x^n}\right)\rho''_{x^n}.
\end{align*}
This means that after the measurement, subsystem $\bF'$ collapses to $\rho''_{x^n}$ with probability
\begin{align*}
\frac{2^{n(H(\textbf{F}|X)-\delta'')}}{\tr \Pi_{\delta}^{n}}\tr\left(\Pi_{\delta}^{x^n}\rho_{x^n}\right) \tr\left(\Pi_{\delta}^{n}\rho'_{x^n}\right)
&\geq
2^{n(H(\textbf{F}|X)-\delta''-H(\textbf{F})-c\delta)}(1-\epsilon)(1-(\epsilon+2\sqrt{\epsilon}))
\\&=2^{-n(I(\textbf{F};X)+\delta''+c\delta)}(1-\epsilon)(1-(\epsilon+2\sqrt{\epsilon}))
\\&>2^{-n(I(\textbf{F};X)+\delta''+c\delta)}(1-4\sqrt{\epsilon}).
\end{align*}\\

\noindent\textbf{Proof of Lemma~\ref{lemma:lemma5}:}
First note that
\begin{align*}\tr\left(\tau_{\delta}^n\Pi_{\delta}^{y^n}\right)&\leq \frac{\tr \Pi_{\delta}^{y^n}}{\tr \Pi_{\delta}^{n}}\\&\leq \frac{2^{n(H(\textbf{F}|Y)+\delta'')}}{\tr \Pi_{\delta}^{n}}\\&\leq \frac{1}{1-\epsilon}2^{-n(I(\textbf{F};Y)-\delta''-c\delta)}.\end{align*}
Let $m=2^{n(I(\textbf{F};Y)-\delta''-c\delta-\alpha)}$ and $t=\frac{1}{1-\epsilon}2^{-n(I(\textbf{F};Y)-\delta''-c\delta)}$. Thus the probability of obtaining one or no $\Pi^{y^n}_\delta$ as the measurement outcome, is at least the probability of obtaining no $\Pi^{y^n}_\delta$ as the outcome. The latter probability is greater than or equal to
\begin{align*}(1-t)^m= e^{m\ln(1-t)}&\geq e^{-mt/(1-t)}\geq e^{-2mt} \\& =e^{-2^{-n\alpha+1}/(1-\epsilon)}\\&\geq 1- \frac{2^{-\alpha n+1}}{1-\epsilon}.\end{align*}
We are done.

%*****************

\section{Ad hoc bounds for simulating CHSH-type correlations}\label{sec:CHSHExample}
Consider the problem of simulating the following correlation for a given $\epsilon\geq 0$. Let $x, y$ as well as $a, b$ be binary random variables, and $p(a, b| x,y)$ be equal to $\frac{1+\epsilon}{4}$ when
 $a\oplus b=xy$ holds. This correlation corresponds to a winning strategy for the CHSH game~\cite{CHSH} with probability $p=\frac{1+\epsilon}{2}$.
Here we consider the uniform distribution on inputs ($p(x,y)=p(x)p(y)=\frac{1}{4}$), and study the problem of simulation of this correlation in both classical and quantum settings.

Should we allow for arbitrary preshared randomness, the communication cost would be given by the expression discussed in Theorem~\ref{thm:yassaee}, i.e., the minimum of $I(X;U\vert Y)$ over all classical random variables $U$ determined by $p(u\vert x,y,a,b)$ such that the joint distribution $p(u,a,b,x,y)$ factorizes as
$$p(u,a,b,x,y)=p(x,y)p(u|x)p(a|u,x)p(b|u,y).$$

Independence of $X$ and $Y$ implies that $I(X;U\vert Y)=I(X;U)$. Moreover, $U$ can be taken to be a binary random variable using the Fenchel-Eggleston extension of the Carath\'{e}odory theorem (see \cite[Appendix C]{AbbasYoungHan}). Then computing the optimal rate for every $\epsilon$ is a straightforward optimization problem. The solid line curve of Figure~\ref{fig:chsh} gives the one-way communication cost of winning the CHSH game with probability $p=\frac{1+\epsilon}{2}$.

Pironio \cite{Pironio} shows that the \emph{average} amount of communication required to simulated CHSH-type correlations in the \emph{one-shot case} is equal to $2\epsilon-1$, which is a linear upper bound on our curve that is tight at the end points. Moreover, Roland and Szegedy~\cite{Roland09} prove that the rate of one-way communication required to simulate parallel repetitions of these correlations in the communication complexity setting (without allowing any error) is equal to $1-h(\epsilon)$, where $h(\cdot)$ is the binary entropy function. By definitions the solid line curve of Figure~\ref{fig:chsh} is a lower bound on $1-h(\epsilon)$. However interestingly this curve obtained from numerical simulations is very close to $1-h(\epsilon)$, suggesting that probably the one-way communication cost of simulating CHSH-type correlations is equal to $1-h(\epsilon)$ even in the information theoretic setting.

From the plot of Figure~\ref{fig:chsh} we observe that at $p=\frac{1}{2}+\frac{1}{2\sqrt{2}}$ we get a positive rate while in the presence of entanglement $p=\frac{1}{2}+\frac{1}{2\sqrt{2}}$ can be achieve with no communication. This means that, unlike the problem of point-to-point channel capacity, entanglement does help in the problem of simulation of correlations, even in the information theoretic setting.

\begin{figure*}
\begin{center}
\includegraphics[width=6in]{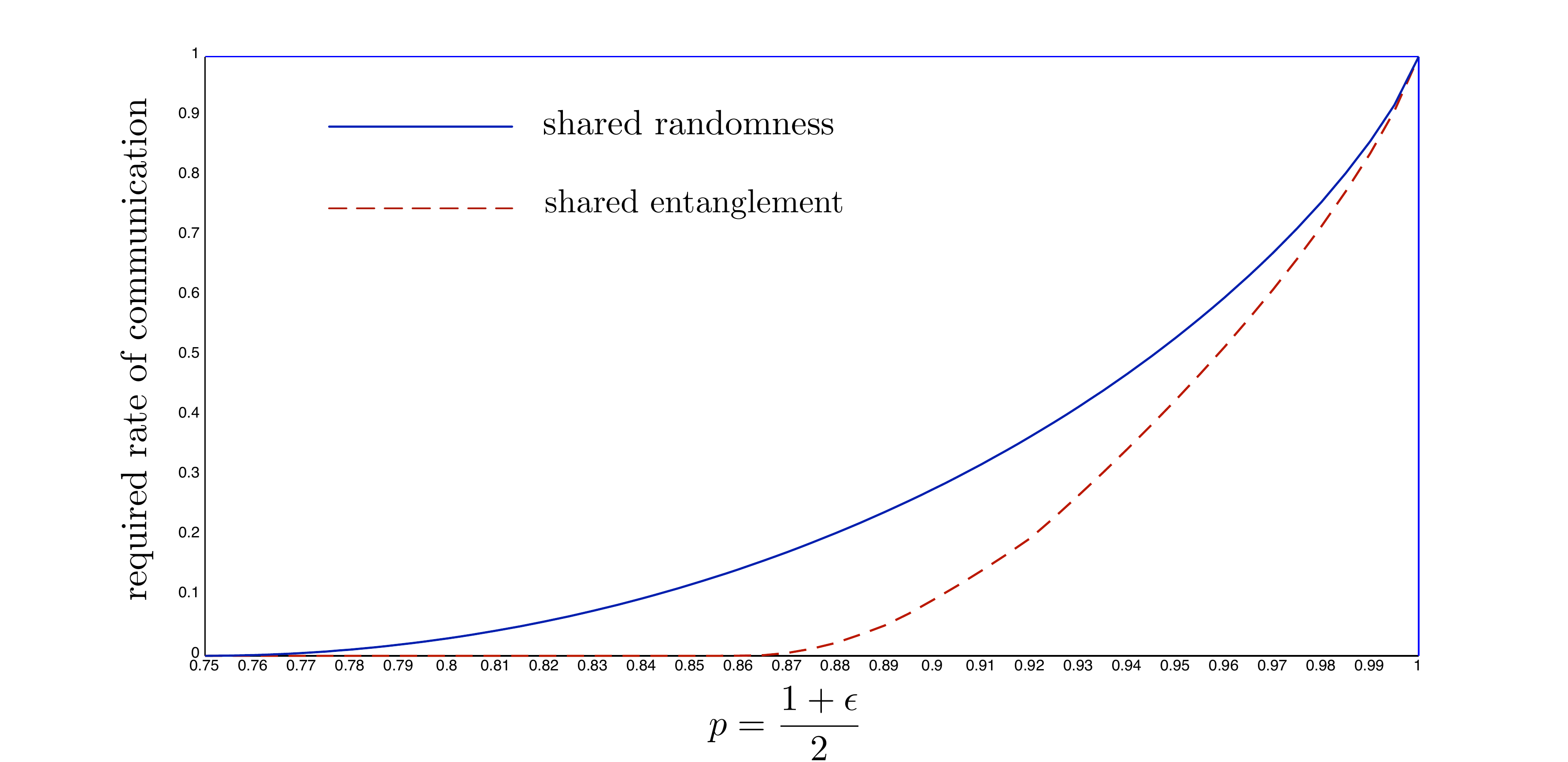}
\caption{(Solid line curve) The one-way communication cost of simulating the CHSH-type correlation with bias $\epsilon$ assuming preshared randomness. The horizontal axes corresponds to parameter $p=\frac{1+\epsilon}{2}$ ($0.75\leq p\leq 1$). (Dashed line curve) A lower bound on the entanglement-assisted one-way communication cost of simulating the CHSH-type correlation with bias $\epsilon=2p-1$. This lower bound is an implication of Information Causality.
}\label{fig:chsh}
\end{center}
\end{figure*}

Should we allow for arbitrary shared entanglement, we have lower and upper bounds on the communication cost by Theorem~\ref{thm:q-region}. Unfortunately both the lower and the upper bounds are non-computable since we have no bound on the dimension of the auxiliary register $\textbf{F}$. Therefore, to find a computable bound we use an ad hoc technique based on the principle of Information Causality~\cite{IC}.

Information Causality is based on the following communication scenario. Alice receives a binary string $a_1\dots a_N$ chosen according to the uniform distribution, and Bob receives a random $b\in \{1, \dots, N\}$. Alice sends a message $m$ to Bob whose goal is to find $a_b$. Letting $g_b$ be Bob's guess of $a_b$ (after communication), Information Causality states that
\begin{align}\label{eq:ic-inequality}
H(M) \geq \sum_{i=1}^N   I(A_i; G_i | B=i ).
\end{align}
In fact the above inequality holds in any physical theory, including the quantum theory, that admits a mutual information function satisfying certain natural properties.

It is shown in~\cite{IC} that Alice and Bob by sharing $k=2^n-1$ no-signaling boxes with CHSH-type correlations with bias $\epsilon$ and sending only one bit from Alice to Bob, can play the above game for $N=2^n$ in such a way that the right hand side of~\eqref{eq:ic-inequality} be equal to $2^n\left( 1- h(\frac{1+\epsilon^n}{2})  \right)$, where $h(\cdot)$ denotes the binary entropy function.  We now would like to simulate this scheme by two new parties, say Alice$^\prime$ and Bob$^\prime$, who instead of non-local boxes, have shared entanglement as their resources at the outset.

Let $R^q$ be the entanglement-assisted communication cost of simulating the non-local box with bias $\epsilon$. Alice$^\prime$ and Bob$^\prime$ can simulate the scheme of Alice and Bob by first sending $kR^q$ bits from Alice$^\prime$ to Bob$^\prime$ to simulate the $k$ boxes, and then one bit to simulate the message that was passed from Alice to Bob. This enables Bob$^\prime$ to faithfully simulate $g_i$. Now since Alice$^\prime$ and Bob$^\prime$ play the game in a quantum world for which Information Causality holds, we may use inequality~\eqref{eq:ic-inequality}. The right hand side of~\eqref{eq:ic-inequality} is equal to $2^n\left( 1- h(\frac{1+\epsilon^n}{2})  \right)$ and the left hand side, namely the number of communicated bits is $kR^q+1$. Therefore,
$$(2^n-1)R^q+1\geq 2^n\left(1-h\left(\frac{1+\epsilon^n}{2}\right)\right),$$
which implies
$$R^q\geq \frac{2^n}{2^n-1}\left(1-h\left(\frac{1+\epsilon^n}{2}\right)\right)-\frac{1}{2^n-1}.$$
Computing this lower bound for all $n$ and taking the optimal one for every $\epsilon$, we obtain the dashed line curve of Figure~\ref{fig:chsh}.
We see that the lower bound is equal to one at $p=1$, thus it has to be tight at this point. By~\cite{IC}, the above lower bound (for $n$ converging to infinity) would also be tight at  the other end point $p\leq \frac{1}{2} + \frac{1}{2\sqrt{2}}$. However, it may be loose in between because firstly we have considered the specific scheme of \cite{IC} for using boxes, and secondly this lower bound holds more generally for any physical theory satisfying properties of mutual information given in~\cite{IC} and not only for quantum physics. Nonetheless, we would like to highlight that the lower bound at $p=1$ is tight in any such physical theory, as shown in the figure.

\section{An $\epsilon$-net of classical channels}\label{app:details}

The following lemmas complete the argument of Section~\ref{sec:new-tools} and follow the same notations.

\begin{lemma}\label{lemmaCard1} There exists a finite set of channels $p(c_j|x), (j=1,2,\cdots,M_{\epsilon})$ such that for every $p(c|x)\in \mathcal{P}$ there exists $j$ such that
\begin{align*}
\big|\big(I(C;Y)-I(C;Z)&\big)-\big(I(C_j;Y)-I(C_j;Z)\big)\big| \leq \epsilon,\qquad \forall q(y,z|x).
\end{align*}
\end{lemma}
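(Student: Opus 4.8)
The plan is to present $\mathcal{P}$ as a compact metric space and to show that the family of functions $q(y,z|x)\mapsto I(C;Y)-I(C;Z)$, indexed by the channel $p(c|x)\in\mathcal{P}$, depends continuously on $p(c|x)$ \emph{in the sup norm over all $q(y,z|x)$}. The finite set $p(c_1|x),\dots,p(c_{M_\epsilon}|x)$ is then just the set of centres of finitely many small balls covering $\mathcal{P}$.

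First I would fix on $\mathcal{P}$ the metric $d\big(p(c|x),p'(c|x)\big)=\max_{x\in\mathcal{X}}\|p(\cdot|x)-p'(\cdot|x)\|_1$; since (after fixing $|\mathcal{C}|=|\mathcal{X}|$) the set $\mathcal{P}$ is a finite product of probability simplices of fixed dimension, it is compact. The crux is the following uniform estimate: there is a monotone modulus of continuity $\omega$ with $\omega(t)\to 0$ as $t\to0^+$, depending only on $|\mathcal{X}|$ and the fixed $q(x)$, such that for all $p(c|x),p'(c|x)\in\mathcal{P}$ and every channel $q(y,z|x)$ (with arbitrary output alphabets),
$$\big|\big(I(C;Y)-I(C;Z)\big)-\big(I(C';Y)-I(C';Z)\big)\big|\le\omega\big(d(p,p')\big).$$

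To prove this I would write $I(C;Y)=H(C)-H(C|Y)$ and $I(C;Z)=H(C)-H(C|Z)$, so that the only entropies appearing are entropies of distributions on $\mathcal{C}$, an alphabet of size at most $|\mathcal{X}|$; this is the point that keeps the bound independent of $|\mathcal{Y}|$ and $|\mathcal{Z}|$. Because $C-X-YZ$, the quantities $p(x,y)=q(x)q(y|x)$, $p(y)$, and $p(x|y)$ do not depend on the $C$-channel at all, and $p(c|y)=\sum_x p(x|y)\,p(c|x)$; hence $\|p(\cdot|y)-p'(\cdot|y)\|_1\le d(p,p')$ for every $y$ with $p(y)>0$, and likewise $\|p(\cdot)-p'(\cdot)\|_1\le d(p,p')$ for the marginal $p(c)=\sum_x q(x)p(c|x)$. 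Shannon entropy is uniformly continuous on the simplex $\Delta(\mathcal{C})$ (for instance by Fannes' inequality), with a monotone modulus $\omega_0$ depending only on $|\mathcal{C}|\le|\mathcal{X}|$. Averaging $|H(C|Y=y)-H(C'|Y=y)|\le\omega_0\big(\|p(\cdot|y)-p'(\cdot|y)\|_1\big)$ over $y$ gives $|H(C|Y)-H(C'|Y)|\le\omega_0\big(d(p,p')\big)$, and the same bound holds for $H(C|Z)$ and for $H(C)$; summing the four contributions yields the displayed estimate with $\omega=4\omega_0$.

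Finally, given $\epsilon>0$ I would pick $\delta>0$ with $\omega(\delta)\le\epsilon$, cover the compact set $\mathcal{P}$ by finitely many $d$-balls of radius $\delta$ with centres $p(c_1|x),\dots,p(c_{M_\epsilon}|x)$, and conclude: any $p(c|x)\in\mathcal{P}$ lies within $\delta$ of some $p(c_j|x)$, so by the estimate the functions $q(y,z|x)\mapsto I(C;Y)-I(C;Z)$ and $q(y,z|x)\mapsto I(C_j;Y)-I(C_j;Z)$ differ by at most $\epsilon$ uniformly over $q(y,z|x)$, which is exactly the claim. The only genuine obstacle is the one dealt with above — ruling out a spurious $\log|\mathcal{Y}|$ or $\log|\mathcal{Z}|$ factor in the continuity bound — and the $H(C)-H(C|Y)$ rewriting disposes of it; the rest is routine compactness.
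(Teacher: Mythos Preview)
Your proposal is correct and follows essentially the same route as the paper: both exploit the Markov chain $C-X-YZ$ to write $H(C|Y)=\sum_y p(y)\,H\big(\sum_x p(x|y)p(c|x)\big)$, so that only entropies on the fixed-size alphabet $\mathcal C$ appear (this is what kills any $\log|\mathcal Y|$ or $\log|\mathcal Z|$ dependence), then invoke uniform continuity of entropy on that simplex and cover the compact $\mathcal P$ by a finite $\delta$-net. A cosmetic remark: since $I(C;Y)-I(C;Z)=H(C|Z)-H(C|Y)$, the $H(C)$ terms cancel and you only need $\omega=2\omega_0$ rather than $4\omega_0$, but this does not affect the argument.
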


\begin{proof}
Define
$$T\left(p(x),p(c|x)\right)=H(C),$$
for all $p(x)$ and $p(c|x)\in \mathcal{P}$. Then $T$ is a continuous function defined on a compact set. Thus $T$ is \emph{uniformly continuous}, i.e., for every $\epsilon>0$ there exists $\delta>0$ such that if $\| (p(x), p(c| x)) - (p(x'), p(c'|x))   \|_{1} \leq \delta$, then
$$|T(p(x), p(c|x)) - T(p(x'), p(c'|x)) |\leq \epsilon/2.$$

On the other hand, the set $\mathcal{P}$ of points $p(c|x)$ is compact. Thus, there exists a $\delta$-net, i.e., there exists a finite set of points $p(c_j|x)$, $j=1, \dots, M_\epsilon$, such that for every $p(c|x)$ there exists $j$ with
$\| p(c| x) - p(c_j|x)   \|_{1} \leq \delta$. This implies that for every $p(x)$,
$$\left|T(p(x), p(c|x)) - T(p(x), p(c_j|x))\right| \leq \epsilon/2.$$
Now note that
\begin{align*}
I(C;Y)-I(C;Z)&=H(C|Z)-H(c|Y)\\
&=\sum_{z}p(z)T(p(x|z),p(c|x)) -\sum_{y} p(y)T(p(x|y),p(c|x)),\end{align*}
is the difference of two convex combinations of evaluations of $T$. 

\end{proof}

\begin{lemma}\label{lemmaCard2} With the notation developed in Section~\ref{sec:new-tools}, we have
\begin{align*}
&\min_{\lambda_j\geq 0:\sum_{j}\lambda_j=1}\,\,\max_{q(y,z|x)}\bigg[I(\textbf{F};Y)-I(\textbf{F};Z)-\sum_{j}\lambda_j\big(I(C_j;Y)-I(C_j;Z)\big)\bigg] \\&
\qquad\quad  = \max_{q(y,z|x)}\,\,\min_{\lambda_j\geq 0:\sum_{j}\lambda_j=1}\bigg[I(\textbf{F};Y)-I(\textbf{F};Z)-\sum_{j}\lambda_j\big(I(C_j;Y)-I(C_j;Z)\big)\bigg]
\end{align*}

\end{lemma}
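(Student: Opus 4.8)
The statement is an exchange of the order of $\min$ and $\max$, and the plan is to obtain it from Sion's minimax theorem; the inequality $\max\min\le\min\max$ always holds, so the content is the reverse inequality. First I would pin down the two arenas of the game. The multipliers $\lambda=(\lambda_1,\dots,\lambda_{M_\epsilon})$ range over the probability simplex $\Lambda=\{\lambda:\lambda_j\ge 0,\ \sum_j\lambda_j=1\}$, which is compact and convex, and the channels $q(y,z|x)$ range over the set $\mathcal{Q}$ of all such channels, which for the fixed finite alphabets in play is a compact convex polytope (a product over $x$ of probability simplices); in particular all the extrema below are attained. Writing
\[
\phi(\lambda,q)\;=\;I(\bF;Y)-I(\bF;Z)-\sum_j\lambda_j\big(I(C_j;Y)-I(C_j;Z)\big),
\]
with every entropic quantity evaluated in the joint state assembled from the fixed $q(x)$, the fixed states $\sigma_x$ playing the role of $\bF$, the fixed classical channels $p(c_j|x)$, and the extension $q(y,z|x)$, the target is $\min_\lambda\max_q\phi=\max_q\min_\lambda\phi$. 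I would then check Sion's hypotheses: $\phi$ is jointly continuous (von Neumann and Shannon entropies, hence the mutual informations, are continuous on the relevant finite-dimensional simplices and cones, and these vary continuously with $(\lambda,q)$); for each fixed $q$ the map $\lambda\mapsto\phi(\lambda,q)$ is affine, hence continuous and quasi-convex on $\Lambda$; so the one remaining hypothesis is that for each fixed $\lambda$ the map $q\mapsto\phi(\lambda,q)$ be concave (or merely quasi-concave) on $\mathcal{Q}$. With that in hand, Sion's theorem delivers the identity.

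To get at the concavity in $q$ I would use two reductions. First, $\phi(\lambda,q)$ depends on $q$ only through its two marginals $q(y|x)$ and $q(z|x)$, and any pair of marginals is realized by some (e.g.\ product) channel in $\mathcal{Q}$; so it suffices to show the ``$Y$-part'' $I(\bF;Y)-\sum_j\lambda_j I(C_j;Y)$ is concave in $q(y|x)$ and the ``$Z$-part'' $-I(\bF;Z)+\sum_j\lambda_j I(C_j;Z)$ is concave in $q(z|x)$, since a sum of concave functions of the separate marginals is concave on $\mathcal{Q}$. Second, introducing an auxiliary $U$, independent of $X$, equal to $j$ with probability $\lambda_j$, and setting $D=(U,C_U)$, one has $D-X-Y$ and $\sum_j\lambda_j I(C_j;Y)=I(D;Y)$, so the $Y$-part equals $I(\bF;Y)-I(D;Y)=H(Y|D)-H(Y|\bF)$ (and likewise for $Z$). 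To handle the individual terms I would invoke the perspective-function description of conditional entropy: $H(Y|\bF)=\sum_y(\tr\mu_y)\,H\!\big(\mu_y/\tr\mu_y\big)$ with $\mu_y=\sum_x q(x)q(y|x)\sigma_x$ linear in $q(y|x)$, together with the joint concavity of the perspective $(\mu,t)\mapsto t\,H(\mu/t)$ of von Neumann entropy, and the elementary concavity of the classical conditional entropy $H(Y|D)$ in $q(y|x)$.

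The step I expect to be the real obstacle is precisely this concavity claim. The representation above shows $H(Y|D)$ and $H(Y|\bF)$ are each concave in the relevant marginal channel, so their difference is a priori only a difference of concave functions; establishing that it — equivalently, that $q\mapsto\phi(\lambda,q)$ is quasi-concave, i.e.\ has convex superlevel sets — actually holds is where the real work lies, and it must exploit the specific structure here (that $D$ is built from classical channels that are themselves post-processings of $X$, and/or the extra leverage of the hypothesis under which this lemma is invoked, namely that the quantum supremum already equals the classical maximum, i.e.\ the bound $\max_q\min_\lambda\phi\le\epsilon$ preceding it) — or else one should bypass the direct appeal to Sion's theorem and instead extract a single $\lambda$ working uniformly in $q$ by a more hands-on separation argument. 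Once concavity in $q$ is secured, the remaining ingredients are routine and Sion's theorem closes the proof.
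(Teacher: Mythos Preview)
Your proposal correctly isolates the difficulty --- the (quasi-)concavity of $q\mapsto\phi(\lambda,q)$ --- but this concavity simply does not hold on the convex set of channels with \emph{fixed} output alphabets $\mathcal Y,\mathcal Z$. Your own analysis shows that both $H(Y|D)$ and $H(Y|\bF)$ are concave in $q(y|x)$, but their difference need not be concave or even quasi-concave, and the perspective-function argument cannot repair this. So Sion's theorem, as you set it up, does not apply, and you offer no concrete fallback beyond the vague ``hands-on separation argument.''

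The paper's resolution is different and bypasses concavity in $q$ altogether. First, the $\max_{q(y,z|x)}$ is over channels with \emph{arbitrary} finite output alphabets, not a fixed pair $(\mathcal Y,\mathcal Z)$; your compact polytope $\mathcal Q$ is the wrong domain. Second, rather than work on this ill-behaved space of channels directly, the paper pushes everything to the finite-dimensional image: with $T_j(q)=I(\bF;Y)-I(\bF;Z)-I(C_j;Y)+I(C_j;Z)$, it considers the down-set
\[
\mathcal A=\big\{(a_1,\dots,a_{M_\epsilon}):a_j\le T_j(q)\text{ for some }q(y,z|x)\big\}\subset\mathbb R^{M_\epsilon},
\]
and invokes a finite-dimensional minimax result (Corollary~2 of \cite{GengGohari}) that only requires $\mathcal A$ to be convex. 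Convexity of $\mathcal A$ is obtained by a \emph{time-sharing} construction: given channels $q_1,q_2$, enlarge the output alphabets and set $Y_0=(U,Y_U)$, $Z_0=(U,Z_U)$ with $U$ uniform on $\{1,2\}$ independent of everything else. Then each mutual information in $T_j$ averages exactly, so $T_j(q_0)=\tfrac12\big(T_j(q_1)+T_j(q_2)\big)$, and the midpoint lies in $\mathcal A$. This is the missing idea: the relevant ``convex combination'' of channels is not pointwise mixing of stochastic matrices on a fixed alphabet (where concavity fails), but time-sharing with enlarged alphabets (where every $T_j$ becomes exactly affine).
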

\begin{proof}
To show the legitimacy of exchanging the maximum and minimum we use Corollary 2 of \cite{GengGohari} with the choice of $T_j(q(y,z|x))=I(\textbf{F};Y)-I(\textbf{F};Z)-I(C_j;Y)+I(C_j;Z)$, and $d=M_{\epsilon}$. To apply this corollary we need to show the convexity of the set
\begin{align*}
&\mathcal{A}=\big\{(a_1,...,a_{M_\epsilon}): a_j\leq I(\textbf{F};Y)-I(\textbf{F};Z)-I(C_j;Y)+I(C_j;Z) \mbox{~for some~}q(y,z|x)\big\}.\end{align*}
Take two arbitrary points in $\mathcal{A}$. We show that their average is in $\mathcal{A}$. Corresponding to these two points are two channels $q(y_1,z_1|x)$ and $q(y_2,z_2|x)$. We construct $q(y_0,z_0|x)$ as follows. Let $U$ be the uniform binary random variable over $\{1,2\}$ independent of all previously defined registers. Let $Y_0=(U,Y_U)$ and $Z_0=(U,Z_U)$. Then it is easy to verify that
\begin{align*}  T_j(q(y_0,z_0&|x))= \frac{1}{2}\big(T_j(q(y_1,z_1|x))+T_j(q(y_1,z_1|x))\big).
\end{align*}
This implies that the average of the two points is in $\mathcal{A}$.
\end{proof}

\section{Proof of Theorem~\ref{thm:QCMain2}}\label{app:carve-d2}

By part (a) of Theorem~\ref{thm:QCMain1} it suffices to prove the first part. Since $|\mathcal{X}|=2$ the distribution of $X$ is determined by $p(X=0)=p$ and $p(X=1)=\bar{p}=1-p$. $I(X; \bF)$ for $p=0$ and $p=1$ is equal to zero and equal to $I(X; C)$ for every $X\rightarrow C$. Therefore we only need to show that there exists a classical channel $X\rightarrow C$ such that
$$\frac{\partial^2}{\partial p^2} I(X; \bF) = \frac{\partial^2}{\partial p^2} I(X; C),$$
at every $p$. On the other hand since $I(X; \bF)$ and $H(\bF)$ differ only at a linear function in terms of $p$, it is sufficient to show that there exists a classical channel $X\rightarrow C$ with
$$\frac{\partial^2}{\partial p^2} H(\mathbf{F}) = \frac{\partial^2}{\partial p^2} H(C).$$

 Let $\overrightarrow{s}=(s_1, s_2, s_3)$ and $\overrightarrow{r}=(r_1, r_2, r_3)$ be the Bloch sphere representations of $\rho_0$ and $\rho_1$ respectively, i.e.,
\begin{align*}
\rho_0 = \frac{1}{2}I + \frac{1}{2}\left( s_1 \sigma_x + s_2 \sigma_y + s_3 \sigma_z   \right),\qquad 
\rho_1 = \frac{1}{2}I + \frac{1}{2}\left( r_1 \sigma_x + r_2 \sigma_y + r_3 \sigma_z   \right),
\end{align*}
where $\sigma_x, \sigma_y, \sigma_z$ are Pauli matrices. If $\overrightarrow{r}= \overrightarrow{s}$ then $\rho_0= \rho_1$ and the existence of $C$ is immediate. Thus we assume $\overrightarrow{r}\neq \overrightarrow{s}$.
The margin of $\mathbf{F}$ is equal to
\begin{align*}&\rho= \rho_p  =   {p}\rho_0 + \bar{p}\rho_1 = \frac{1}{2}I + \frac{1}{2} \bigg(   (\bar{p}r_1+ ps_1)\sigma_x + (\bar{p}r_2+ ps_2)\sigma_y        + (\bar{p}r_3+ ps_3)\sigma_z\bigg),\end{align*}
so the eigenvalues of $\rho$ are
$$\lambda = \lambda_p =  \frac{1 + \|\bar{p}\overrightarrow{r} + p\overrightarrow{s}\|}{ 2},$$
and $1-\lambda$.  Therefore, $H(\mathbf{F}) = h(\lambda)$ where $h(\cdot)$ denotes the binary entropy function. The second derivative of the binary entropy function is computed as
\begin{align}\label{eq:2nd-der-h}
&\frac{\partial^2}{\partial p^2} h(\lambda)  = -\lambda'' (\ln \lambda - \ln (1-\lambda)) -\lambda'^2 \left(\frac{1}{\lambda(1-\lambda)}\right),
\end{align}
where $\lambda' = \frac{\partial}{\partial p} \lambda$ and $\lambda'' = \frac{\partial^2}{\partial p^2} \lambda$, and for simplicity we take the natural logarithm instead of logarithm in base $2$.

Let us define
\begin{align*}Z=Z_p &= \|  \bar{p}\overrightarrow{r} + p\overrightarrow{s} \|^2 =  \|  \overrightarrow{s} - \overrightarrow{r} \|^2p^2 + 2 \langle \overrightarrow{r} \vert \overrightarrow{s}- \overrightarrow{r} \rangle p + \|\overrightarrow{r}\|^2,\end{align*}
and $\Delta = \| \overrightarrow{r} \|^2\cdot \|\overrightarrow{s}- \overrightarrow{r}\|^2 - \langle \overrightarrow{r}\vert \overrightarrow{s} - \overrightarrow{r}\rangle^2$. By Cauchy-Schwarz inequality $\Delta\geq 0$ and $\Delta \leq \|\overrightarrow{s}  - \overrightarrow{r}\|^2$  since $\|s\|, \|r\|\leq 1$. Then $\lambda = \frac{1 + \sqrt{Z}}{2}$ and
\begin{align*}
\lambda' = \frac{Z'}{2\sqrt{Z}}, \quad\quad \lambda'' = \frac{2Z'' Z - Z'^2}{8 Z^{3/2}},
\end{align*}
where $Z' = \frac{\partial}{\partial p} Z$ and $Z'' = \frac{\partial^2}{\partial p^2} Z$. Putting in~\eqref{eq:2nd-der-h} we obtain
\begin{align*}  \frac{\partial^2}{\partial p ^2}  H(\mathbf{F}) =& - \frac{Z'^2}{4Z(1-Z)}
-\frac{2Z'' Z - Z'^2}{8 Z^{3/2}} \left( \ln(1+ \sqrt{Z}) - \ln(1- \sqrt{Z})    \right).
\end{align*}
Using the Taylor expansions $\ln (1+Z) = \sum_{k=1}^{\infty}  \frac{(-1)^{k+1}}{k}Z^k$ and $\frac{1}{1-Z} = \sum_{k=0}^{\infty}  Z^k$ we find that
\begin{align*}  \frac{\partial^2}{\partial p ^2}  H(\mathbf{F})  =& \frac{Z''}{2}  + \left( \frac{Z''Z}{2}  - \frac{Z'^2}{4}    \right) \sum_{k=0}^{\infty} \frac{1}{2k+3}Z^k  + \frac{Z'^2}{4} \sum_{k=0}^{\infty} Z^k. \end{align*}
Finally using the definition of $Z$ we conclude that
\begin{align*}  
&\frac{\partial^2}{\partial p ^2}  H(\mathbf{F})  = -\|\overrightarrow{s} - \overrightarrow{r}\|^2\bigg[ \frac{\Delta}{\|\overrightarrow{s} - \overrightarrow{r}   \|^2} \sum_{k=0}^{\infty} \frac{1}{2k+3}Z^k + \left( 1- \frac{\Delta}{\|\overrightarrow{s}- \overrightarrow{r}\|^2}  \right) \sum_{k=0}^{\infty} Z^k  \bigg]. \end{align*}

A classical channel $X\rightarrow D$ for $|\mathcal{D}| = 2$ is determined by
$$p(D=0| X=0)=a, \quad p(D=0| X=1)=b.$$
We denote such a $D$ by $D_{a, b}$. Then $H(D_{a, b})  = h(ap+ b\bar{p})$ and
\begin{align*} &\frac{\partial^2}{\partial p^2}  H(D_{a, b})= -\frac{(a-b)^2}{ -(a-b)^2p^2 + (a-b)(1-2b)p + b(1-b) }\end{align*}
Let us assume that
\begin{align}\label{eq:condition-ab}
& a  \langle \overrightarrow{r} \vert \overrightarrow{s} - \overrightarrow{r}\rangle = \left( \|\overrightarrow{s} - \overrightarrow{r}\|^2 + \langle \overrightarrow{r} \vert \overrightarrow{s} - \overrightarrow{r}\rangle   \right) b - \frac{\|\overrightarrow{s} - \overrightarrow{r}\|^2}{ 2  }.
\end{align}
Then
\begin{align*}
\frac{\partial^2}{\partial p^2} H(D_{a,b})
 & = - \frac{\|\overrightarrow{s}- \overrightarrow{r}\|^2}{ \left(\frac{b(1-b) \|\overrightarrow{s}-\overrightarrow{r}\|^2}{(a-b)^2} + \|\overrightarrow{r}\|^2\right)  - Z       }\\
& = -\|\overrightarrow{s} - \overrightarrow{r}\|^2   \sum_{k=0}^{\infty} \frac{Z^k}{\left(  \frac{b(1-b) \|\overrightarrow{s}-\overrightarrow{r}\|^2}{(a-b)^2} + \|\overrightarrow{r}\|^2  \right)^{k+1}}.
\end{align*}
We now claim that for every $0\leq \theta\leq 1$ there exist $0\leq a_{\theta}, b_{\theta}\leq 1$ that satisfy~\eqref{eq:condition-ab} and
$$ \frac{b_{\theta}(1-b_{\theta}) \|\overrightarrow{s}-\overrightarrow{r}\|^2}{(a_{\theta}-b_{\theta})^2} + \|\overrightarrow{r}\|^2   = \frac{1}{\theta}.$$
By continuity we only need to prove the claim for $\theta=0$ and $\theta=1$. For $\theta=0$ take $a_0=b_0=1/2$, and
for $\theta=1$ take $a_1$ to be equal to
\[ \frac{1}{2}+   \frac{1}{2}\frac{\|\overrightarrow{s} - \overrightarrow{r}\|^2 +  \langle \overrightarrow{r}\vert \overrightarrow{s} - \overrightarrow{r}\rangle }{ \sqrt{\| \overrightarrow{s} - \overrightarrow{r} \|^2\cdot (1- \|\overrightarrow{r}\|^2) + \langle \overrightarrow{r} \vert \overrightarrow{s}- \overrightarrow{r}\rangle^2}} ,
\]
and $b_1$ to be equal to
\[ \frac{1}{2}+\frac{1}{2}\frac{\langle \overrightarrow{r}\vert \overrightarrow{s} - \overrightarrow{r}\rangle}{\sqrt{ \|\overrightarrow{s} - \overrightarrow{r}\|^2 \cdot(1 - \|\overrightarrow{r}\|^2) + \langle \overrightarrow{r} \vert \overrightarrow{s} - \overrightarrow{r}\rangle^2  }}.
\]
Using $\|\overrightarrow{r}\|, \|\overrightarrow{s}\| \leq 1$, it is easy to see that $0\leq a_1, b_1\leq 1$.
We thus have
\[
\frac{\partial^2}{\partial p^2}  H(D_{a_{\theta}, b_{\theta}}) = -\|\overrightarrow{s} - \overrightarrow{r}\|^2 \sum_{k=0}^{\infty} \theta^{k+1} Z^k
\]

Now define a channel $X\rightarrow C$ which with probability $1-\frac{\Delta}{\| \overrightarrow{s}-\overrightarrow{r} \|^2}$ is equal to $X\rightarrow D_{a_1, b_1}$, and with probability $\frac{\Delta}{\| \overrightarrow{s}-\overrightarrow{r} \|^2}$ is equal to the channel $X\rightarrow D_{a_{\theta^2}, b_{\theta^2}}$ where $0\leq \theta\leq 1$ is chosen uniformly at random. Observe that
\begin{align*}
\frac{\partial^2}{\partial p^2} H(C)  & = -\frac{\Delta}{\| \overrightarrow{s}-\overrightarrow{r} \|^2} \int_0^1 \frac{\partial^2}{\partial p^2} H(D_{a_{\theta^2}, b_{\theta^2}}) \mathrm{d}\theta  
- \left(1- \frac{\Delta}{\| \overrightarrow{s}-\overrightarrow{r} \|^2}  \right) \frac{\partial^2}{\partial p^2} H(D_{a_1, b_1})\\
& = -\|\overrightarrow{s} - \overrightarrow{r}\|^2\left[   \frac{\Delta}{\| \overrightarrow{s}-\overrightarrow{r} \|^2} \int_0^1 \sum_{k=0}^\infty \theta^{2k+2}Z^k \mathrm{d}\theta  
+ \left(1- \frac{\Delta}{\| \overrightarrow{s}-\overrightarrow{r} \|^2}  \right) \sum_{k=0}^{\infty} Z^k    \right]\\
& = -\|\overrightarrow{s} - \overrightarrow{r}\|^2\left[   \frac{\Delta}{\| \overrightarrow{s}-\overrightarrow{r} \|^2}  \sum_{k=0}^\infty \left(\int_0^1 \theta^{2k+2} \mathrm{d}\theta \right) Z^k  
+  \left(1- \frac{\Delta}{\| \overrightarrow{s}-\overrightarrow{r} \|^2}  \right) \sum_{k=0}^{\infty} Z^k    \right]\\
& = -\|\overrightarrow{s} - \overrightarrow{r}\|^2\left[   \frac{\Delta}{\| \overrightarrow{s}-\overrightarrow{r} \|^2}  \sum_{k=0}^\infty \frac{1}{2k+3} Z^k  
+ \left(1- \frac{\Delta}{\| \overrightarrow{s}-\overrightarrow{r} \|^2}  \right) \sum_{k=0}^{\infty} Z^k    \right] \\
& = \frac{\partial^2}{\partial p^2}  H(\mathbf{F}).
\end{align*}

\end{document}